\newcommand{\pcbond}{p_c^{\mbox{\scriptsize bond}}}
\newcommand{\pcsite}{p_c^{\mbox{\scriptsize site}}}
\newcommand{\dc}{d_{\mathrm{c}}}
\newcommand{\calH}{\mathcal{H}}
\newtheorem{theorem}{Theorem}
\newtheorem{lemma}{Lemma}
\newtheorem{definition}{Definition}
\begin{document}

\title{Series Expansion of the Percolation Threshold on Hypercubic Lattices}

\author{Stephan Mertens$^{1,2}$ and Cristopher Moore$^2$}

\address{\selectlanguage{ngerman}{$^1$Institut\ f"ur\ Physik,
    Otto-von-Guericke Universit"at, PF~4120, 39016 Magdeburg,
    Germany}} 

\address{$^2$Santa Fe Institute,
1399 Hyde Park Rd,
Santa Fe, NM 87501,
USA}

\ead{mertens@ovgu.de,moore@santafe.edu}

\begin{abstract}
  We study proper lattice animals for bond- and site-percolation on
  the hypercubic lattice $\mathbb{Z}^d$ to derive asymptotic series of
  the percolation threshold $p_c$ in $1/d$, The first few terms of these
  series were computed in the 1970s, but the series have not been
  extended since then. We add two more terms to the series for
  $\pcsite$ and one more term to the series for $\pcbond$, using a
  combination of brute-force enumeration, combinatorial identities and
  an approach based on Pad\'e approximants, which requires much fewer
  resources than the classical method.  We discuss why it took 40
  years to compute these terms, and what it would take to compute the
  next ones. En passant, we present new perimeter polynomials for site
  and bond percolation and numerical values for the growth rate of
  bond animals.
\end{abstract}

\pacs{
 64.60.ah, 
 64.60.an, 
 02.10.Ox, 
 05.10.-a  
}



\section{Introduction}
\label{sec:intro}

Forty years ago, two papers appeared in this journal, 
each containing the first few terms of a remarkable series for the percolation thresold on the hypercube $\mathbb{Z}^d$. The first paper, by Gaunt, Sykes, and Ruskin~\cite{gaunt:sykes:ruskin:76}, presented a series expansion for the threshold for site percolation on the $d$-dimensional cubic lattice $\mathbb{Z}^d$, 
\begin{equation}
  \label{eq:pc-site-series-old}
  \pcsite(d) = \sigma^{-1} + \frac{3}{2}\sigma^{-2} 
  + \frac{15}{4}\sigma^{-3} + \frac{83}{4}\sigma^{-4} + O(\sigma^{-5})
\end{equation}
where $\sigma = 2d-1$.
The second paper, by Gaunt and Ruskin~\cite{gaunt:ruskin:78}, gave the corresponding series for 
bond percolation,
\begin{equation}
  \label{eq:pc-bond-series-old}
  \pcbond(d) = \sigma^{-1} + \frac{5}{2}\sigma^{-3} +
  \frac{15}{2}\sigma^{-4} + 57\sigma^{-5} + O(\sigma^{-6}) \, .
\end{equation}
It is known that the series expansion for $\pcbond(d)$ has rational coefficients to all orders~\cite{vanderhofstad:slade:05}, and the terms up through $\sigma^{-3}$ have been established rigorously~\cite{vanderhofstad:slade:06}. A crucial tool in these proofs is the lace expansion for the two-point connectivity function for percolation, which can also be used to prove that mean-field behavior takes over in sufficiently high dimensions (e.g.~\cite{hara:slade:95,fitzner:vanderhofstad:17} and references therein).
 
Despite decades of active research in percolation by physicists and
mathematicians, these series have not yet been extended. In this
contribution we extend both series by computing additional terms, 
namely the coefficients of $\sigma^{-6}$ for $\pcbond$, and the
coefficients of $\sigma^{-5}$ and $\sigma^{-6}$ for $\pcsite$.
To do all this we use enumerations of lattice animals that require 
a large amount of computation, new and recent analytical
results and finally a new method to derive the series from the
available data.
 
The paper is organized as follows. In
Section~\ref{sec:lattice-animals} we introduce lattice animals and
perimeter polynomials, focusing on bond animals. In
Section~\ref{sec:series-expansion-bond} we review the methods
of~\cite{gaunt:ruskin:78} that take us from perimeter polynomials to
the series expansion for $\pcbond$, and use our recent enumeration
results to derive the next term in the series. In
Section~\ref{sec:enumerations} we discuss the computational resources
needed to carry out these enumerations, and the fact that a purely
brute-force approach would take over a year given current
resources. In Section~\ref{sec:bond-percolation} we prove a set of
analytic results which let us avoid the most costly enumerations,
bringing this calculation within reach. We discuss site animals and
apply the same techniques to site percolation in
Section~\ref{sec:site-percolation}, using the methods of~\cite{gaunt:sykes:ruskin:76} to obtain the next term in the
series for $\pcsite$. In Section~\ref{sec:pade} we report on a new
method based on Pad\'e approximants to derive the series for $p_c$. 
We prove that the convergence of this method is equivalent to that of the 
classical approach of~\cite{gaunt:sykes:ruskin:76,gaunt:ruskin:78}. 
However, the Pad\'e method requires much less data, allowing us to add yet another term to the series
for $\pcsite$ using existing data. Finally, we conclude in
Section~\ref{sec:conclusion}, and discuss the challenge of pushing
these series even further.

\section{From Lattice Animals to Perimeter Polynomials}
\label{sec:lattice-animals}

The starting point of series expansions like
\eqref{eq:pc-site-series-old} and \eqref{eq:pc-bond-series-old} is the
counting of lattice animals.  We start with the animals relevant to bond percolation, 
and reserve the (somewhat simpler) discussion of site animals for Section~\ref{sec:site-percolation}.

A \emph{bond animal} is a connected set
of edges of the hypercubic lattice. Two animals are considered
distinct if they differ by a rotation or reflection, but not by
translation. We define the \emph{size} of a bond animal as the number
$e$ of edges in it. For instance, in the two-dimensional lattice there
are two bond animals of size $1$ and six of size $2$, namely,
\,\raisebox{-3pt}{\includegraphics[scale=0.1]{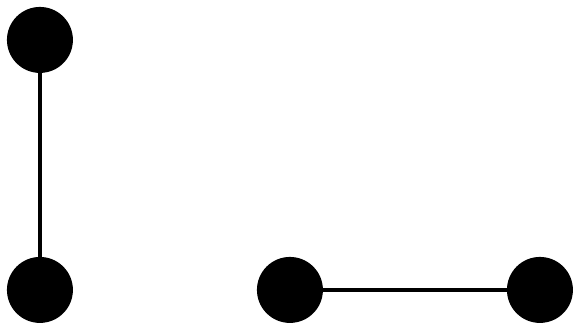}}\, and
\,\raisebox{-7pt}{\includegraphics[scale=0.1]{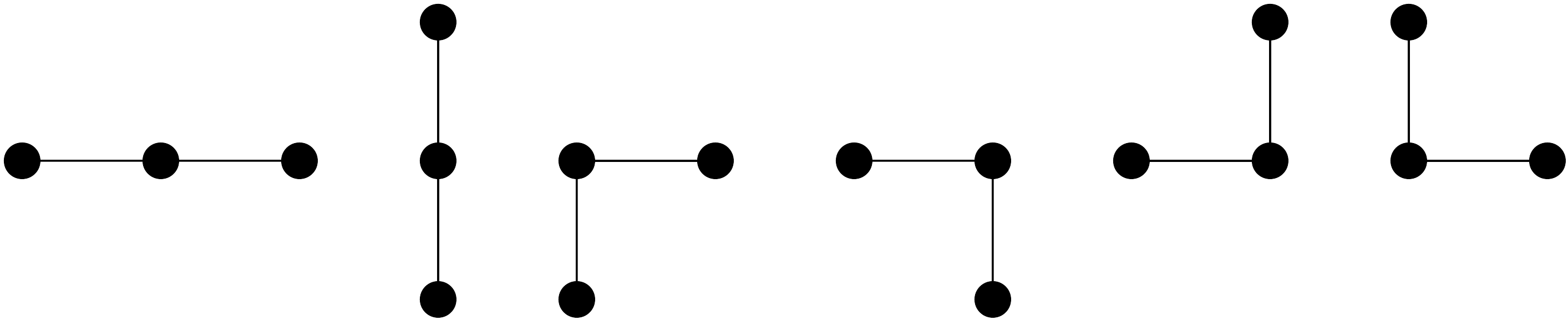}}\, .

Following Lunnon~\cite{lunnon:75}, we say an animal is \emph{proper in $k$ dimensions} if its edges span a $k$-dimensional subspace. For instance, the animals \,\raisebox{-7pt}{\includegraphics[scale=0.1]{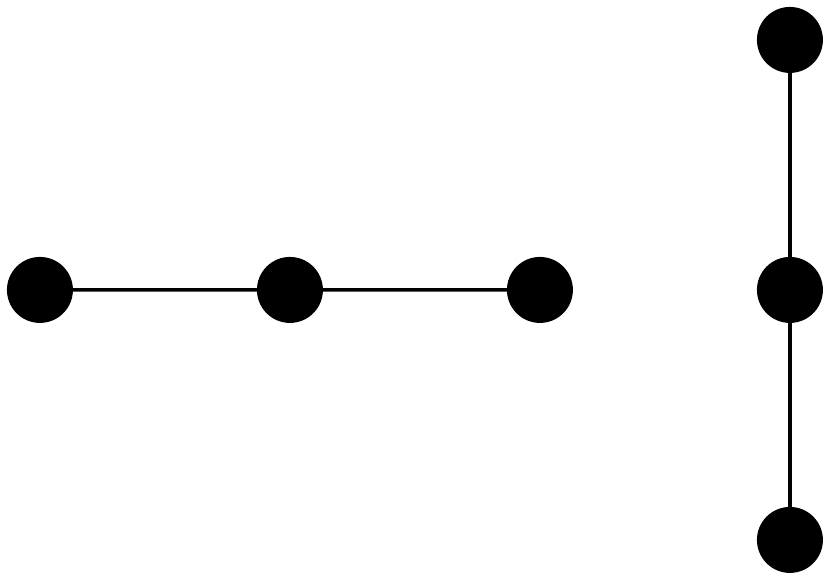}}\, are proper in $1$ dimension, and when projected into that dimension they are identical. We denote the number of animals of size $e$ proper in $k$ dimensions as $G_e^{(k)}$. Since such an animal can be embedded in the $d$-dimensional lattice in ${d \choose k}$ different ways, the total number of lattice animals of size $e$ in $d$ dimensions is
\begin{equation}
\label{eq:d-choose-k}
A_d(e) = \sum_{k=1}^e {d \choose k} G_e^{(k)} \, . 
\end{equation}
For instance, we have $G_2^{(1)} = 1$ and $G_2^{(2)} = 4$.

\begin{figure}
  \centering
  \includegraphics[width=0.75\linewidth]{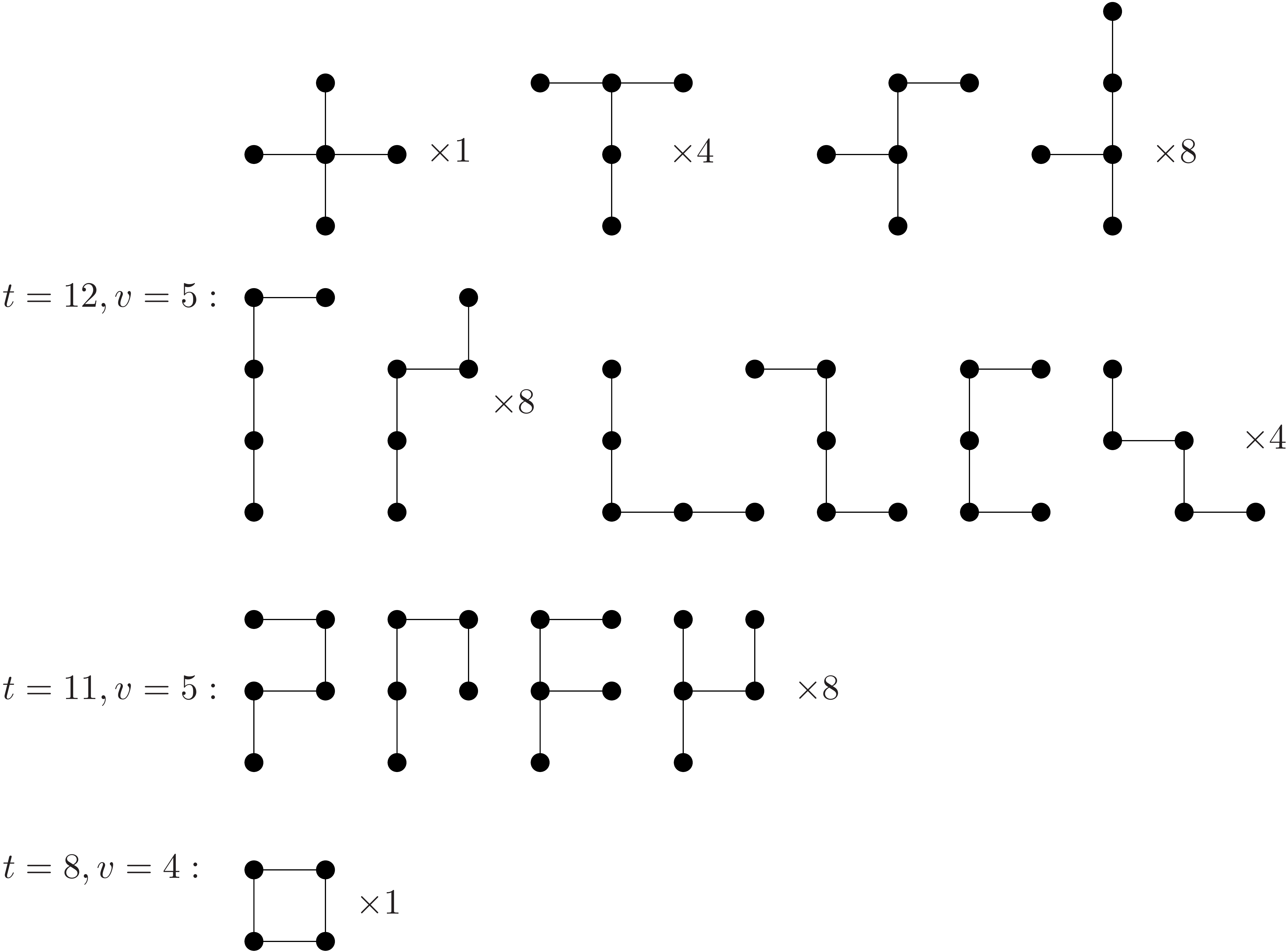}
  \caption{The $G_4^{(2)} = 86$ bond animals of size $e=4$ proper in
    $2$ dimensions, classified by their perimeter $t$ and the number
    of vertices $v$. There are $15$ different shapes up to rotation and reflection, and the notation $\times 8$ etc.\ gives the number of animals corresponding to each one.}
  \label{fig:proper-4}
\end{figure}

In order to compute the total probability that an edge belongs to a connected cluster of a given size, we also need to classify bond animals according to their perimeter, i.e., the number of edges that are incident to but not part of the cluster, and their number of vertices. Figure~\ref{fig:proper-4} shows all $86$ animals of size $e=4$ that are proper in $2$ dimensions, classified by their perimeter $t$ and number of vertices $v$. We denote the number of bond animals proper in dimension $k$ of size $e$, perimeter $t$ and number of vertices $v$ by $G_{e,t,v}^{(k)}$. From Figure~\ref{fig:proper-4} we see that
\[
  G^{(2)}_{4,12,5} = 53 \qquad G^{(2)}_{4,11,5} = 32 \qquad
  G^{(2)}_{4,8,4} = 1\,.
\]
Note that we define $t$ as the number of perimeter edges that live in the subspace spanned by the animal.  

Once we know the numbers $G_{e,t,v}^{(k)}$ for a given $e$, we can compute the perimeter polynomial
\begin{equation}
  \label{eq:def-perimeter-poly-bond}
  D_e(q) = \sum_{k=1}^e \sum_{t,v} {d \choose k} G^{(k)}_{e,t,v}\, q^{t + 2(d-k)v} \, .
\end{equation}
This arises by noting that when a bond animal proper in $k$ dimensions
is embedded in $d\geq k$ dimensions, 
it has $2(d-k)v$ additional perimeter edges pointing ``up'' and ``down'' in the other $d-k$ dimensions. 
Summing over all embeddings of such an animal in $d$ dimensions, and summing over all $k$, $t$, and $v$, 
gives~\eqref{eq:def-perimeter-poly-bond}. Comparing to~\eqref{eq:d-choose-k}, we see that $D_e(1) = A_d(e)$.

The perimeter polynomials allow us to express quantities such as the expected size $S$ of the cluster to which a random occupied edge belongs, i.e., conditioned on the event that that edge is occupied.  Following~\cite{gaunt:ruskin:78} we have
\begin{equation}
\label{eq:expected-size-bond}
S = \frac{1}{dp} \sum_e e^2 p^e D_e(1-p) \, . 
\end{equation}
To see this, focus on a particular edge in the lattice, say between $u=(0,0,\ldots,0)$ and $v=(1,0,\ldots,0)$. The probability that $(u,v)$ is occupied and belongs to a particular translation of a particular animal of size $e$ and total perimeter $t+2(d-k)v$ is $p^e (1-p)^{t+2(d-k)v}$. Summing over all translations and averaging over all rotations, $(u,v)$ can appear in each animal of size $e$ in $e/d$ different ways, since a fraction $d$ of the edges in all rotations lie along this axis.  Thus each animal counted by $G^{(k)}_{e,t,v}$ contributes $p^e (1-p)^{t + 2(d-k)v} e/d$ to the probability that $(u,v)$ is occupied and part of a cluster of size $e$, and $p^e (1-p)^t e^2/d$ to the expected size of the cluster. Since $S$ is the expected size conditioned on the event that $(u,v)$ is occupied, we divide by the probability $p$ of this event. Finally, summing over all $e$ gives~\eqref{eq:expected-size-bond}.


\section{From Perimeter Polynomials to the Series Expansion for the Threshold}
\label{sec:series-expansion-bond}

We now follow Gaunt and Ruskin~\cite{gaunt:ruskin:78} in using the perimeter polynomials, and their behavior for large $d$, to compute a high-dimension (or low-density) expansion for the critical density $\pcbond$. First we rewrite~\eqref{eq:expected-size-bond} slightly and expand the expected cluster size $S$ in powers of $p$, 
\begin{equation}
\label{eq:br-bond}
dpS = \sum_e e^2 p^e D_e(1-p) 
\triangleq \sum_{r=1}^\infty b_r p^r \, .
\end{equation}
(In~\cite{gaunt:ruskin:78} the authors write $b'_r$ instead of $b_r$.) Now suppose that $b_r$ grows exponentially in the limit of large $r$, 
\[
b_r \sim \mu^r \, ,
\]
where $\mu$ depends only on $d$.  In that case, the right-hand side of~\eqref{eq:br-bond}, and the expected cluster size $S$, diverge precisely when $p \ge \pcbond$ where
\[
\pcbond = 1/\mu \, . 
\]
Thus our goal is to compute
\begin{equation}
\label{eq:mu-limit}
\ln \mu = \lim_{r \to \infty} \frac{1}{r} \ln b_r \, . 
\end{equation}
and in particular its behavior for large $d$. In the limit $d \to \infty$, we expect $\mu$ to approach its mean-field value $\sigma = 2d-1$, the branching ratio of the lattice: that is, the number of children each edge would have if the lattice were a tree.

As we discuss below, through a mixture of brute-force enumeration and analytic results we obtain explicit formulas for $D_e(q)$ for $e \le 11$, which we exhibit in~\ref{sec:bond-polynomials}. Since $b_r$ depends only on $D_e$ for $e \le r$, these formulas allow us to compute $b_r$ for $r \le 11$ directly. We can extend this to $r=12$ by noticing that $b_r$ depends on $D_r$ only through  $D_r(1) = A_d(r)$, i.e., on the total number of animals in each dimension regardless of their perimeter, since the factor $(1-p)^t$ only contributes to higher-order terms.  Moreover, following~\cite{luther:mertens:11a}, we can deduce $D_r(1)$ from $D_e(q)$ for $e < r$ from the fact that the total probability of belonging to any animal is $p$:
\begin{equation}
\label{eq:luther-mertens-bond}
\sum_{e=1}^\infty e p^e D_e(1-p) = dp \, . 
\end{equation}
For any $r > 1$, the coefficient of $p^r$ in the left-hand side is zero. Thus we have
\begin{equation}
\label{eq:identity-bond}
D_r(1) = A_d(r) = -\frac{1}{r} \left[ \sum_{e=1}^{r-1} e p^e D_e(1-p) \right]_r \, ,
\end{equation}
where $[f(p)]_r$ denotes the coefficient of $p_r$ in the power series of $f(p)$. In~\ref{sec:bond-polynomials} we use this to derive $D_{12}(1)$.

\begin{figure}
  \centering
  \includegraphics[width=0.8\linewidth]{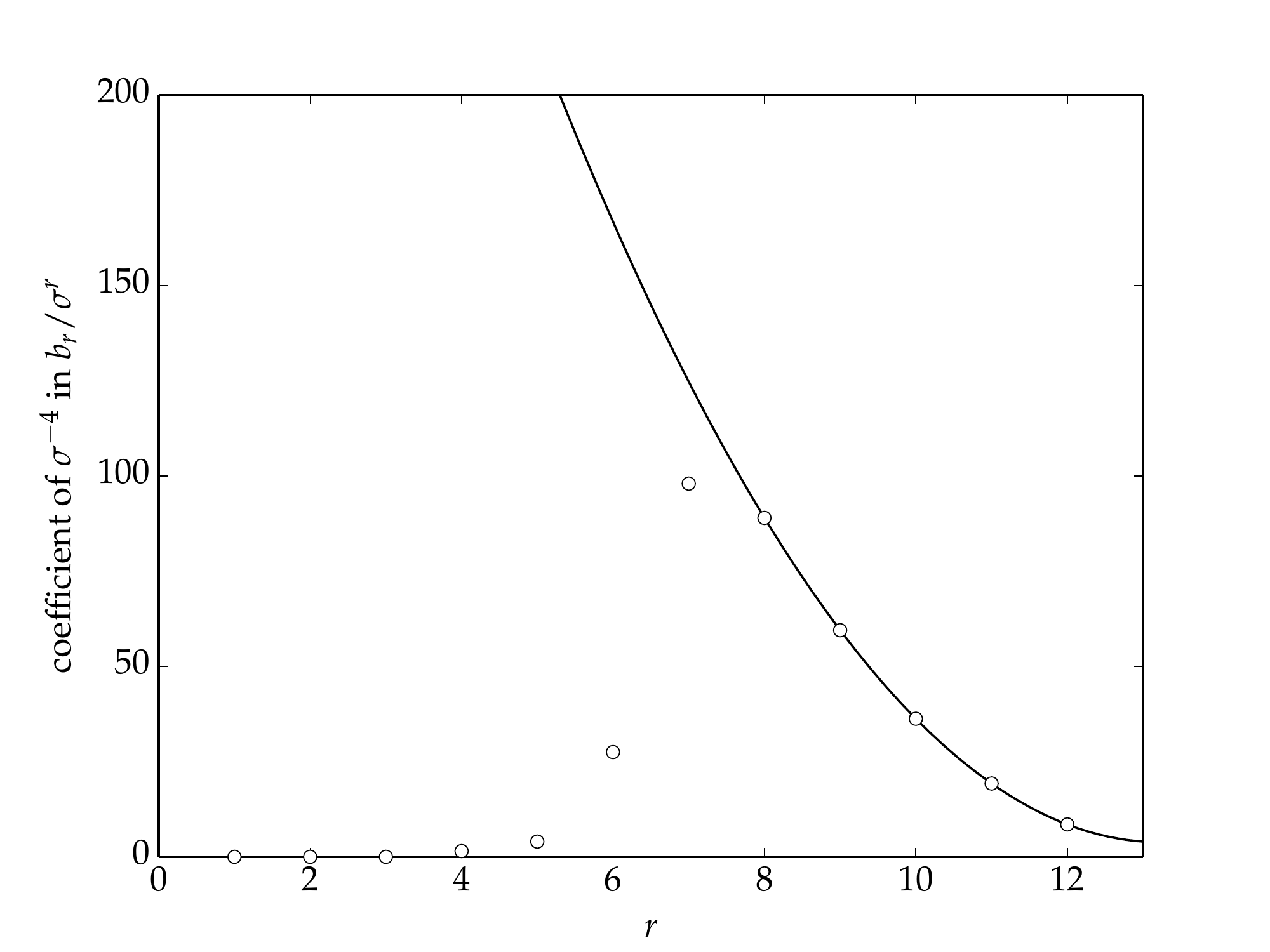}
  \caption{The stabilization of the coefficients in~\eqref{eq:br-polys-bond}. Here we show the coefficient of $\sigma^{-4}$, which coincides with a quadratic curve for $r \ge 8$.}
  \label{fig:b_r}
\end{figure}

Since $D_e$ is a degree-$e$ polynomial in $d$, it follows that $b_r$ is a degree-$r$ polynomial in $d$, or equivalently a degree-$r$ polynomial in $\sigma=2d-1$. We give these polynomials explicitly in~\ref{sec:br}, and for $r > 1$ the leading term $\sigma^r$ indeed matches the mean-field limit. If we expand $b_r / \sigma^r$ in powers of $\sigma^{-1}$, we find that the correction terms stabilize as $r$ increases. Specifically, as in~\cite{gaunt:ruskin:78}, we find that the coefficient of $\sigma^{-j}$ coincides with a polynomial in $r$ of degree $\lfloor j/2 \rfloor$ whenever $r \ge 2j$. We show this for the case $j=4$ in Figure~\ref{fig:b_r}. 

We extend the corresponding series in~\cite{gaunt:ruskin:78} by one term. Namely, we obtained the coefficient of $\sigma^{-5}$ by fitting a quadratic function of $r$ to just three points, the minimum number necessary, for $r=10,11,12$. This gives
\begin{eqnarray}
\frac{b_r}{\sigma^r} = 1 &+ \sigma^{-1} & \quad (r \ge 2) \nonumber \\
&+ \left( \frac{17}{2} - \frac{5 r}{2} \right) \sigma^{-2} & \quad (r \ge 4) \nonumber \\
&+ (57 - 10r) \sigma^{-3} & \quad (r \ge 6) \nonumber \\
&+ \left( 550-\frac{661 r}{8}+\frac{25 r^2}{8} \right) \sigma^{-4} & \quad (r \ge 8) \nonumber \\
&+ \left( \frac{\numprint{67933}}{12}-\frac{\numprint{15503} r}{24}+\frac{175 r^2}{8} \right) \sigma^{-5} & \quad (r \ge 10) \nonumber \\
&+ O(\sigma^{-6}) \, .
\label{eq:br-polys-bond}
\end{eqnarray}

Our confidence in~\eqref{eq:br-polys-bond} is bolstered by the fact that, if we assume $b_r$ has this form for sufficiently large $r$ and take its logarithm, the terms that are quadratic and higher-order in $r$ cancel, giving
\begin{align}
\ln b_r &= r \ln \sigma 
+ \sigma^{-1} 
+ \left( 8 - \frac{5r}{2} \right) \sigma^{-2}
+ \left( \frac{293}{6} - \frac{15 r}{2} \right) \sigma^{-3} 
\nonumber \\
&+ \left( \frac{3721}{8} - \frac{431 r}{8} \right) \sigma^{-4} 
+ \left( \frac{\numprint{71213}}{15} - \frac{2315 r}{6} \right) \sigma^{-5}
+ O(\sigma^{-6}) \, ,
\label{eq:log-br-bond}
\end{align}
where as in~\eqref{eq:br-polys-bond} the coefficient of $\sigma^{-j}$ holds for $r \ge 2j$.  
In particular, if the last coefficient of $\sigma^{-5}$ in~\eqref{eq:br-polys-bond} had a different linear or quadratic term in $r$, or if it contained any higher-order terms, then the coefficient of $\sigma^{-5}$ in~\eqref{eq:log-br-bond} would have higher-order terms in $r$. 

Proceeding from~\eqref{eq:log-br-bond} we now take the limit $r \to \infty$ defined in~\eqref{eq:mu-limit}, giving
\begin{equation}
\label{eq:mu-bond}
\ln \mu 
= \ln \sigma 
- \frac{5r}{2} \sigma^{-2}
- \frac{15 r}{2} \sigma^{-3} 
- \frac{431 r}{8} \sigma^{-4} 
- \frac{2315 r}{6} \sigma^{-5}
- O(\sigma^{-6}) \, . 
\end{equation}
Finally, exponentiating this gives
\begin{equation}
\label{eq:pc-bond-series-new}
\pcbond = \frac{1}{\mu} 
= \sigma^{-1}
+ \frac{5}{2} \sigma^{-3}
+ \frac{15}{2} \sigma^{-4}
+ 57 \sigma^{-5} 
+ \frac{4855}{12} \sigma^{-6} 
+ O(\sigma^{-7}) \, ,
\end{equation}
extending the series~\eqref{eq:pc-bond-series-old} from~\cite{gaunt:ruskin:78} by one more term. 

In the succeeding sections we will show how a combination of brute-force enumeration and new analytic results allowed us to obtain $D_1(q),\ldots,D_{11}(q)$.  Indeed, without these analytic results we would have required months of additional computation to obtain the new term in~\eqref{eq:pc-bond-series-new}.  



\section{Computer Enumerations of Bond Animals}
\label{sec:enumerations}

Computerized enumerations of lattice animals have a long tradition in
statistical mechanics. The first algorithm was published by Martin in
1974~\cite{martin:74}. The classical algorithm  for counting lattice
animals is due to Redelmeier~\cite{redelmeier:81}. 
Originally developed for the square lattice,
Redelmeier's algorithm was later shown to work on arbitrary
lattices and in higher dimensions~\cite{mertens:90} and to be
efficiently parallelizable~\cite{mertens:lautenbacher:92}.
For two dimensional lattices there is a much faster counting method based
on transfer matrices~\cite{jensen:01}, but for $d\geq 3$ Redelmeier's
algorithm is still the most efficient known way to count lattice animals. 

For counting in high dimensions one faces the problem of
storing a piece of the lattice large enough to hold all possible animals of a given size. 
The most naive approach of using a $d$-dimensional hypercube of side
length $2e+1$ for animals of size $e$ requires terabytes of memory 
for the dimensions and sizes we study here. We can reduce the memory requirements to megabytes 
by using a $d$-dimensional $\ell_1$ ball of radius $e$ instead, 
i.e., a generalized octahedron, since its volume is only $1/d!$ times that of the hypercube. 
Using this idea, extensive enumerations of site animals on high-dimensional lattices were performed in~\cite{luther:mertens:11a}. 

More generally, Redelmeier's algorithm takes the adjacency matrix of any graph as input and 
counts the animals on that graph. This lets us use the same approach for bond animals, 
since we can compute the adjacency matrix of the ``line graph'' or ``covering graph,'' 
whose vertices are the edges of the original ball and where two edges are adjacent if they share a vertex. 
The code to compute these graphs and the implementation of Redelmeier's algorithm can be downloaded from
our project website~\cite{animals:site}. For a detailed description of
the counting algorithm we refer to~\cite{luther:mertens:11a}. Here we
focus on the time complexity of the algorithm and how this limits
the size of the animals that we can count.

While its memory requirements are modest, Redelmeier's algorithm counts lattice animals of a given size by actually constructing all of them. Hence its running time scales essentially as $A_d(e)$, the total number of lattice animals of size $e$ in dimension $d$. These cluster numbers $A_d(e)$ grow asymptotically as
\begin{equation}
  \label{eq:scaling}
  A_d(e) \sim C \lambda_d^e\, e^{-\theta_d}
  \left(1+\frac{b}{e^{\Delta}} + \mbox{corrections} \right) \, . 
\end{equation}
Here the exponents $\theta_d$ and $\Delta$ are universal constants,
i.e., their value depends on the dimension $d$ but not on the
underlying lattice, while $C$, $b$, and the growth rate $\lambda_d$ are nonuniversal, 
lattice-dependent quantities~\cite{adler:etal:88}.  This universality
facilitates the computation of $\theta_d$ for some values of $d$ using
field-theoretic arguments. In particular, $\theta_2 = 1$ and $\theta_3=3/2$
\cite{parisi:sourlas:81,imbrie:03}, $\theta_4=11/6$ \cite{dhar:83} and
$\theta_d=5/2$ (the mean-field value) for $d\geq\dc=8$, the
critical dimension for animal growth \cite{lubensky:isaacson:79}.  

\begin{table}
  \centering
  \begin{tabular}{rrrrrr}
  $e$ & $d=2$ & $d=3$ & $d=4$ \\[1ex]
   13 & \numprint{79810756} & \numprint{208438845633} & \numprint{26980497086268} \\
14 & \numprint{386458826}  & \numprint{1979867655945} &
                                                        \numprint{384428067086544} \\
15 & \numprint{1880580352}  & \numprint{18948498050586} & \numprint{5527398761722192}\\
16 & \numprint{9190830700}  & \numprint{182549617674339} & \\
17 & \numprint{45088727820}  & \numprint{1768943859449895} & \\
18 & \numprint{221945045488}  & \numprint{17230208981859485} & \\
19 & \numprint{1095798917674}  & & \\
20 & \numprint{5424898610958}  & & \\
21 & \numprint{26922433371778}  & & \\
22 & \numprint{133906343014110}  & & \\
23 & \numprint{667370905196930}  & & \\
24 & \numprint{3332257266746004}  & &  
  \end{tabular}
  \caption{Number of bond lattice animals $A_d(e)$. We also know $A_5(13) =
  \numprint{900703198101845}$ and
  $A_6(13)=\numprint{14054816418877200}$. Values of $A_d(e)$
for $e\leq 12$ and general $d$ can be computed from
\eqref{eq:d-choose-k} and the perimeter polynomials in \ref{sec:bond-polynomials}.}
  \label{tab:Ad}
\end{table}

Values of $A_d(e)$ for $e \leq 12$ and general dimension $d$ can be
computed from \eqref{eq:d-choose-k} and the perimeter polynomials in
\ref{sec:bond-polynomials}. For some values of $d$ we have enumerated
animals larger than $e=12$, see Table~\ref{tab:Ad}. 
The enumeration data for $A_d(e)$ can be used to estimate both $\lambda_d$
and $\theta_d$.  For that we compute $\lambda_d(e)$ and $\theta_d(e)$ as the
solutions of the system
\begin{equation}
  \label{eq:lambda-theta-system}
  \ln A_d(e-k) = \ln C + (e-k)\ln\lambda_d(e) - \theta_d(e) \ln(e-k)
\end{equation}
for $k=0,1,2$.  We need three equations to eliminate the constant
$\ln C$. The growth rate $\lambda_d$ and exponent $\theta_d$ are
obtained by extrapolating the numbers $\lambda_d(e)$ and
$\theta_d(e)$ to $e\to\infty$. From \eqref{eq:scaling} we expect
that
\begin{equation}
  \label{eq:lambda-finite-size}
  \ln\lambda_d(e) \sim \ln\lambda_d + \frac{b}{e^{\Delta+1}} 
\end{equation}
for large values of $e$. We used the data points $\lambda_d(e)$ to fit the parameters $\ln\lambda_d$, $b$ and $\Delta$ in \eqref{eq:lambda-finite-size}. A plot of $\log\lambda_d(e)$ versus $e^{-(\Delta+1)}$ (Figure~\ref{fig:log-lambda}) then shows that the data points in fact scale like \eqref{eq:lambda-finite-size}. The resulting estimates for $\ln\lambda_d$ are listed in Table~\ref{tab:lambda}. The growth rate $\lambda_d$ increases linearly with $d$, see Figure~\ref{fig:lambda_d}.

\begin{figure}
  \centering
  \includegraphics[width=0.8\linewidth]{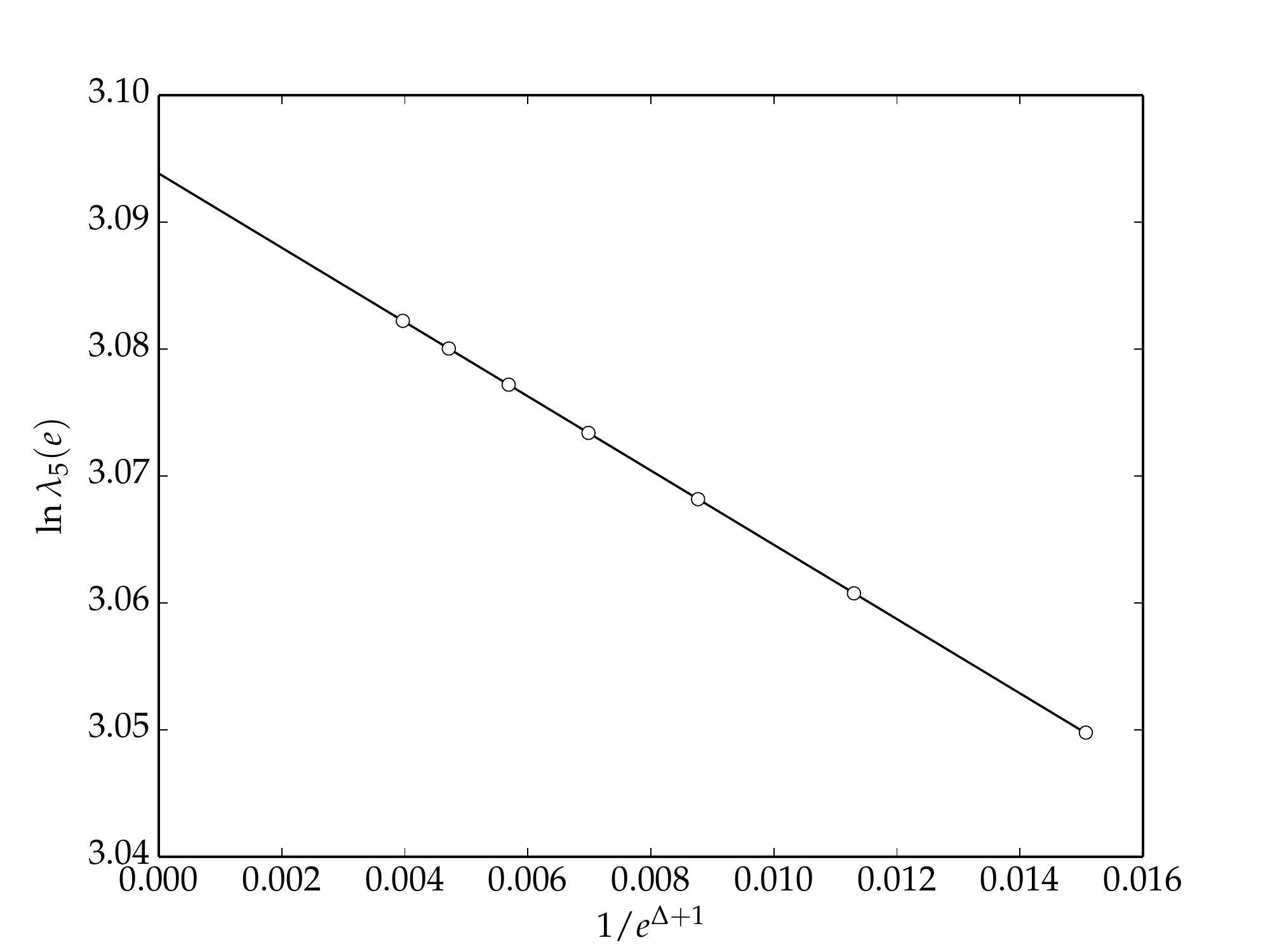}
  \caption{Bond animal growth rate $\lambda_d$ for $d=5$.}
  \label{fig:log-lambda}
\end{figure}

\begin{figure}
  \centering
  \includegraphics[width=0.8\linewidth]{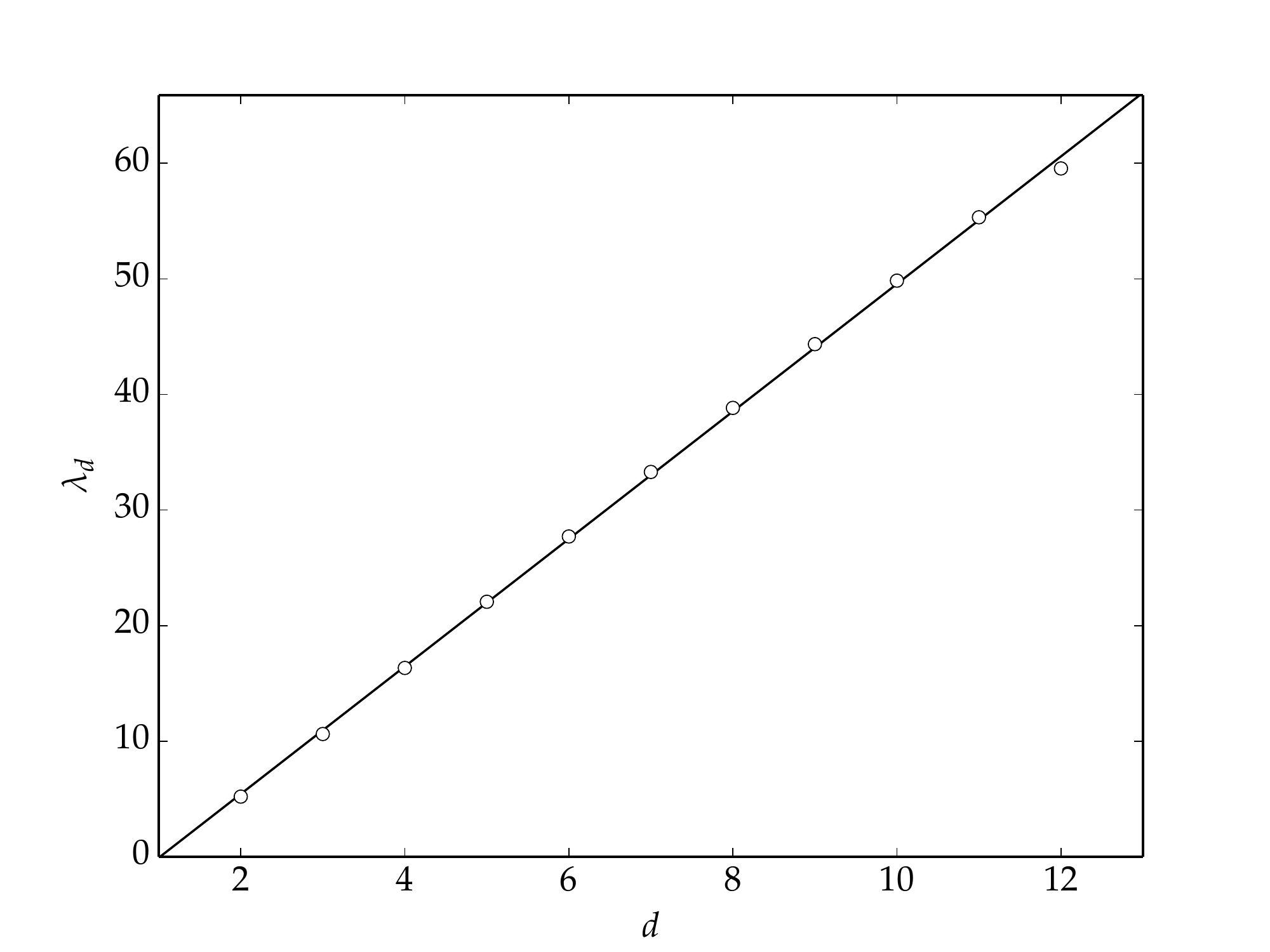}
  \caption{Growth rate $\lambda_d$ of the number of bond animals in
    $d$ dimensions.}
  \label{fig:lambda_d}
\end{figure}

The same approach can be used to compute the exponent $\theta_d$. Here
we expect
\begin{equation}
  \label{eq:theta-finite-size}
  \theta_d(e) \sim \theta_d + \frac{b}{e^{\Delta}}\,.
\end{equation}
Figure~\ref{fig:theta} shows that $\theta_d(e)$ in fact scales like
\eqref{eq:theta-finite-size}. The resulting estimates for $\theta_d$
(Table~\ref{tab:lambda}) deviate from the Monte Carlo results and the 
exact values by no more than 5\%, a deviation most likely 
induced by the extrapolation $e\to\infty$.

\begin{figure}
  \centering
  \includegraphics[width=0.8\linewidth]{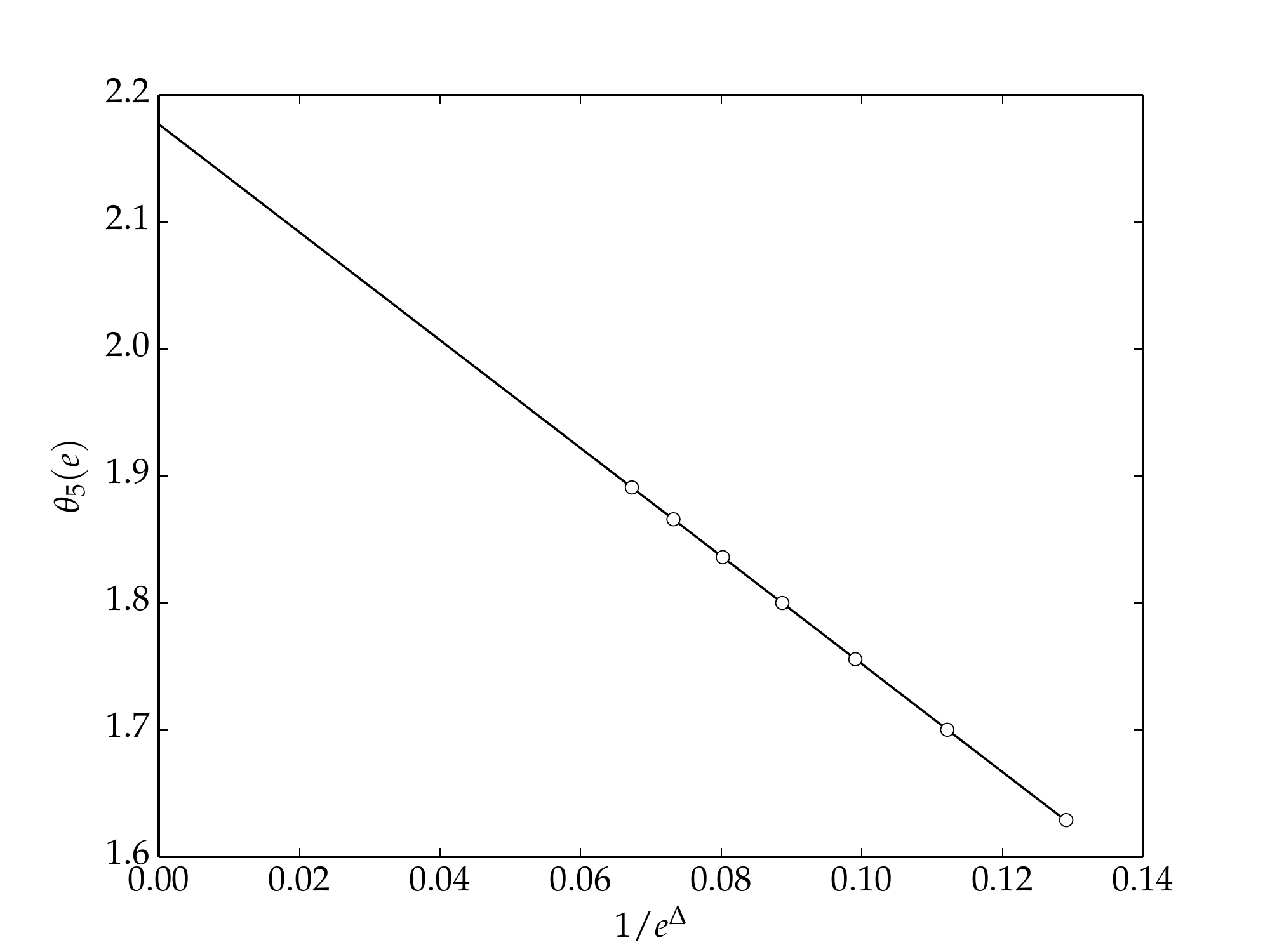}
  \caption{Exponent $\theta_d$ for $d=5$.}
  \label{fig:theta}
\end{figure}

\begin{table}
  \centering
  \begin{tabular}{c|l|cc}
    & \rule[-1.5ex]{0pt}{3ex}$\log \lambda_d$ &
    \multicolumn{2}{c}{$\Theta_d$} \\
    $d$ & enum. &  enum. & exact, MC \\\hline
      2 & \rule{0pt}{2.5ex}1.650 & 0.99 & 1 \\
      3 & 2.362 & 1.52 & $3/2$ \\
      4 & 2.793 & 1.90 & $11/6$ \\
      5 & 3.093 & 2.18 & 2.080(7) \\
      6 & 3.322 & 2.34 & \phantom{1}2.261(12)\\
      7 & 3.505 & 2.42 & 2.40(2)\phantom{1} \\
      8 & 3.659 & 2.47 & $5/2$ \\
      9 & 3.792 & 2.49 & $5/2$ \\
    10 & 3.909 & 2.50 & $5/2$ \\
    11 & 4.013 & 2.51 & $5/2$ \\
    12 & 4.107 & 2.52 & $5/2$ 
  \end{tabular}
  \caption{Growth rates $\lambda_d$ and exponents $\theta_d$ obtained
    from extrapolating the enumeration data. The column marked ``exact,
    MC'' contains exact values from field-theoretic arguments or values from large scale
    Monte Carlo simulations \cite{hsu:nadler:grassberger:05,hsu:nadler:grassberger:05a}.}
  \label{tab:lambda}
\end{table}

\begin{table}
  \centering
  \begin{tabular}{lccc}
  CPU & frequency & nodes$\times$cpus$\times$cores & memory/core \\[0.5ex]
  E5-1620 & 3.60 GHz & $1\times 2\times 4$ & 4.0 GByte \\
  E5-2630 & 2.30 GHz & $5\times 4\times 6$  & 5.3 GByte \\
  E5-2630v2 & 2.60 GHz &  $5\times 4\times 6$ & 5.3 GByte \\
  E5-2640v4 & 2.40 Ghz & $3\times 4 \times 10$ & 6.4 GByte
  \end{tabular}
  \caption{Computing machinery used for the enumerations in this
    paper. All CPUs are Intel\textsuperscript{\textregistered} Xeon\textsuperscript{\textregistered}.}
  \label{tab:leonardo}
\end{table}

Now, for practical enumerations we have to face the fact that the running time grows exponentially with the size of the animals, and that the associated growth rate $\lambda_d$ is an increasing function of $d$. Hence the hardest enumeration tasks are those with large size $e$ in high dimension $d$. In particular, to compute the next term of the series~\eqref{eq:pc-bond-series-old}, we have to enumerate animals with $e=11$ in $d=8,9,10,11$ while keeping track of their perimeter and number of vertices, i.e., compute $G_{e,t,v}^{(d)}$ for all $t$ and $v$. On our Linux cluster with 368 cores (Table~\ref{tab:leonardo}), doing this for $e=11$ in $d=8$ took 12 days and 6 hours wall clock time. 
The running times for the next few values of $d$ can be estimated by extrapolation as
\begin{eqnarray*}
  (d=9,e=11) &\rightarrow& \phantom{1}47\, \text{days,}\\
  (d=10,e=11) &\rightarrow& 181\, \text{days,}\\
  (d=11,e=11) &\rightarrow& 624\, \text{days.} 
\end{eqnarray*}
Thus $d=9$ would take about $7$ weeks, while $d=10$ would take months and $d=11$ would take over a year. Luckily we can spare ourselves all three of these tasks by computing the corresponding perimeter polynomials analytically---that is, by providing explicit formulas for $G_{e,t,v}^{(e)}$, $G_{e,t,v}^{(e-1)}$, and $G_{e,t,v}^{(e-2)}$. We do this in the next section.

\section{Analytic Results for Bond Animals}
\label{sec:bond-percolation}

\subsection{Counting proper and almost-proper bond animals}

Computing $G_e$ is computationally expensive, with effort that grows exponentially as a function of $e$ with a base that grows linearly in $d$. We can save time by deriving some analytic results. In this section we prove explicit formulas for
$G_{e}^{(e)}$, $G_{e}^{(e-1)}$, and $G_{e}^{(e-2)}$, which appeared without proofs or derivations in~\cite{gaunt:ruskin:78}, and for the coefficients of ${d \choose e}$, ${d \choose {e-1}}$ and ${d \choose {e-2}}$ in $D_e(q)$. Thanks to these formulas we can compute the next term of $\pcbond$ using enumerations of bond animals just for $e \leq 11$ and $d \leq 8$, which as discussed in the previous section greatly reduces our total computation time. 


To count animals, we will relate them to labeled trees of various kinds. An \emph{edge-labeled tree} is a tree with $e$ edges where each one is given a unique label $\{1,\ldots,e\}$, and a \emph{vertex-labeled tree} is a tree with $n$ vertices where each one is given a unique label in $\{1,\ldots,n\}$.  A \emph{rooted tree} is one where a particular vertex is distinguished as the root.

There is a one-to-one correspondence between rooted edge-labeled trees with $e \ge 2$ edges and vertex-labeled trees with $n=e+1$ vertices. We direct the edges away from the root, copy each edge label to the vertex it points to, and give the root vertex the label $n$.  Since any of the $n$ vertices can be treated as the root, and since by Cayley's formula~\cite{cayley:1889} the number of vertex-labeled trees is $n^{n-2}$, the number of edge-labeled trees with $e \ge 2$ edges and $n=e+1$ vertices is~\cite{cameron:95,barequet:barequet:rote:10,luther:mertens:17}
\begin{equation}
\label{eq:edge-labeled-trees}
\mbox{\# edge-labeled trees} = n^{n-3} = (e+1)^{e-2} \, . 
\end{equation}
Note that this gives the nonsense answer $1/2$ when $e=1$: since the two endpoints of the graph with a single edge are indistinguishable, making either one the root leads to the same vertex-labeled graph.  For similar reasons, we will find that some of our formulas will only work when $e$ is sufficiently large. On the other hand, some conveniently give the right answer for all $e \ge 1$, in which case we will state them without qualification.


\begin{theorem}
  \label{thm:proper-e}
  The number of bond animals of size $e$ that are proper in $e$ dimensions is
  \begin{equation}
    \label{eq:proper-e}
     G_e^{(e)} = 2^e\,(e+1)^{e-2} \, .
  \end{equation}
\end{theorem}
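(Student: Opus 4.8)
The plan is to establish a bijection between bond animals of size $e$ that are proper in exactly $e$ dimensions and some decorated version of edge-labeled trees, whose count we already know from~\eqref{eq:edge-labeled-trees}. The key observation is dimensional: an animal with $e$ edges that spans $e$ dimensions is extremely rigid. Each of its $e$ edges must point in a \emph{distinct} coordinate direction, since $e$ edges can only span $e$ dimensions if they are linearly independent as vectors. In particular no two edges are parallel, which forces the animal to be a tree: if it contained a cycle, the edge vectors around that cycle would sum to zero, giving a linear dependence and dropping the spanned dimension below $e$. So every such animal is a spanning tree on $e+1$ vertices whose $e$ edges use $e$ mutually orthogonal axis directions.

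First I would make this precise by setting up the correspondence with edge-labeled trees. Think of the abstract tree shape (a tree on $n=e+1$ vertices, unlabeled), and then record, for each edge, which of the $e$ coordinate axes it lies along and whether it points in the positive or negative direction along that axis. Because the animal is proper in \emph{exactly} $e$ dimensions and has exactly $e$ edges, the assignment of axes to edges is a bijection between the edge set and the set of $e$ coordinate directions; this is precisely the data of an edge-labeling, where the ``label'' of an edge is the axis it uses. Up to the lattice symmetries (rotations and reflections of $\mathbb{Z}^e$ permuting and flipping the axes) we may normalize which abstract axis each label denotes, so the genuinely free data is: (i) the underlying edge-labeled tree shape, counted by $(e+1)^{e-2}$, and (ii) an independent choice of orientation (positive or negative) for each of the $e$ edges, contributing a factor $2^e$.

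I would then argue that every such choice yields a valid embedded animal and that distinct choices yield distinct animals (no accidental self-intersections or coincidences). The self-avoidance is automatic here: since all $e$ edge directions are orthogonal, following any path in the tree moves the position monotonically in a new coordinate at each step, so two vertices of the tree can never map to the same lattice point, and the embedding is always injective. Conversely, the axis-assignment and sign data can be read off uniquely from any proper-in-$e$ animal, establishing the bijection. Multiplying the two independent factors gives $G_e^{(e)} = 2^e (e+1)^{e-2}$, which is~\eqref{eq:proper-e}.

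The main obstacle I anticipate is bookkeeping the lattice symmetries correctly, so that the factor of $2^e$ for edge orientations is not double-counted against the symmetry group already quotiented out in the definition of ``distinct animals'' (recall animals are distinct up to rotation and reflection but not translation). Concretely, I must check that the $2^e$ sign choices and the edge-labeling are being counted modulo exactly the symmetries that identify animals, and that the normalization used to pass from labeled axes to an unlabeled proper animal is consistent — in particular verifying the formula against the small case $e=2$, where~\eqref{eq:proper-e} predicts $2^2 \cdot 3^0 = 4 = G_2^{(2)}$, matching the value stated earlier in the paper. A secondary subtlety is the degenerate behavior of~\eqref{eq:edge-labeled-trees} at $e=1$ noted in the text; I would confirm that $e=1$ works out correctly here (giving $G_1^{(1)} = 2^1 \cdot 2^{-1} = 1$, the single proper-in-one-dimension animal) or else state the result for $e \ge 2$ and handle $e=1$ separately.
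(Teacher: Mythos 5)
Your proposal is correct and follows essentially the same route as the paper: both identify a proper-in-$e$ bond animal of size $e$ with a directed edge-labeled tree (the labels recording which axis each edge uses, the directions recording its orientation), giving $2^e(e+1)^{e-2}$, and your linear-independence argument for acyclicity and your self-avoidance check are just more explicit versions of the paper's remarks. The only wrinkle is your sentence about normalizing axes ``up to the lattice symmetries'': no such quotient is involved, because the paper counts animals related by rotations or reflections as \emph{distinct} (only translations are factored out), so the axis assignment and the $2^e$ sign choices are literal data---which is exactly how your final count and your $e=2$ check in fact treat them.
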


\begin{proof}
In a directed tree, each edge can have two orientations. Hence
the right hand side of \eqref{eq:proper-e} is the number of
directed edge labeled trees with $e$ bonds. To prove Theorem~\ref{thm:proper-e} we
need to show that there is a one-to-one correspondence between
these trees and bond animals of size $e$ that are
proper in $e$ dimensions.  

Any bond animal of size $e$ that is proper in $e$ dimensions is a
tree: there can be no loops since each edge has to point along its own
dimensional axis. Thus it corresponds to an edge-labeled tree, where
each edge is uniquely labeled with the axis it points along. For each
axis there are two possibilities with respect to orientation: it can
point either ``up'' or ``down'' along that coordinate axis, and these possibilities correspond to
the direction of the edge.  Conversely, we can read every directed,
edge labeled tree as a blueprint for a proper animal: take any vertex
of the tree as the initial site of the animal and and add the tree
neighbors of that vertex to the animal as indicated by the label and
the direction of the corresponding edge. Proceed with the neighbors of
the neighbors etc. The result is an animal that
is proper in $e$ dimensions since we have spanned each dimension
exactly once. Thus the mapping between directed edge-labeled trees and
bond animals proper in $e$ dimensions is one-to-one, which proves \eqref{eq:proper-e}.

\end{proof}

Next we prove two lemmas which are helpful in counting edge-labeled trees that contain certain labeled subgraphs.

\begin{lemma}
 \label{lem:forests}
  The number of ordered forests of $k \ge 1$ rooted trees with a
  total of $n$ vertices, $n-k$ edges and distinct edge labels $1,\ldots,n-k$ is
  $k n^{n-k-1}$.
\end{lemma}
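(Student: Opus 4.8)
The plan is to generalize the tree bijection used above for Cayley's formula to forests, thereby reducing the count to a labeled rooted-forest count that follows from Cayley's formula. Concretely, I would set up a bijection between the ordered edge-labeled rooted forests in the statement and the vertex-labeled rooted forests on the vertex set $\{1,\ldots,n\}$ whose $k$ roots are exactly the vertices $n-k+1,\ldots,n$, and then count the latter.

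First I would describe the forward map. Given an ordered forest of $k$ rooted trees, direct every edge away from the root of its tree, so that each of the $n-k$ non-root vertices receives exactly one incoming edge. Copy each edge label to the vertex its edge points into; this assigns the labels $1,\ldots,n-k$ bijectively to the non-root vertices. Then use the ordering of the trees to label the roots: the root of the $i$-th tree receives the label $n-k+i$. The result is a vertex-labeled forest on $\{1,\ldots,n\}$ with $k$ components in which the vertices $n-k+1,\ldots,n$ lie in distinct components and serve as their roots. The inverse map reads off the roots as the vertices labeled $n-k+1,\ldots,n$ (which recovers the ordering of the trees), directs each edge toward its endpoint of larger distance from the root, and relabels that edge by the label of the vertex it points into.

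Next I would count the target objects, i.e.\ vertex-labeled forests on $n$ vertices with $k$ trees rooted at the fixed set $R=\{n-k+1,\ldots,n\}$. Adjoining one extra vertex $\star$ joined to all $k$ roots turns such a forest into a tree on $n+1$ vertices in which $\star$ has degree $k$; removing $\star$ inverts this, so it is a bijection onto rooted $k$-forests with an arbitrary root set. By the Pr\"ufer correspondence the number of trees on $n+1$ vertices with a prescribed vertex of degree $k$ is ${n-1 \choose k-1}\,n^{n-k}$. Since relabeling $\{1,\ldots,n\}$ acts transitively on the ${n \choose k}$ possible root sets while preserving the set of rooted $k$-forests, the count for the specific set $R$ is ${n-1 \choose k-1}\,n^{n-k}\big/{n \choose k} = k\,n^{n-k-1}$, using ${n-1 \choose k-1}\big/{n \choose k} = k/n$. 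Composing with the bijection then gives the claimed value.

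The main obstacle is checking that the bijection is genuinely well defined, since the vertices of the abstract forest are unlabeled --- this is exactly the degeneracy flagged above for $e=1$. The point to verify is that the edge labels make the non-root vertices pairwise distinguishable (each is identified by the label of its unique parent edge), so that no automorphism of an unlabeled rooted tree collapses two distinct edge-labelings, and dually that every target forest has its $k$ top-labeled vertices in distinct components, so that the inverse is defined. I would also confirm the edge cases, such as single-vertex trees contributing no edges, and justify the symmetry step in the count. As an alternative one could simply invoke the generalized Cayley formula for rooted forests, but carrying out the $\star$-vertex reduction keeps the argument self-contained relative to the $n^{n-2}$ count already used in the text.
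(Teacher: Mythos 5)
Your proof is correct, but it follows a genuinely different route from the paper's. The paper merges the $k$ roots into a single vertex, obtaining an edge-labeled tree on $n-k+1$ vertices; that correspondence is $k^\ell$-to-one when the merged vertex has degree $\ell$, so the paper needs the degree-constrained (multinomial) form of Cayley's formula and then a binomial-theorem summation of $k^\ell {n-k-1 \choose \ell-1} (n-k)^{n-k-\ell}$ over $\ell$. You instead keep all $n$ vertices, encode the ordering of the trees by giving the $i$-th root the label $n-k+i$, and obtain an honest bijection with vertex-labeled forests rooted at the fixed set $\{n-k+1,\ldots,n\}$; these you count by the star-vertex trick (trees on $n+1$ vertices in which $\star$ has degree $k$, counted via Pr\"ufer as ${n-1 \choose k-1} n^{n-k}$), followed by a symmetry quotient over the ${n \choose k}$ possible root sets, and indeed ${n-1 \choose k-1} n^{n-k} / {n \choose k} = k\,n^{n-k-1}$. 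Both proofs rest on Pr\"ufer-type counts, but yours trades the paper's non-injective merge map and summation over $\ell$ for a clean bijection plus an orbit argument, needing only the degree count at a single vertex rather than the full multinomial formula; the price is the bookkeeping you correctly flag, which is benign here: because the forests are rooted and the edge labels are distinct, every vertex is identified either as the ordered root of its tree or by the label of its unique parent edge, so the structures are rigid, both directions of your map are well defined, and the $e=1$ degeneracy of the unrooted tree count does not recur. As a sanity check, your formula also gives the correct value $1$ in the extreme case $k=n$, just as the paper notes for its own derivation.
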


\begin{proof}
This is proved in~\cite[Lemma 4]{barequet:barequet:rote:10}. We present a slightly modified proof. By merging the roots of the trees into a single vertex $v$, we obtain an edge-labeled tree with $n-k+1$ vertices. As described above, this corresponds to a vertex-labeled tree where $v$ is labeled $n-k+1$. This map is not one-to-one; if $v$ has degree $\ell$, then there are $k^\ell$ ordered forests of $k$ trees that would have mapped to this tree. 

A generalization of Cayley's formula that follows from Pr\"ufer codes~\cite{moon} states that the number of vertex-labeled trees with $n$ vertices where, for each $1 \le i \le n$, the vertex labeled $i$ has degree $d_i$ is the multinomial ${ n-2 \choose d_1-1, \cdots, d_n-1 }$. Thus the total number of trees with $n-k+1$ vertices where the vertex labeled $n-k+1$ has degree $\ell$ is
\begin{align}
&\sum_{\substack{d_1,\ldots,d_{n-k}: \\ \sum_{i=1}^{n-k} (d_i-1) = n-k-\ell}}  
  { n-k-1 \choose d_1-1, \cdots, d_{n-k}-1, \ell-1 } 
  \nonumber \\
&= { n-k-1 \choose \ell-1 } 
  \sum_{\substack{e_1,\ldots,e_{n-k}: \\ \sum_{i=1}^n e_i = n-k-\ell}} 
  { n-k-\ell \choose e_1, \cdots, e_{n-k} } 
  \nonumber \\
&= { n-k-1 \choose \ell-1 } (n-k)^{n-k-\ell} \, . 
\label{eq:degree-ell}
\end{align}
Multiplying by the number of forests $k^\ell$ that map to each such tree and summing over $\ell$ gives
\begin{align*}
&\sum_{\ell=1}^{n-k} { n-k-1 \choose \ell-1 } (n-k)^{n-k-\ell} k^\ell \\
&= k \sum_{\ell'=0}^{n-k-1} { n-k-1 \choose \ell' } (n-k)^{n-k-1-\ell'} k^{\ell'} \\
&= k n^{n-k-1} \, . 
\end{align*}
Note that this formula correctly gives $1$ when $k=n$, i.e., when the forest consists of $k$ vertices and no edges.
%
\end{proof}

\begin{definition}
\label{def:animal} 
Let $\calH = \{ H_1, \ldots, H_m \}$ be a collection of subgraphs where $H_i$ consists of $k_i$ vertices and $e_i$ edges for each $1 \le i \le m$.  Define an \emph{animal of size $e$ containing $\calH$} as a decorated graph $G$ with $e$ edges and the following properties:
\begin{itemize}
\item $G$ contains one copy of each $H_i$, and these copies are vertex-disjoint.
\item For each $i$, the vertices in the copy of $H_i$ are labeled $(i,j)$ where $1 \le j \le k_i$ to identify which $H_i$ they belong to, and to which vertex of $H_i$ they correspond.
\item If we contract each $H_i$ to form a single vertex, the resulting graph $G'$ (which has $n' = n+m-\sum_{i=1}^m k_i$ vertices and $e' = e-\sum_{i=1}^m e_i$ edges) is a tree.
\item Finally, the edges of $G'$ are given distinct labels $1,\ldots,e'$ and each one is directed.
\end{itemize}
\end{definition}
\noindent
Note that the $H_i$ themselves are not necessarily trees, so we have $n' = e'+1$ but not necessarily $n=e+1$.

\begin{lemma}
\label{lem:animals}
Let $\calH = \{ H_1, \ldots, H_m \}$ be defined as above.  Then the number of animals $G$ of size $e$ containing $\calH$ is 
\begin{equation}
\label{eq:animals}
2^{e-\sum_{i=1}^m e_i} 
\left( \,\prod_{i=1}^m k_i \right) 
\frac{(n+m-1-\sum_{i=1}^m k_i)!}{(n-\sum_{i=1}^m k_i)!} 
\,n^{n+m-2-\sum_{i=1}^m k_i} 
\, .
\end{equation}
\end{lemma}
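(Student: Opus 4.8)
The plan is to reduce this to a weighted enumeration of labeled trees, in the spirit of the proof of Lemma~\ref{lem:forests}. Write $K=\sum_{i=1}^m k_i$ and $E=\sum_{i=1}^m e_i$, so that contracting each $H_i$ turns $G$ into a tree $G'$ with $n'=n+m-K$ vertices and $e'=e-E=n'-1$ edges. Among the $n'$ vertices of $G'$, the $m$ \emph{super-vertices} (the contracted $H_i$) are mutually distinguishable because each records its index $i$, while the remaining $n'-m=n-K$ ordinary vertices carry no labels of their own. The bijection ``animal $\leftrightarrow$ ($G'$, attachment data)'' is the backbone of the argument: recovering $G$ from $G'$ amounts to recording, for each edge of $G'$ meeting a super-vertex $S_i$, which of the $k_i$ labeled vertices of $H_i$ it attaches to.

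First I would make all vertices distinguishable by assigning temporary labels $1,\ldots,n'-m$ to the ordinary vertices, deferring a division by $(n'-m)!=(n-K)!$ to the very end. On a fully labeled vertex set an animal is then specified by four independent pieces of data: a tree shape $T$ on the $n'$ vertices; a \emph{port assignment} sending each $T$-edge at $S_i$ to one of the $k_i$ vertices of $H_i$; a bijective labeling of the $e'$ edges by $1,\ldots,e'$; and a direction for each edge. The last two contribute the flat factors $e'!=(n'-1)!$ and $2^{e'}=2^{e-E}$, while the port assignment contributes $\prod_{i=1}^m k_i^{\delta_i(T)}$, where $\delta_i(T)=\deg_T(S_i)$, since distinct attachments give distinct animals by Definition~\ref{def:animal}.

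The heart of the computation is then the degree-weighted tree sum $\sum_T\prod_{i=1}^m k_i^{\delta_i(T)}$ over labeled trees on the $n'$ vertices. I would evaluate it with the degree generating function that repackages the Pr\"ufer/multinomial count already used in Lemma~\ref{lem:forests}: giving each vertex $v$ a weight $x_v$,
\[
\sum_{T}\prod_{v}x_v^{\deg_T(v)}=\Bigl(\prod_v x_v\Bigr)\Bigl(\sum_v x_v\Bigr)^{n'-2}.
\]
Setting $x_{S_i}=k_i$ and $x_v=1$ on the ordinary vertices gives $\prod_v x_v=\prod_{i=1}^m k_i$ and, crucially, $\sum_v x_v=K+(n'-m)=n$, so the sum collapses to $\bigl(\prod_{i=1}^m k_i\bigr)n^{n'-2}$. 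Assembling the flat factors, the fully labeled count is $2^{e-E}(n'-1)!\bigl(\prod_i k_i\bigr)n^{n'-2}$, and dividing by $(n-K)!$ reproduces~\eqref{eq:animals} after rewriting $n'-1=n+m-1-K$ and $n'-2=n+m-2-K$.

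The step I expect to require the most care is the final division by $(n'-m)!$: one must check that forgetting the temporary labels on the ordinary vertices is an exact $(n'-m)!$-to-one quotient. This holds because, once the edges of $G'$ are labeled and directed, the decorated tree has no nontrivial automorphism fixing the super-vertices --- every vertex is pinned down by its incident labeled edges --- so the symmetric group permuting the ordinary labels acts freely. This is also the reason the identity should be stated for $e'$ large enough that the tree is rigid; as with~\eqref{eq:edge-labeled-trees}, one can anticipate degenerate behavior for the smallest trees (e.g.\ $e'=1$), where an endpoint swap survives. Everything else --- the flat factors, the generating-function evaluation, and especially the collapse $\sum_v x_v=n$ --- is routine algebra.
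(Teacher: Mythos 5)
Your proof is correct and takes essentially the same route as the paper's: both arguments reduce the count to a Pr\"ufer/Cayley degree-weighted enumeration of trees on the contracted vertex set yielding the factor $\bigl(\prod_{i=1}^m k_i\bigr)\,n^{n'-2}$ (your generating-function identity is exactly the paper's multinomial sum over degree sequences in repackaged form), and your flat factor $e'!$ followed by division by $(n-\sum_i k_i)!$ reproduces the same $(n'-1)!/(n'-m)!$ that the paper obtains by multiplying by $n'!/(n'-m)!$ and dividing by the $n'$ choices of root. One small correction: your closing caveat about $e'=1$ is unnecessary, because the edges of $G'$ are directed, so an endpoint swap reverses a direction and is never an automorphism; the symmetric group permuting the temporary labels therefore acts freely in all cases, consistent with the lemma being stated without restriction on $e$.
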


\begin{proof}
In the proof of Lemma~\ref{lem:forests}, we merged the roots of the $k$ rooted trees to form a single vertex, and considered vertex-labeled trees where that vertex $\ell$ had a given degree. This is just a more elaborate version of the same idea. Indeed, that Lemma corresponds to the special case of this one where $m=1$.  

Each $H_i$ becomes a vertex $v_i$ with degree $\ell_i$.  Let $n'$ be defined as in Definition~\ref{def:animal} and write $t = n'-2-\sum_{i=1}^m (\ell_i-1)$.  Again invoking the generalization of Cayley's formula, the number of vertex-labeled trees with $n'$ vertices where for each $1 \le i \le m$ the vertex labeled $i$ has degree $\ell_i$ is
\begin{align*}
& \sum_{\substack{d_{m+1},\ldots,d_{n'}: \\ \sum_{i=m+1}^{n'} (d_i-1) = t}}
  {n'-2 \choose \ell_1-1, \ldots, \ell_m-1, d_{m+1}-1, \ldots, d_{n'}-1}  \\
&= {n'-2 \choose \ell_1-1, \ldots, \ell_m-1, t} 
  \sum_{\substack{e_{m+1},\ldots,e_{n'}: \\ \sum_{i=m+1}^{n'} e_i = t}} 
  {t \choose e_{m+1}, \ldots, e_{n'}}  \\
&= {n'-2 \choose \ell_1-1, \ldots, \ell_m-1, t} \left( n - \sum_{i=1}^m k_i \right)^{\!t} \, . 
\end{align*}
For each such tree, there are $\prod_{i=1}^m {k_i}^{\ell_i}$ ways to assign the edges of each $v_i$ to the $k_i$ vertices of $H_i$. Summing over the $\ell_i$ then gives
\begin{align*}
& \sum_{\substack{\ell_1,\ldots,\ell_m,t: \\ \sum_{i=1}^m (\ell_i-1) = n'-2-t}} 
  {n'-2 \choose \ell_1-1, \ldots, \ell_m-1, t} \left( n - \sum_{i=1}^m k_i \right)^{\!t} \prod_{i=1}^m {k_i}^{\ell_i} \\
&= \left( \,\prod_{i=1}^m k_i \right) \;\;\;\times\!\!\!
  \sum_{\substack{e_1,\ldots,e_m,t: \\ \sum_{i=1}^m e_i = n'-2-t}} 
  {n'-2 \choose e_1, \ldots, e_m, t} \left( n - \sum_{i=1}^m k_i \right)^{\!t} \prod_{i=1}^m {k_i}^{e_i} \\
&= \left( \,\prod_{i=1}^m k_i \right) n^{n'-2} \, .
\end{align*}

So far we have counted the number of ways to include $\calH$ in $G'$ where the vertices of $G'$ are labeled $1,\ldots,n'$ and the vertex $v_i$ corresponding to $H_i$ is labeled $i$ for each $1 \le i \le m$. To convert these to edge labels for $G'$, we first multiply by $n' (n'-1) (n'-2) \cdots (n'-m+1) = n'! / (n'-m)!$ so that the vertices of $G'$, including the $v_i$, can have any distinct labels.  We then declare the vertex $w$ labeled $n'$ the root, orient the edges of $G'$ away from it, and copy the label of edge vertex onto its incoming edge.  This makes $G'$ an edge-labeled graph rooted at $w$, so we divide by the $n'$ choices of $w$.  Finally, we multiply by the $2^{e'}$ possible directions on the edges of $G'$. This gives
\begin{align*}
&2^{e'} \left( \,\prod_{i=1}^m k_i \right) \frac{(n'-1)!}{(n'-m)!} \,n^{n'-2}  \\
&= 2^{e-\sum_{i=1}^m e_i}
\left( \,\prod_{i=1}^m k_i \right) 
\frac{(n+m-1-\sum_{i=1}^m k_i)!}{(n-\sum_{i=1}^m k_i)!} 
\,n^{n+m-2-\sum_{i=1}^m k_i} 
\end{align*}
and completes the proof.
\end{proof}

We now obtain formulas for $G_e^{(e-1)}$ and $G_e^{(e-2)}$.  We can do this using a kind of duality, where animals that are proper in $k < e$ dimensions contain certain small collections of animals which are proper in $k$ dimensions.

\begin{theorem}
  \label{thm:proper-e-1}
  The number of bond animals of size $e \ge 2$ that are proper in $e-1$ dimensions is
  \begin{equation}
    \label{eq:proper-e-1}
     G_e^{(e-1)} = 2^{e-2} (e-1) (2e-1) (e+1)^{e-3} \, .
  \end{equation}
\end{theorem}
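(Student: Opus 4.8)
The plan is to reduce the count to decorated trees and apply Lemma~\ref{lem:animals}. First I would pin down the structure of an animal of size $e$ that is proper in $e-1$ dimensions. Since it has $e$ edges but its edges span only $e-1$ axes, the pigeonhole principle forces exactly one axis to carry two edges while each of the other $e-2$ axes carries exactly one. Moreover the animal must be a tree: any cycle in $\mathbb{Z}^d$ traverses at least two distinct axes an even number of times, so a cycle would require at least two repeated axes, which is impossible here. Hence $n=e+1$, exactly as in Theorem~\ref{thm:proper-e}, and the entire task is to count trees with $e$ edges in which a single axis is ``doubled.''

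Next I would examine the doubled pair of parallel edges and show that the embedding is automatically self-avoiding. Two distinct vertices of the tree coincide in the lattice only if the signed edge-vectors along the unique tree-path between them cancel; since each non-repeated axis appears at most once on a simple path, cancellation forces the path to use only edges of the repeated axis. The repeated axis has just two edges, and a simple path using both exists only when they share a vertex. But two same-axis edges sharing a vertex are consecutive on a lattice line, so they form a straight two-edge path on three distinct points and their vectors sum to $\pm 2$ rather than to zero. Thus no collision ever occurs, and the animals split cleanly into a \emph{collinear} family (the two parallel edges adjacent) and a \emph{non-collinear} family (them vertex-disjoint).

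I would then count each family with Lemma~\ref{lem:animals}. For the collinear family I take $\calH=\{H\}$ with $H$ the straight two-edge path, which is proper in one dimension and has $k=3$ vertices and $2$ edges; substituting $n=e+1$ into~\eqref{eq:animals} gives $3\cdot 2^{e-2}(e+1)^{e-3}$ decorated trees. For the non-collinear family I take $\calH=\{H_1,H_2\}$ with $H_1,H_2$ single edges ($k_i=2$, $e_i=1$), and~\eqref{eq:animals} gives $4(e-2)\,2^{e-2}(e+1)^{e-3}$. In both cases the labels and directions of the contracted tree $G'$ play the role of the $e-2$ single axes and their orientations, just as the labels did in Theorem~\ref{thm:proper-e}.

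The delicate—and I expect hardest—step is converting these combinatorial counts into counts of fixed proper animals by restoring the doubled axis, the edge directions, and the correct symmetry factors. Choosing which of the $e-1$ coordinate axes carries the repeated pair contributes a factor $e-1$ in both families, the remaining $e-2$ axes being already summed over by $G'$. For the non-collinear family the two parallel edges are interchangeable, so I must divide by $2$; for the collinear family no such division is needed, because the straight segment has no independent orientation and its two ends are distinguished by their lattice positions, matching the two end-vertices of $H$ bijectively. Justifying this asymmetric bookkeeping—and checking it against a small case such as $G_3^{(2)}=20$—is the crux. Granting it, the two contributions are $(e-1)\cdot 3\cdot 2^{e-2}(e+1)^{e-3}$ and $\frac{e-1}{2}\cdot 4(e-2)\,2^{e-2}(e+1)^{e-3}$, whose sum is $(e-1)\bigl[3+2(e-2)\bigr]2^{e-2}(e+1)^{e-3}=2^{e-2}(e-1)(2e-1)(e+1)^{e-3}$, as claimed.
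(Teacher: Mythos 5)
Your proposal follows the same route as the paper's proof: the same structural reduction (a tree in which exactly one axis label is doubled), the same two-case split (doubled pair vertex-disjoint or adjacent), the same applications of Lemma~\ref{lem:animals} with $m=2$, $k_1=k_2=2$, $e_1=e_2=1$ and with $m=1$, $k_1=3$, $e_1=2$, the same factor $e-1$, the same division by $2$ in the disjoint case only, and the same arithmetic. Your explicit self-avoidance argument for the disjoint case (a cancellation would require the connecting tree path to consist solely of repeated-axis edges, impossible when those edges are disjoint) is sound, and is in fact a point the paper leaves implicit.

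The genuine gap is your claim that in the adjacent case ``no collision ever occurs.'' It is false. In the directed, edge-labeled tree picture on which Lemma~\ref{lem:animals} rests, an adjacent pair of edges carrying the same label $x$ admits four orientations, and two of them, $\bullet\stackrel{x}{\longrightarrow}\bullet\stackrel{x}{\longleftarrow}\bullet$ and $\bullet\stackrel{x}{\longleftarrow}\bullet\stackrel{x}{\longrightarrow}\bullet$, do collide: both outer vertices receive the same displacement from the shared vertex, so the two bonds overlap in the lattice. Your justification---that ``two same-axis edges sharing a vertex are consecutive on a lattice line,'' lying on three distinct points---is a fact about two \emph{distinct} lattice bonds, which is exactly what fails for these two orientations; the argument is circular, presupposing the non-degenerate embedding it is meant to establish. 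This exclusion is not cosmetic: it is the reason $\calH$ must be taken to be the \emph{directed} path $\bullet\stackrel{x}{\longrightarrow}\bullet\stackrel{x}{\longrightarrow}\bullet$, as in the paper, rather than an undirected segment, and it also resolves what you call the crux. The collinear family needs no division by $2$ not because ``the two ends are distinguished by their lattice positions'' (a bare two-edge segment has a reflection symmetry) but because the directed path has no nontrivial automorphism, whereas the disjoint pair of identical directed edges has the swap $H_1\leftrightarrow H_2$. Taken at face value, your reasoning leads to a wrong count: treating $H$ as undirected would force a division of~\eqref{eq:e-1-together} by $2$ (giving $G_3^{(2)}=14$ rather than $20$), while literally believing ``no collision ever occurs'' would make the inward- and outward-pointing orientation classes count as additional animals, an overcount. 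Your totals come out right only because the parameters you feed into Lemma~\ref{lem:animals} silently coincide with the paper's directed choice.
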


\begin{proof}
Since one can not form a loop with just two edges along the same axis, a bond animal of size $e$ that is proper in $e-1$ dimensions is still a directed, edge-labeled tree, but with only $e-1$ distinct edge labels so that one label $x$ appears on two edges.

We will treat the edges labeled $x$ as a collection of subgraphs $\calH$ and apply Lemma~\ref{lem:animals}.  There are two cases.  First, if these edges are vertex-disjoint, then $\calH$ is a pair of subgraphs each consisting of two vertices connected by a directed edge.  Lemma~\ref{lem:animals} with $m=2$, $k_1=k_2=2$, and $e_1=e_2=1$ gives
\[
2^{e-2} \times 4 (n-3) n^{n-4} 
= 2^e (e-2) (e+1)^{e-3}   \, .
\]
This is the number of ways to give the $e-2$ edges outside $\calH$ distinct labels. However, there are $e-1$ choices of the duplicate label $x$, so we need multiply this by $e-1$. Finally, Lemma~\ref{lem:animals} assumes that the two subgraphs $H_1$ and $H_2$ are distinguishable, but since they are identical we divide by $2$.  Thus the number of directed edge-labeled trees with one duplicate label, where the edges with that label do not share a vertex, is
\begin{equation}
\label{eq:e-1-separate}
2^{e-1} (e-1)(e-2) (e+1)^{e-3} \, .
\end{equation}
For $e=3$, for instance, this counts the $4$ animals \,\raisebox{-3pt}{\includegraphics[scale=0.12]{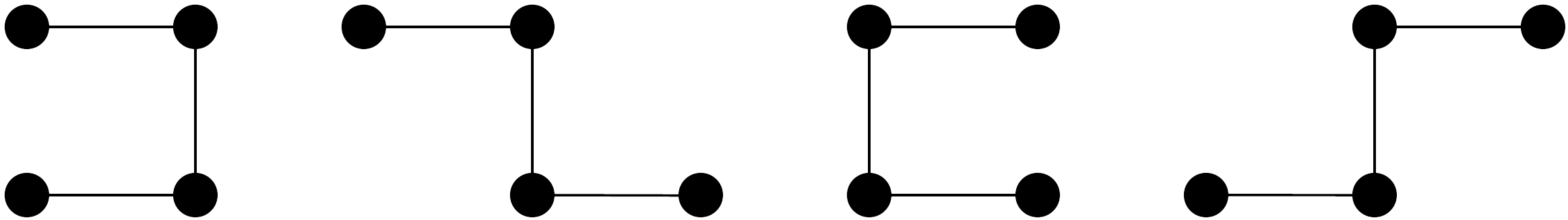}}\, which have two vertex-disjoint horizontal edges, plus another $4$ with two vertical edges.

In the case where the two edges labeled $x$ share a vertex, the orientations $\bullet\stackrel{x}{\longrightarrow}\bullet\stackrel{x}{\longleftarrow}\bullet$ and $\bullet\stackrel{x}{\longleftarrow}\bullet\stackrel{x}{\longrightarrow}\bullet$ are forbidden. This is because these two bonds would overlap in the lattice: the two outer vertices would have the same displacement from the center vertex. Thus we can take $\calH$ to be a single graph $H_1 = \bullet\stackrel{x}{\longrightarrow}\bullet\stackrel{x}{\longrightarrow}\bullet$.  Lemma~\ref{lem:animals} with $m=1$, $k_1=3$, and $e_1=2$, or equivalently Lemma~\ref{lem:forests} with $k=3$, then gives 
\[
2^{e-2} \times 3 n^{n-4} = 2^{e-2} \times 3 (e+1)^{e-3} \, . 
\]
We again multiply by the $e-1$ choices of $x$, giving
\begin{equation}
\label{eq:e-1-together}
2^{e-2} \times 3 (e-1) (e+1)^{e-3}
\end{equation}
for the number of directed edge-labeled trees with a duplicate pair of edges forming a directed path of length $2$.  For $e=3$, for instance, this counts the $6$ animals with two joined horizontal edges \,\raisebox{-3pt}{\includegraphics[scale=0.12]{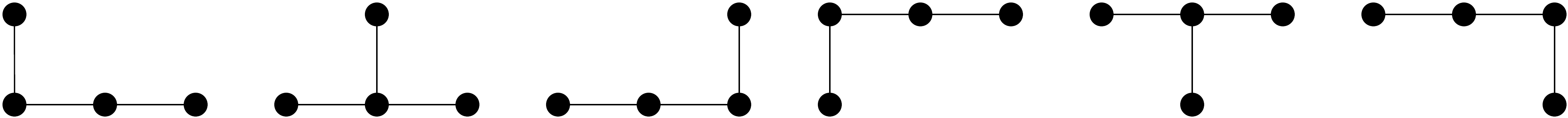}}\, and another $6$ with two joined vertical edges.  
Adding~\eqref{eq:e-1-separate} and~\eqref{eq:e-1-together} gives~\eqref{eq:proper-e-1}.
\end{proof}

\begin{theorem}
  \label{thm:proper-e-2}
  The number of bond animals of size $e$ that are proper in $e-2$
  dimensions is
  \begin{eqnarray}
    \label{eq:proper-e-2}
    \fl
     G_e^{(e-2)} &=& 2^{e-3} (e-2)(e-3) e^{e-5} \nonumber\\\fl
     & &+ \frac{1}{3} 2^{e-5} (e-2) \big(12e^4-20e^3-33e^2-46e+195\big) (e+1)^{e-5} \, .
  \end{eqnarray}
\end{theorem}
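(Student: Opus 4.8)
The plan is to mimic the duality idea used for Theorem~\ref{thm:proper-e-1}: an animal proper in $e-2$ dimensions uses $e$ edges but only $e-2$ axes, so the multiset of edge-axes has total excess $2$ over the number of distinct axes. There are exactly two ways to distribute this excess: a single axis carries three edges (pattern $(3)$), or two distinct axes each carry two edges (pattern $(2,2)$). I would first establish a cycle dichotomy. A cycle in the lattice must traverse each of its axes an even number of times with balanced signs, and the minimal such cycle is a unit square using two axes twice each. Hence a pattern-$(3)$ animal is always a tree (the lone repeated axis has odd multiplicity), while a pattern-$(2,2)$ animal is a tree unless its four repeated edges form a single unit square, in which case it has exactly one independent cycle and $n=e$ vertices. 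I would also check the converse: whenever the four repeated edges do \emph{not} form a square the embedding is injective, since any coincidence of two vertices would create a net-zero path in the tree, which (given the available multiplicities) forces both doubled axes to cancel along it and hence forces a square. This yields three disjoint, exhaustive families: (B1) the square plus trees ($n=e$); (A) pattern $(3)$ as a tree ($n=e+1$); and (B2) pattern $(2,2)$ as a forest that is not a square ($n=e+1$).

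I would count (B1) first, expecting it to produce the entire first term. Taking $\calH=\{H_1\}$ with $H_1$ the unit square ($k_1=4$, $e_1=4$) and $n=e$, Lemma~\ref{lem:animals} gives $2^{e-2}e^{e-5}$ animals for a fixed pair of doubled axes. The key point is that, because we do not quotient by rotations or reflections, the four corners of the square occupy geometrically distinct positions, so the labelled count of Lemma~\ref{lem:animals} already equals the number of animals and \emph{no} symmetry factor is needed; multiplying by the ${e-2 \choose 2}$ choices of the two square-axes yields exactly $2^{e-3}(e-2)(e-3)e^{e-5}$, the first term of~\eqref{eq:proper-e-2}.

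For family (A) I would use the fact that all three repeated edges lie on one axis, so any two of them that share a vertex must be collinear. The three edges therefore fall into exactly three shapes: three disjoint unit segments; one collinear length-$2$ segment plus a disjoint unit segment; and one collinear length-$3$ segment. Applying Lemma~\ref{lem:animals} to each (with $n=e+1$), multiplying by the $(e-2)$ choices of the tripled axis, and dividing by the symmetry factor of identical components ($3!$ in the first shape), I expect these contributions to collapse, after factoring out $2^{e-3}(e-2)(e+1)^{e-4}$, into a single quadratic in $e$.

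The real work, and the main obstacle, is family (B2). Here the four repeated edges form a forest whose connected components become the subgraphs $H_i$, and I must enumerate all admissible component-shapes: unit segments, collinear $xx$- and $yy$-pairs, the four orientations of an $xy$-corner, the various three-edge pieces, and the four-edge trees, while discarding the square (already counted in (B1)). For each shape I must fix $k_i$ and $e_i$, rule out the orientations that make two same-axis edges overlap (exactly the forbidden orientations in the proof of Theorem~\ref{thm:proper-e-1}; cross-axis corners have no such restriction), apply Lemma~\ref{lem:animals}, and divide by the automorphisms permuting identical components. The delicate bookkeeping is twofold: keeping the two doubled axes distinguishable so that the axis-choice factor is cleanly ${e-2 \choose 2}$ and only same-axis component-swaps contribute symmetry factors, and ensuring that no forest shape secretly embeds non-injectively---which the square dichotomy above guarantees. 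Summing these contributions, adding family (A), and factoring out $\frac{1}{3}2^{e-5}(e-2)(e+1)^{e-5}$ should reproduce the quartic $12e^4-20e^3-33e^2-46e+195$, giving the second term of~\eqref{eq:proper-e-2}.
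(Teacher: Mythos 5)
Your proposal follows essentially the same route as the paper's proof: the same dichotomy between one tripled label and two doubled labels, the same observation that the unit square is the only possible cycle (giving the first term of \eqref{eq:proper-e-2} via Lemma~\ref{lem:animals} with $n=e$, $m=1$, $k_1=4$, $e_1=4$, the ${e-2 \choose 2}$ axis factor, and no symmetry division), and the same plan of treating all remaining configurations as tree-type subgraph collections $\calH$ counted by Lemma~\ref{lem:animals} with axis-choice factors and divisions by permutations of identical components. The enumeration you defer is exactly what the paper carries out in Table~\ref{tab:e-2} --- your family (A) is its three rows below the double line, and your family (B2) comprises the cross/pentomino, pair, $1{+}3$, $1{+}1{+}2$, and $1{+}1{+}1{+}1$ rows above it --- and your structural claims, including the injectivity dichotomy and the collapse of family (A) to a quadratic in $e$, all agree with the paper's computation.
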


\begin{proof}
 We first note that there are two types of animals proper in $e-2$ dimensions: those where two distinct edge labels are duplicated, and those where one label appears three times.  In the first case, we multiply the results of Lemma~\ref{lem:animals} by the ${e-2 \choose 2}$ choices of the duplicate labels; in the second, we multiply by the $e-2$ choices of the triplicate one. 

We show the contribution from various subgraph collections $\calH$ in Table~\ref{tab:e-2}, keeping track of how many images they have under symmetry transformations. When two or more of the $H_i$ are identical, we also divide by the number of permutations in $S_m$ that preserve $\calH$.

The only real difference from the calculation in Theorem~\ref{thm:proper-e-1} is that now, for the first time, we have animals that contain a loop of size $4$ (the top row of Table~\ref{tab:e-2}).  In this case, we have $n=e$ rather than $n=e+1$, so this contribution is
\begin{equation}
\label{eq:square-loop}
{e-2 \choose 2} 2^{e-4} \times 4 n^{n-5} 
= 2^{e-3} (e-2)(e-3) e^{e-5} \, . 
\end{equation}
This is the first term in~\eqref{eq:proper-e-2}.  Adding the other contributions shown in Table~\ref{tab:e-2} and simplifying gives the second term.
\end{proof}

\begin{table}
\arraycolsep=2pt
\def\arraystretch{1.4}
$
\begin{array}{|c|c|c|} \hline
\calH & \text{parameters for Lemma~\ref{lem:animals}} & \text{contribution to $G_e^{(e-2)}$ for $e \ge 2$} \\ \hline 
\includegraphics[scale=0.12]{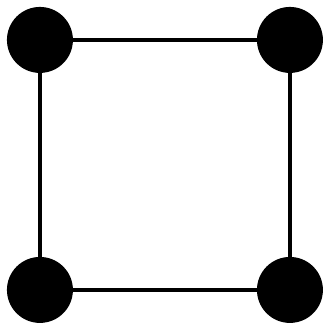} 
& m=1, k_1 = 4, e_1 = 4
& 2^{e-3} (e-2)(e-3) e^{e-5} 
\\ \hline
\begin{array}{c}
\raisebox{-8pt}{\includegraphics[scale=0.12]{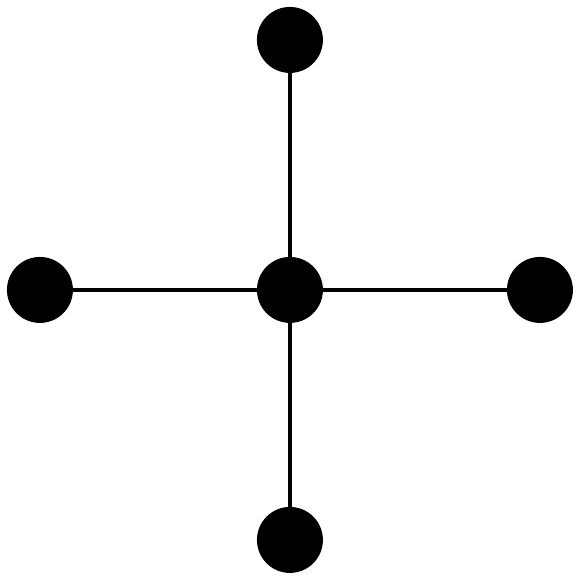}} \\
\raisebox{-8pt}{\includegraphics[scale=0.12]{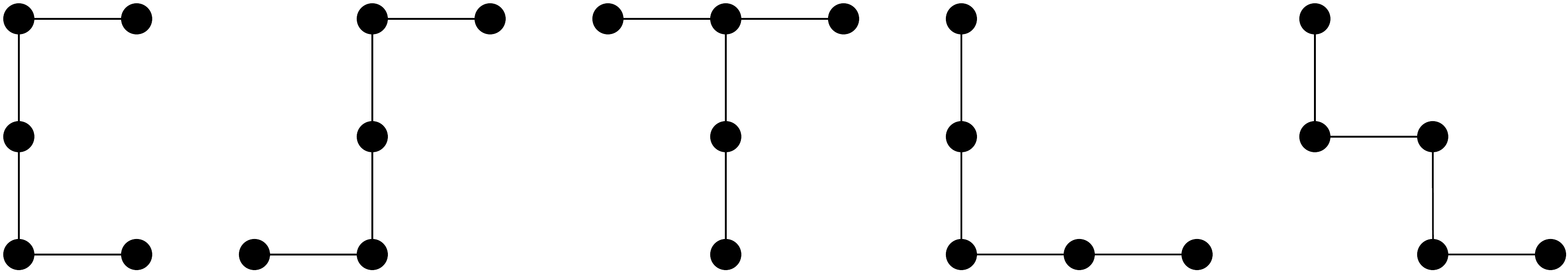}}\, \times 4 \\
\raisebox{-8pt}{\includegraphics[scale=0.12]{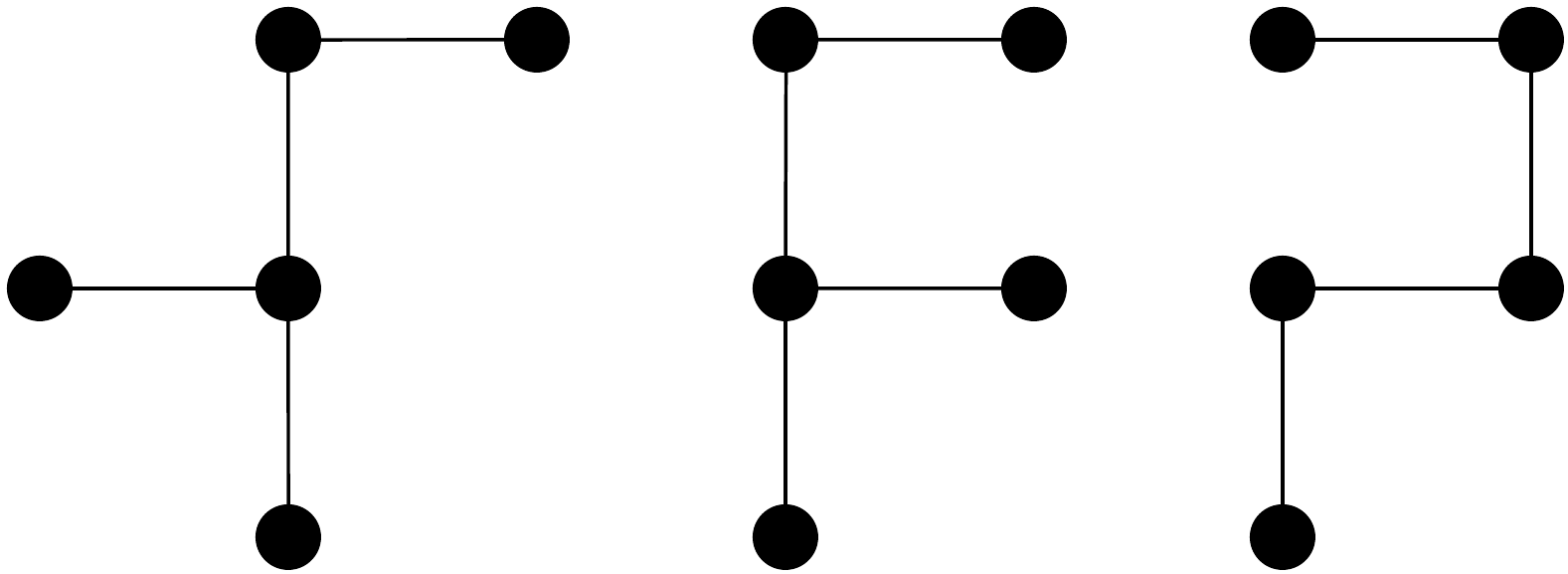}}\, \times 8 
\end{array}
& 
m=1, k_1 = 5, e_1 = 4 
& \begin{array}{c}
45 \times 5 \times \\ 
2^{e-5} (e-2)(e-3) (e+1)^{e-5} 
\end{array}
\\ \hline
\begin{array}{c} 
\raisebox{-12pt}{\includegraphics[scale=0.12]{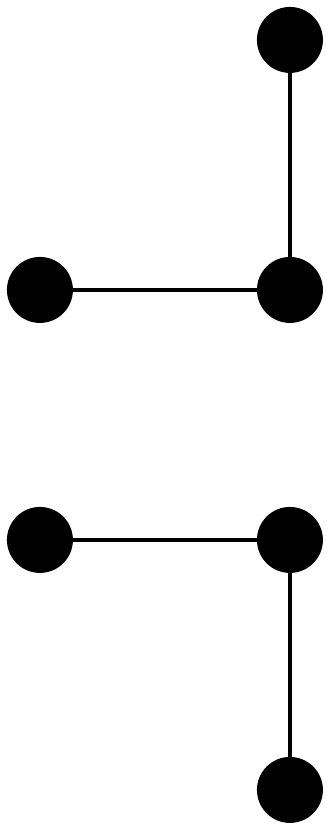}}\, \times 4, \;
\raisebox{-12pt}{\includegraphics[scale=0.12]{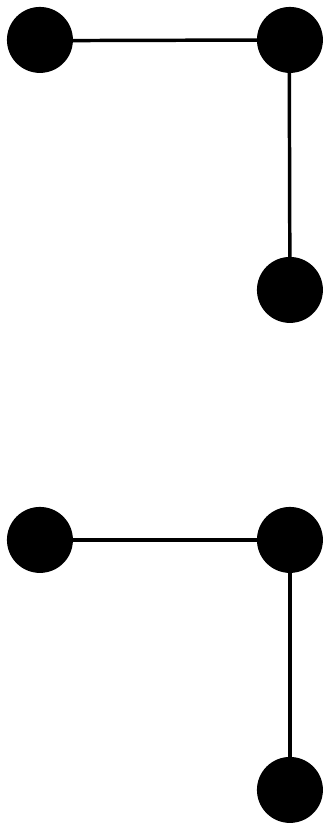}}\, \times 4 \times \frac{1}{2}, \;
\raisebox{-12pt}{\includegraphics[scale=0.12]{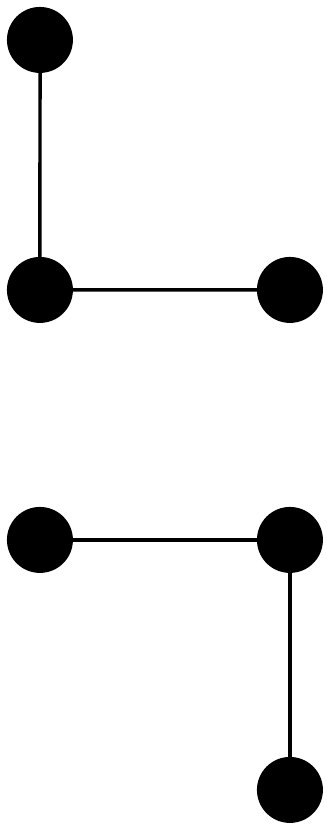}}\, \times 2, \;
\raisebox{-12pt}{\includegraphics[scale=0.12]{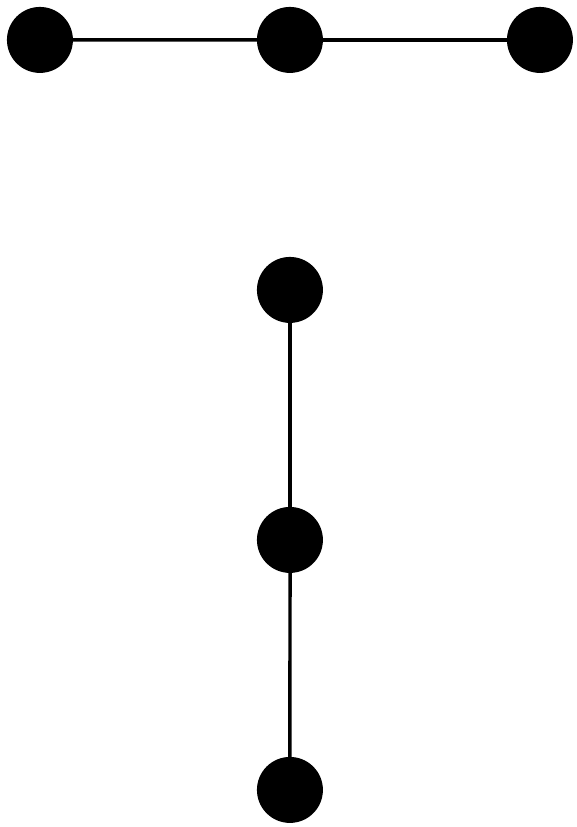}} 
\end{array}
& 
\begin{array}{c} 
m=2 \\ k_1=k_2=3 \\ e_1=e_2=2
\end{array}
& 
\begin{array}{c}
9 \times 9 \times (e-4) \times \\
2^{e-5}  (e-2)(e-3) (e+1)^{e-5} 
\end{array}
\\ \hline
\begin{array}{c} 
\raisebox{-8pt}{\includegraphics[scale=0.12]{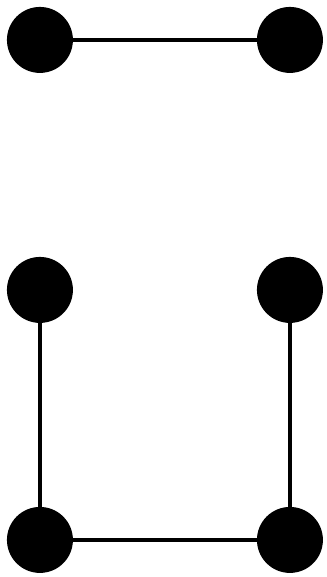}}\, \times 4, \; 
\raisebox{-8pt}{\includegraphics[scale=0.12]{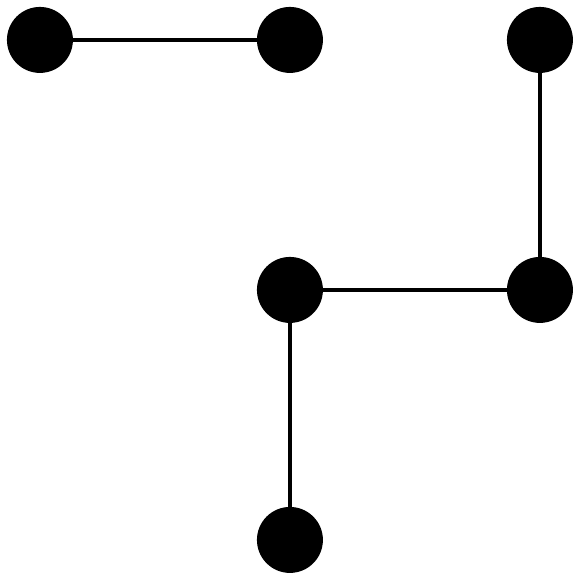}}\, \times 4, \;
\raisebox{-8pt}{\includegraphics[scale=0.12]{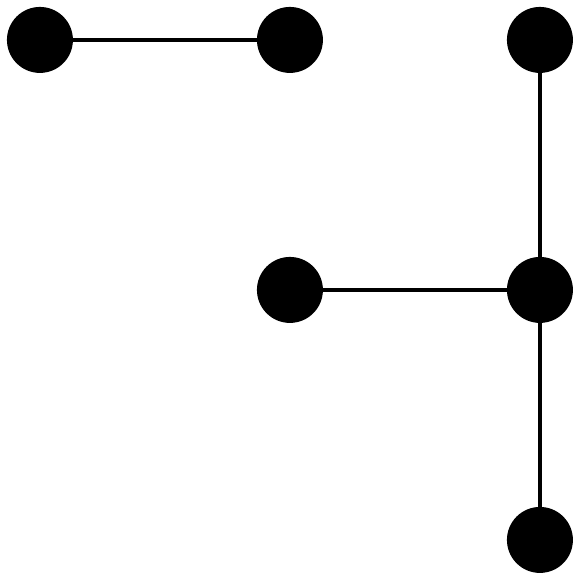}}\, \times 4, \;
\raisebox{-8pt}{\includegraphics[scale=0.12]{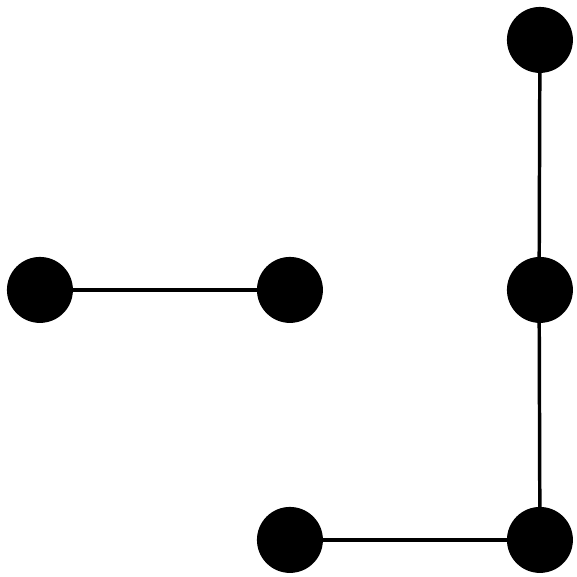}}\, \times 8
\end{array}
& 
\begin{array}{c}
m=2 \\ k_1=2, k_2=4 \\ e_1=1, e_2=3
\end{array}
& 
\begin{array}{c}
20 \times 8 \times (e-4) \times \\
2^{e-5} (e-2)(e-3) (e+1)^{e-5} 
\end{array}
\\ \hline
\begin{array}{c} 
\raisebox{-8pt}{\includegraphics[scale=0.12]{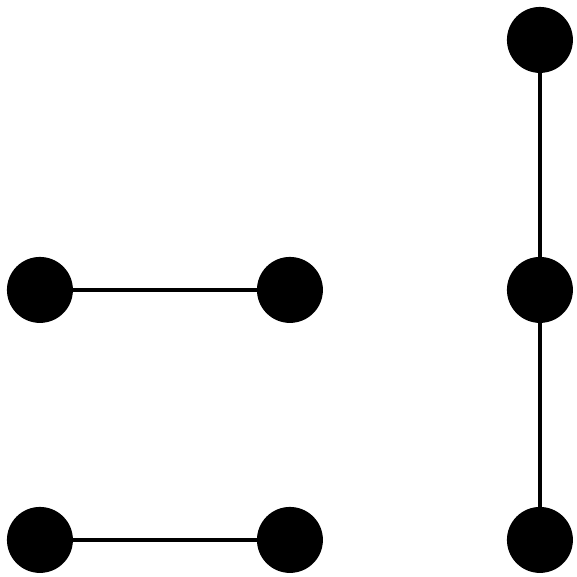}}\, \times 2 \times \frac{1}{2}, \; 
\raisebox{-8pt}{\includegraphics[scale=0.12]{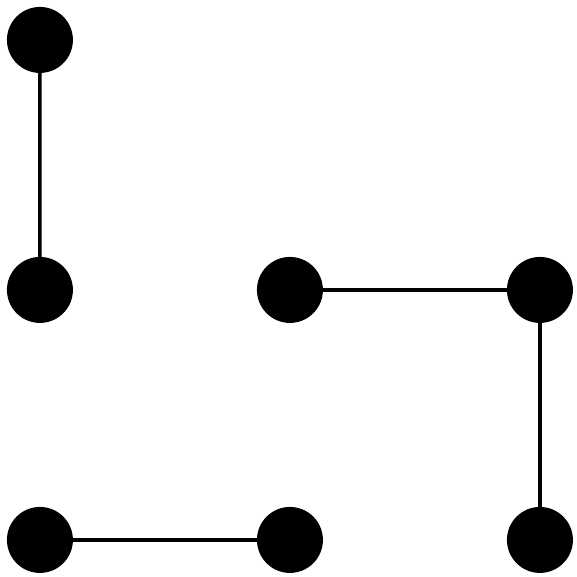}}\, \times 4
\end{array}
&
\begin{array}{c}
m=3 \\ k_1=k_2=2, k_3=3 \\ e_1=e_2=1, e_3=2
\end{array}
& 
\begin{array}{c}
5 \times 12 \times (e-4)(e-5) \times \\
2^{e-5} (e-2)(e-3) (e+1)^{e-5}
\end{array}
\\ \hline
\raisebox{-3pt}{\includegraphics[scale=0.12]{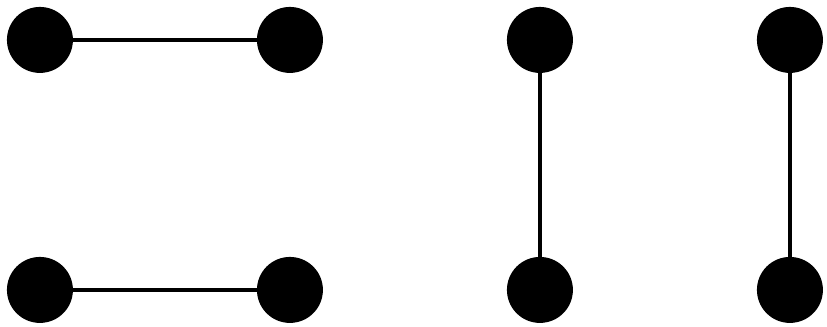}}\, \times \frac{1}{4} 
&
\begin{array}{c}
m=4 \\ k_1=k_2=k_3=k_4=2 \\ e_1=e_2=e_3=e_4=1
\end{array}
& 
\begin{array}{c}
\frac{1}{4} \times 16 \times (e-4)(e-5)(e-6) \times \\
2^{e-5} (e-2)(e-3) (e+1)^{e-5} 
\end{array}
\\ \hline \hline
\includegraphics[scale=0.12]{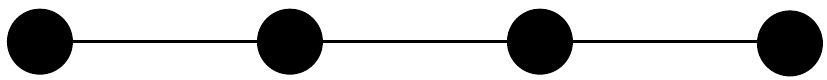} 
& m=1, k_1=4, e_1 = 3
& 
\begin{array}{c} 
4 \times \\ 
2^{e-3} (e-2) (e+1)^{e-4} 
\end{array}
\\ \hline
\includegraphics[scale=0.12]{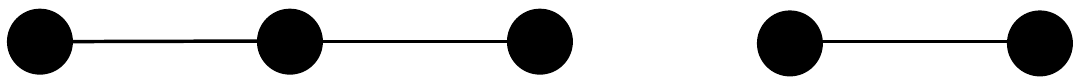} 
& 
\begin{array}{c}
m=2 \\ k_1=3, k_2=2 \\ e_1=2, e_2=1
\end{array}
& 
\begin{array}{c} 
6 (e-3) \times \\
2^{e-3} (e-2) (e+1)^{e-4} 
\end{array}
\\ \hline
\raisebox{1pt}{\includegraphics[scale=0.12]{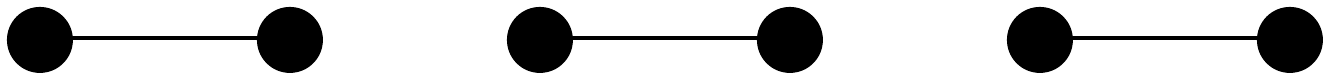}}\, \times \frac{1}{6}
& 
\begin{array}{c} 
m=3 \\ k_1=k_2=k_3=2 \\ e_1=e_2=e_3=1 
\end{array}
& 
\begin{array}{c}
\frac{1}{6} \times 8 \times (e-3)(e-4) \times \\
2^{e-3} (e-2) (e+1)^{e-4} 
\end{array}
\\ \hline
\end{array}
$
\caption{Contributions to $G_e^{(e-2)}$ from various subgraph collections $\calH$. Those above the double line come from labeled animals where two edge labels are duplicated; those below have one edge label which appears three times.}
\label{tab:e-2}
\end{table}


\subsection{Coefficients of perimeter polynomials}


In~\ref{sec:bond-polynomials} we give explicit formulas for $D_e(q)$ for all $d$ for $e \le 11$. As discussed above, we reduce our computation time by computing some of these terms analytically. As in~\eqref{eq:d-choose-k}, each $D_e$ has terms proportional to the ${d \choose k}$ ways that an animal proper in $k$ dimensions can appear. The following theorems compute the coefficient of ${d \choose e}$, ${d \choose e-1}$, and ${d \choose e-2}$.

\begin{theorem}
  \label{thm:coefficient-e}
  The coefficient of ${d \choose e}$ in the perimeter polynomial $D_e$ is
  \begin{equation}
    \label{eq:coefficient-e}
2^e\,(e+1)^{e-2} \,q^{2(e+1)d - 2e} \, .
  \end{equation}
\end{theorem}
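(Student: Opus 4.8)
The plan is to show that the coefficient of ${d \choose e}$ in $D_e(q)$ is determined entirely by the $k=e$ term in~\eqref{eq:def-perimeter-poly-bond}, i.e.\ by animals proper in exactly $e$ dimensions, and that all such animals share a single perimeter $t$ and a single vertex count $v$, so that the inner sum over $t$ and $v$ collapses to one monomial.

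First I would invoke Theorem~\ref{thm:proper-e}: every bond animal of size $e$ proper in $e$ dimensions is a tree with $v=e+1$ vertices, and there are $G_e^{(e)}=2^e(e+1)^{e-2}$ of them. As in that proof, each of the $e$ edges points along a distinct coordinate axis, so every axis is used exactly once. Fixing one vertex at the origin, the coordinate of any vertex $w$ in axis $i$ equals the signed displacement of the unique edge on axis $i$ when that edge lies on the tree-path from the origin to $w$, and $0$ otherwise. Hence for any two animal vertices $u$ and $w$, the displacement $w-u$ is $\pm 1$ in exactly those axes whose edges lie on the tree-path between $u$ and $w$, and $0$ elsewhere.

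The key consequence is that $w-u$ is a unit lattice vector if and only if $u$ and $w$ are tree-adjacent. Thus the $e+1$ vertices occupy distinct lattice sites, and no lattice edge joins two animal vertices except the $e$ animal edges themselves --- the embedded tree has no chords. Every perimeter edge in the spanned subspace is therefore incident to exactly one animal vertex, which fixes $t$ by an incidence count: the $e+1$ vertices are each incident to $2e$ lattice edges within the subspace, giving $2e(e+1)$ incidences; the $e$ animal edges use $2e$ of these (two endpoints each), and the remaining $2e(e+1)-2e=2e^2$ are perimeter edges, each counted once, so $t=2e^2$ for every such animal.

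Finally I would substitute $v=e+1$ and $t=2e^2$ into the exponent $t+2(d-e)v$ of~\eqref{eq:def-perimeter-poly-bond}, which simplifies to $2(e+1)d-2e$, and multiply by $G_e^{(e)}=2^e(e+1)^{e-2}$ to obtain~\eqref{eq:coefficient-e}. The main obstacle is the no-chord argument: one must confirm that the perimeter really is constant across all proper-in-$e$ animals rather than depending on their shape, and it is precisely the distinct-axis structure that forces this. Once that is in hand, the incidence count and the arithmetic are routine.
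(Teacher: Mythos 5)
Your proof is correct, and it shares the paper's skeleton---reduce to the $k=e$ term of \eqref{eq:def-perimeter-poly-bond}, take the count $G_e^{(e)} = 2^e(e+1)^{e-2}$ from Theorem~\ref{thm:proper-e}, and show that all proper-in-$e$ animals have a common perimeter---but the perimeter computation itself follows a genuinely different route. The paper argues by induction on $e$: a single vertex has perimeter $2d$, and each added edge removes one perimeter edge and creates $2d-1$ new ones (the new endpoint cannot be lattice-adjacent to any old vertex, since a cycle would force a repeated axis label), giving $2d + (2d-2)e = 2(e+1)d - 2e$ in one pass, directly in $d$ dimensions. You instead prove a global no-chord lemma---the displacement between two animal vertices is supported exactly on the axes of the tree path joining them, so two vertices are lattice-adjacent only if tree-adjacent---and then obtain the subspace perimeter $t = 2e^2$ by double-counting vertex--edge incidences, letting the remaining $2(d-e)(e+1)$ perimeter edges come from the embedding term in the definition \eqref{eq:def-perimeter-poly-bond}. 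Both arguments hinge on the same no-chord fact, justified differently (cycle parity versus explicit coordinates). Your version is non-inductive, cleanly separates $t$ from the $2(d-k)v$ term so that it plugs directly into the definition, and avoids the paper's slightly awkward base case of a size-zero animal; the paper's inductive bookkeeping, in exchange, is shorter and is exactly the form that gets reused in Theorems~\ref{thm:coefficient-e-1} and~\ref{thm:coefficient-e-2}, where some axis labels repeat and your distinct-axis displacement argument would no longer apply verbatim.
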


\begin{proof}
The coefficient of ${d \choose e}$ is determined by the bond animals of size $e$ that are proper in $e$ dimensions. In Theorem~\ref{thm:proper-e} we showed there are $G_e^{(e)} = 2^e\,(e+1)^{e-2}$ of these. We thus need to show that all these animals have the same perimeter
\begin{equation}
  \label{eq:perimeter-e}
  t_0 = 2(e+1)d - 2e \, .
\end{equation}
when embedded in a $d$ dimensional lattice with $d \geq e$. 

We will prove this by induction on $e$. As noted in the proof of Theorem~\ref{thm:proper-e}, an animal of size $e$ that is proper in dimension $e$ is a tree, and the labels on the edges are all distinct. For the base case, a bond animal of size $e=0$, i.e., consisting of a single vertex and no edges, has perimeter $2d$. Now suppose we increment $e$, connecting some vertex $u$ in the animal to a new vertex $v$. If $v$ were adjacent in the lattice to any vertex $w$ in the animal other than $u$, adding the edge $(v,w)$ would create a loop, but this would imply that some pair of labels are repeated. Thus the induction step replaces the perimeter edge $(u,v)$, and creates $2d-1$ new perimeter edges incident to $v$.  By induction, the perimeter is
\[
t_0 = 2d + (2d-2) e \, , 
\]
which equals~\eqref{eq:perimeter-e}.
%
\end{proof}

\begin{theorem}
  \label{thm:coefficient-e-1}
The coefficient of ${d \choose e-1}$ in the perimeter polynomial $D_e$ is
  \begin{equation}
    \label{eq:coefficient-e-1-structure}
    q^{2(e+1)d - 2e}\left[\alpha_0 + \alpha_1q^{-1}\right] \, .
  \end{equation}
  with
  \begin{equation}
    \label{eq:coefficient-e-1-a0}
    \alpha_0 = 2^{e-2} (e-1) (2e^2-3e+7) (e+1)^{e-4} 
  \end{equation}
  and
  \begin{equation}
    \label{eq:coefficient-e-1-a1}
    \alpha_1 = 2^e (e-1)(e-2) (e+1)^{e-4}  \, .
  \end{equation}
\end{theorem}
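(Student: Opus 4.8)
The plan is to group the bond animals proper in $e-1$ dimensions by their number of \emph{chords}, where a chord means a lattice edge that joins two vertices of the animal but is not itself occupied. As in the proof of Theorem~\ref{thm:proper-e-1}, such an animal has only one repeated edge label, so it contains no $4$-cycle and is a tree with $v=e+1$ vertices. The first step is to compute the exponent of $q$, namely the total perimeter $t+2(d-k)v$, by a handshake count: the $e+1$ vertices have $2d(e+1)$ incident lattice edges in total, of which $2e$ are occupied, leaving $2d(e+1)-2e$ unoccupied incident edge-slots. A perimeter edge with a single endpoint in the animal is counted once and a chord is counted twice, so the total perimeter is $2(e+1)d-2e-C$, where $C$ is the number of chords. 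Hence the coefficient of ${d \choose e-1}$ equals $q^{2(e+1)d-2e}\sum_{c\ge0}\alpha_c\,q^{-c}$, where $\alpha_c$ counts the proper-in-$(e-1)$ animals with exactly $c$ chords; this is consistent with Theorem~\ref{thm:coefficient-e}, where proper-in-$e$ animals have $C=0$.

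The second step is to prove that $C\le1$, which collapses the sum to the two terms in~\eqref{eq:coefficient-e-1-structure}. A chord between two non-adjacent vertices corresponds to a tree path whose signed edge vectors sum to a single unit vector $\pm\hat e_j$. Since every axis except the doubled label $x$ appears on exactly one edge, the only vectors that can cancel are those of the two $x$-edges; such a path must therefore consist of exactly one $x$-edge, one edge along a different axis, and the second $x$-edge, with the two $x$-edges anti-parallel. Because there is only one pair of $x$-edges, at most one such path exists, so there is at most one chord.

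The third step is to evaluate $\alpha_1$ and obtain $\alpha_0$ by subtraction. The animals with $C=1$ are exactly those in which the two anti-parallel $x$-edges are joined by a single edge of a second axis $y$, forming three sides of a unit square whose fourth side is the chord. I would count these with Lemma~\ref{lem:animals}, taking $\calH$ to be this single ``$C$'' with $m=1$, $k_1=4$, $e_1=3$, and $n=e+1$, which contributes $2^{e-1}(e+1)^{e-4}$ ways to grow and label the remaining $e-3$ edges. Multiplying by the $(e-1)(e-2)$ ordered choices of the doubled axis $x$ and connector axis $y$, and by the geometric factor for the orientations of the $C$ while dividing by $2$ for the reversal symmetry of the labeled path (as in the ``divide by $2$'' step of Theorem~\ref{thm:proper-e-1}), gives $\alpha_1=2^e(e-1)(e-2)(e+1)^{e-4}$, which is~\eqref{eq:coefficient-e-1-a1}. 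Subtracting from the total $G_e^{(e-1)}$ of Theorem~\ref{thm:proper-e-1} then yields \[ \alpha_0=G_e^{(e-1)}-\alpha_1=2^{e-2}(e-1)(e+1)^{e-4}\big[(2e-1)(e+1)-4(e-2)\big]=2^{e-2}(e-1)(2e^2-3e+7)(e+1)^{e-4}, \] which is~\eqref{eq:coefficient-e-1-a0}.

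The main obstacle will be the constant factors in the $\alpha_1$ count: I must fix the geometric and symmetry conventions for the ``$C$''-subgraph so that each fixed lattice animal is counted exactly once, in agreement with how Lemma~\ref{lem:animals} and Table~\ref{tab:e-2} treat orientations and automorphisms. Two checks guard against errors: the sum $\alpha_0+\alpha_1$ must reproduce $G_e^{(e-1)}$, and direct enumeration at small sizes must match (for example $\alpha_0=16$ and $\alpha_1=4$ at $e=3$).
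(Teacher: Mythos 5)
Your proposal is correct and, at its core, follows the same route as the paper's proof: identify the $q^{-1}$ animals as exactly those containing the three-sides-of-a-unit-square ``C'', count them via Lemma~\ref{lem:animals} with $m=1$, $k_1=4$, $e_1=3$, and obtain $\alpha_0$ by subtraction from the $G_e^{(e-1)}$ of Theorem~\ref{thm:proper-e-1}. Where you genuinely differ---and improve on the paper---is in establishing the structure~\eqref{eq:coefficient-e-1-structure}: the paper computes perimeters by an induction on pendant edges and essentially asserts that the perimeter-deficient animals are those containing a C, whereas your handshake identity (total perimeter $=2(e+1)d-2e-C$ with $C$ the number of chords) together with the edge-vector cancellation argument showing $C\le 1$ makes both the exponent and the classification rigorous in one stroke; it also explains the $q^{-2}$, $q^{-2d}$ pattern one meets later in Theorem~\ref{thm:coefficient-e-2}. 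One caution on the constant in $\alpha_1$: your ``geometric factor, divided by $2$ for reversal symmetry'' gives the right answer only if the geometric factor counts the \emph{four} directed realizations of the C per ordered pair $(x,y)$ (two sign choices for the doubled axis times two for the connector), which collapse in pairs under path reversal to two distinct lattice shapes per ordered pair; if instead you count the two lattice shapes and still divide by $2$, you lose a factor of $2$. The paper's bookkeeping---$4$ rotations per \emph{unordered} pair times ${e-1 \choose 2}$, with no division---avoids this trap, and note that the division by $2$ in Theorem~\ref{thm:proper-e-1} has a different origin (Lemma~\ref{lem:animals} treating two identical, disjoint subgraphs $H_1,H_2$ as distinguishable), so that analogy should not be leaned on. Your proposed checks ($\alpha_0+\alpha_1=G_e^{(e-1)}$, and $\alpha_0=16$, $\alpha_1=4$ at $e=3$, matching the coefficient of ${d\choose 2}$ in $D_3$) do pin the constant down correctly.
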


\begin{proof}
The coefficient of ${d \choose e-1}$ is determined by the bond animals of size $e$ that are proper in $e-1$ dimensions.  These animals are still trees. However, while $\alpha_0$ of them have perimeter $t_0$ given by~\eqref{eq:perimeter-e}, in $\alpha_1$ of them a pair of vertices share a perimeter edge, reducing the perimeter by $1$.  This gives rise to the form of~\eqref{eq:coefficient-e-1-structure}.  Since $\alpha_0+\alpha_1 = G_e^{(e-1)}$ and we already know $G_e^{(e-1)}$ from Theorem~\ref{thm:proper-e-1}, it suffices to compute $\alpha_1$. 

The animals where two vertices share a perimeter edge, i.e., where both endpoints of a perimeter edge are elements of the animal's vertex set, are those that contain one of the $4$ rotations of the animal \,\raisebox{-3pt}{\includegraphics[scale=0.12]{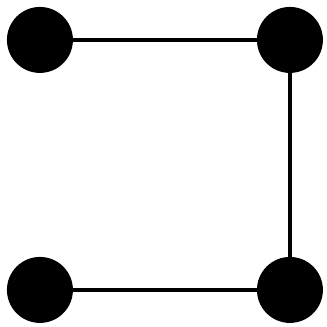}}\,. If this animal is embedded in $d$ dimensions, it has perimeter $9 + 8(d-2)$.  Since the other $e-3$ edge labels are distinct, the inductive argument of Theorem~\ref{thm:coefficient-e} gives a perimeter
\[
t = 9 + 8(d-2) + (2d-2)(e-3) = 2d-1 + (2d-2) e = t_0 - 1 \, .
\]
Applying Lemma~\ref{lem:animals} with $m=1$, $k_1=4$, $e_1=3$ and multiplying by the ${e-1 \choose 2}$ choices of labels corresponding to these two dimensions gives
\[
\alpha_1 
= 4 \times 2^{e-3} \times 4 n^{n-5} {e-1 \choose 2} 
= 2^e (e-1) (e-2) (e+1)^{e-4} \, .
\]
This proves~\eqref{eq:coefficient-e-1-a1}, and subtracting $\alpha_1$ from the expression~\eqref{eq:proper-e-1} for $G_e^{(e-1)}$ gives $\alpha_0$ in~\eqref{eq:coefficient-e-1-a0}.

Note that these animals form a subset of those counted by~\eqref{eq:e-1-separate}, where the edges with the duplicate label do not share a vertex.  Those containing the animal \,\raisebox{-3pt}{\includegraphics[scale=0.12]{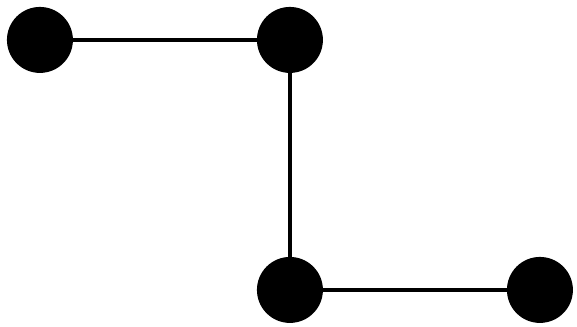}}\, have perimeter $t_0$, as do those counted by~\eqref{eq:e-1-together} which contain \,\raisebox{2pt}{\includegraphics[scale=0.12]{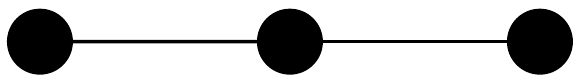}}\,.
\end{proof}

\begin{theorem}
\label{thm:coefficient-e-2}
  For $d \ge 2$ and $e \ge 2$, the coefficient of ${d \choose e-2}$ in the perimeter polynomial $D_e$ is
  \begin{equation}
    \label{eq:coefficient-e-2-structure}
    q^{2(e+1) d - 2e} \left( \beta_0 + \beta_1 q^{-1} + \beta_2 q^{-2} + \beta_{2d} q^{-2d} \right) 
  \end{equation}
  with
  \begin{eqnarray}
    \label{eq:coefficient-e-2-a0}
    \beta_0 &= \frac{1}{3} \,2^{e-5} (e-2) \nonumber \\
    & \quad \times \left( 12 e^5 - 56 e^4 + 115 e^3 - 115 e^2 + 185 e - 237 \right) (e+1)^{e-6}  \, ,
  \end{eqnarray}
  \begin{equation}
    \label{eq:coefficient-e-2-a1}
    \beta_1 = 2^{e-2} (e-2) (e-3) \left( 2 e^3 - 3 e^2 -6 e + 44\right) (e+1)^{e-6} \, ,
  \end{equation}
  \begin{equation}
    \label{eq:coefficient-e-2-a2}
    \beta_2 = 2^{e-3} (e-2)(e-3)(e-4) (4e+25) \,(e+1)^{e-6} \, ,
  \end{equation}
  and
  \begin{equation}
    \label{eq:coefficient-e-2-a3}
    \beta_{2d} = 2^{e-3} (e-2)(e-3)\,e^{e-5} \, .
  \end{equation}
\end{theorem}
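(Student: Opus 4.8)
The plan is to determine, for each bond animal of size $e$ proper in $e-2$ dimensions, its perimeter when embedded in a $d$-dimensional lattice, and then to count the animals of each perimeter using Lemma~\ref{lem:animals}. Such an animal has $e$ edges but spans only $e-2$ axes, so its edge-axes form a multiset in which either one axis occurs three times or two axes each occur twice---the two cases already separated in Theorem~\ref{thm:proper-e-2} and Table~\ref{tab:e-2}. My first step is to show that only four perimeters can occur, matching the four powers of $q$ in~\eqref{eq:coefficient-e-2-structure}. Recall from the proof of Theorem~\ref{thm:coefficient-e} that any tree animal with $e$ edges has $e+1$ vertices and hence $2d(e+1)-2e = t_0$ perimeter slots, where $t_0 = 2(e+1)d-2e$ as in~\eqref{eq:perimeter-e}; a \emph{shared perimeter edge}, i.e.\ a pair of animal vertices at lattice distance $1$ not joined by an animal edge, fills two slots but is counted once, lowering the perimeter by exactly $1$. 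Thus a tree animal has perimeter $t_0 - s$, where $s$ is its number of shared perimeter edges.

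Second, I would bound $s$ and isolate the loop case. A shared perimeter edge is produced by a \emph{C-shaped} fold: two parallel edges on one duplicated axis, joined on one side by a third edge, so that their free endpoints differ by a unit vector. Each fold consumes a duplicated axis, and one checks that a single duplicate cannot yield two folds within our excess budget, so $s \le 2$, giving the powers $q^0,q^{-1},q^{-2}$. Since the displacements around a cycle must cancel, every lattice cycle uses each axis an even number of times; hence a triplicated axis cannot close, and with only two duplicated axes the unique cycle is the $4$-cycle built from both of them. Such a square-loop animal has $v=e$ rather than $e+1$ vertices, and because its remaining $e-4$ edges lie on fresh single axes---which cannot fold back---its perimeter is exactly $t_0 - 2d$. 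This accounts for $\beta_{2d}\,q^{-2d}$ and excludes any other negative power of $q$.

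Third comes the counting. I would obtain $\beta_{2d}$ directly by taking $\calH$ to be the square (top row of Table~\ref{tab:e-2}): Lemma~\ref{lem:animals} with $m=1$, $k_1=4$, $e_1=4$ and $n=e$ gives $2^{e-2}e^{e-5}$, which multiplied by the ${e-2 \choose 2}$ choices of the two duplicated axes yields~\eqref{eq:coefficient-e-2-a3}. For $\beta_2$ I would enumerate the two-fold animals---two vertex-disjoint C-shapes, two C-shapes sharing a vertex, and the double fold produced by a triplicated axis---treating the folded part as $\calH$ and multiplying by the rotation, axis-choice and $S_m$-symmetry factors, exactly as $\alpha_1$ was computed in Theorem~\ref{thm:coefficient-e-1}. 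For $\beta_1$ I would count animals with a single C-shaped fold, the remaining duplicate (or triplicate) being arranged so as not to fold. Finally, rather than enumerate the unfolded animals, I would recover $\beta_0$ by subtraction, $\beta_0 = G_e^{(e-2)} - \beta_1 - \beta_2 - \beta_{2d}$, using $G_e^{(e-2)}$ from Theorem~\ref{thm:proper-e-2}; this is the analogue of recovering $\alpha_0$ in Theorem~\ref{thm:coefficient-e-1}.

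The main obstacle is the geometric bookkeeping behind $\beta_1$ and especially $\beta_2$. I must verify that the list of folded configurations is complete, correctly classify each shape in Table~\ref{tab:e-2} by its number of shared perimeter edges---a single row may contribute to more than one $\beta$---and get every multiplicity right: the rotations and reflections of each $\calH$, the binomial factors for choosing which axes are duplicated or triplicated, and the divisions by the symmetries in $S_m$ when several $H_i$ coincide. The most delicate sub-cases are those in which the two folds share a vertex, or share the middle edge of a triplicated axis, where one must avoid double-counting against the disjoint case and must be sure not to re-include any animal already counted in $\beta_{2d}$.
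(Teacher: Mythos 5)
Your skeleton is the paper's own: the square loop accounts for $\beta_{2d}$ (your Lemma~\ref{lem:animals} computation $2^{e-4}\times 4\, e^{e-5}\times{e-2\choose 2}$ is exactly the paper's~\eqref{eq:square-loop}), tree animals have perimeter $t_0-s$ with $s$ the number of shared perimeter edges, $\beta_1$ and $\beta_2$ are to be obtained by enumerating folded configurations with Lemma~\ref{lem:animals}, and $\beta_0$ comes by subtracting from $G_e^{(e-2)}$ of Theorem~\ref{thm:proper-e-2}. The genuine gap is your structural claim that ``a shared perimeter edge is produced by a C-shaped fold,'' i.e.\ by two parallel edges on one duplicated axis joined by a third edge. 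That is false, and everything downstream of it inherits the error. Consider the $5$-edge staircase with edges $(0,0,0)-(1,0,0)$, $(1,0,0)-(1,1,0)$, $(1,1,0)-(1,1,1)$, $(1,1,1)-(0,1,1)$, $(0,1,1)-(0,0,1)$: axes $x$ and $y$ are each duplicated and $z$ appears once, so it is proper in $e-2=3$ dimensions. Its endpoints $(0,0,0)$ and $(0,0,1)$ form a shared perimeter edge, yet the animal contains no C-shape at all: the pair is closed by a tree path of length $5$ that uses one edge of each duplicated axis on either side of the $z$ connector. In general a shared perimeter edge corresponds to a tree path of odd length at least $3$ whose displacements cancel to a unit vector, and with excess $2$ both lengths $3$ and $5$ occur; your assertion that ``each fold consumes a duplicated axis'' also fails here, since this single shared edge consumes both.

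Consequently your case lists---for $\beta_2$ ``two disjoint C's, two C's sharing a vertex, or a triplicated double fold,'' and for $\beta_1$ ``exactly one C-fold''---omit whole families of animals. These are precisely the entries in the paper's Tables~\ref{tab:beta-1} and~\ref{tab:beta-2} that involve three or four distinct labels beyond the repeated ones: the S-shaped and staircase connections of two elbows (second row of Table~\ref{tab:beta-1}), the C-and-domino animals in which one shared edge comes from a C and the other from a length-$5$ path, and the elbow-domino family that needs a fourth distinct axis. These longer configurations are also what force the inclusion--exclusion corrections in Table~\ref{tab:beta-1}, where subgraphs already counted toward $\beta_2$ must be subtracted, sometimes with multiplicity $2$ because they can be assembled in two ways---a complication your plan does not anticipate. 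So while your route is the paper's in outline, and your $\beta_{2d}$ and $\beta_0$ steps are sound, executing the plan as written would yield incorrect $\beta_1$ and $\beta_2$; the classification of how shared perimeter edges arise must be repaired (length-$3$ \emph{and} length-$5$ closures, for both the two-duplicate and the triplicate cases) before the bookkeeping you rightly flag as the main obstacle can even be set up.
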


\begin{proof}
The coefficient of ${d \choose e-2}$ is determined by the bond animals of size $e$ that are proper in $e-2$ dimensions, and $\beta_j$ is the number of these where the perimeter is $t_0-j$.  We start with animals that contain a loop \,\raisebox{-3pt}{\includegraphics[scale=0.12]{square-loop}}\,.  This loop on its own has perimeter $8+8(d-2) = 8(d-1)$.  Again using induction for the $e-4$ additional edges, an animal containing it has perimeter
\[
8(d-1) + (2d-2) (e-4) = (2d-2) e = t_0 - 2d \, ,
\]
and we already computed the number $\beta_{2d}$ of these animals in~\eqref{eq:square-loop}.

The bond animals that contribute to $\beta_0$, $\beta_1$ and $\beta_2$ are trees.  As in Theorem~\ref{thm:proper-e-2}, these have two duplicate edge labels or one triplicate one. We start with $\beta_2$, where two perimeter edges are shared, i.e., both their endpoints are in the animal's vertex set.  None of the subgraph collections $\calH$ shown in Table~\ref{tab:e-2} have this property on their own, but for several of them their subgraphs $H_i$ can be connected to produce shared perimeter edges. 

For instance, in the top row of Table~\ref{tab:beta-2} we see how two elbow-shaped animals can be connected along a third dimension, creating two shared perimeter edges parallel to the connecting edge. Multiplying by the total number of images under symmetry and by the ${e-2 \choose 3}$ choices of these three axes among the $e-2$ distinct edge labels, these combined animals contribute
\[
36 \times 6 \times 2^{e-5} {e-2 \choose 3} n^{n-7}
= 36 \times 2^{e-5} (e-2)(e-3)(e-4) (e+1)^{e-6}
\]
to $\beta_2$.  Adding the contributions from Table~\ref{tab:beta-2} and simplifying gives~$\beta_2$ in~\eqref{eq:coefficient-e-2-a2}.

\begin{table}
\arraycolsep=2pt
\def\arraystretch{1.4}
$
\begin{array}{|c|c|c|} \hline
\calH & \text{parameters for Lemma~\ref{lem:animals}} & \text{contribution to $\beta_2$ for $e \ge 2$} \\ \hline 
\raisebox{-7pt}{\includegraphics[scale=0.12]{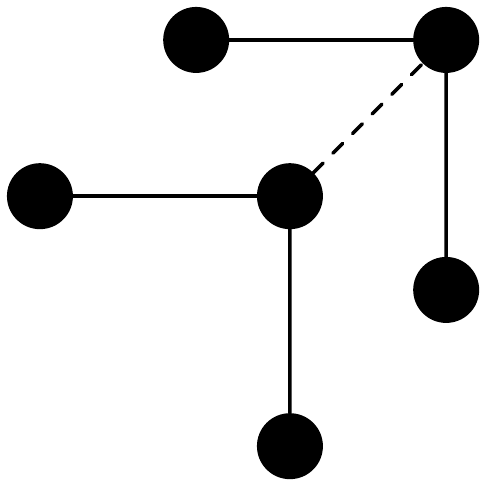}}\, \times 12 , \; 
\raisebox{-7pt}{\includegraphics[scale=0.12]{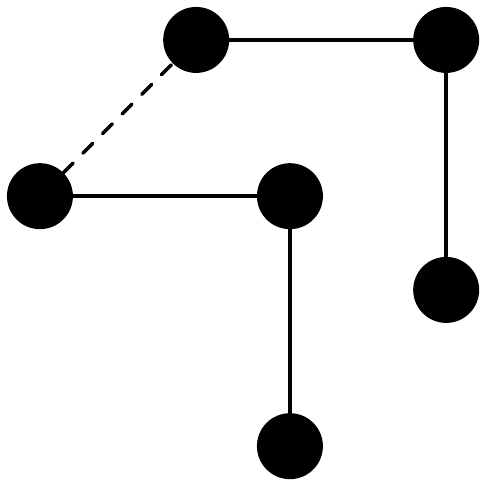}}\, \times 24 , \;  
& 
m=1, k_1 = 6, e_1 = 5
& 
\begin{array}{c}
36 \times 6 \times \\
2^{e-5} {e-2 \choose 3} (e+1)^{e-6}
\end{array}
\\ \hline
\raisebox{-4pt}{\includegraphics[scale=0.12]{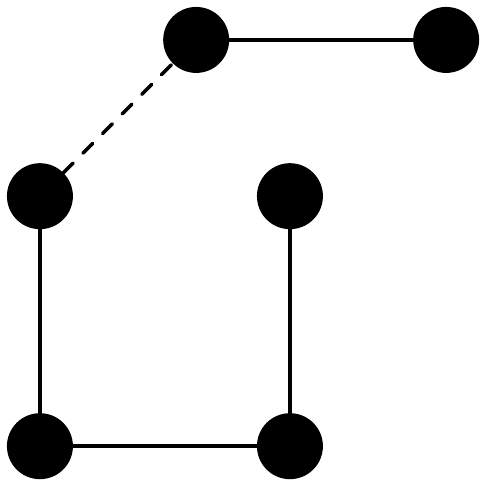}}\, , \; 
\raisebox{-4pt}{\includegraphics[scale=0.12]{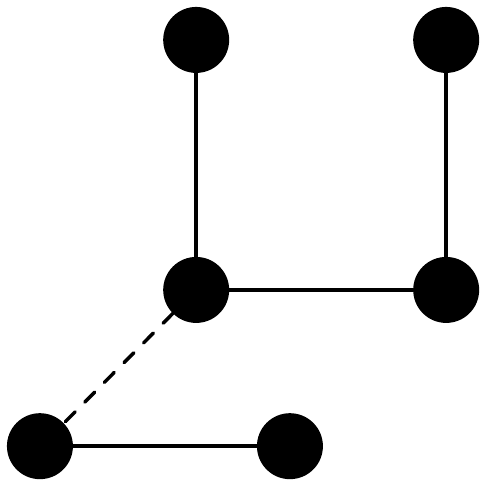}}\, \times 48 
& 
m=1, k_1 = 6, e_1 = 5
& 
\begin{array}{c}
96 \times 6 \times \\
2^{e-5} {e-2 \choose 3} (e+1)^{e-6}
\end{array}
\\ \hline
\raisebox{-4pt}{\includegraphics[scale=0.12]{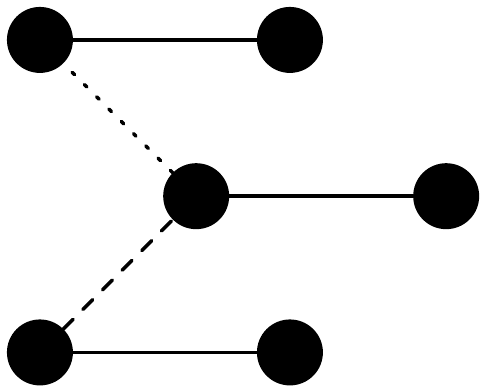}}\, , \; 
\raisebox{-4pt}{\includegraphics[scale=0.12]{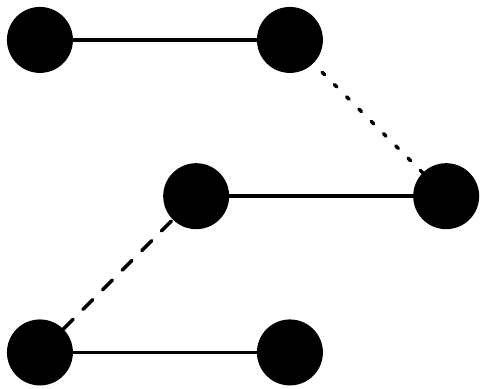}}\, \times 24 
& 
m=1, k_1 = 6, e_1 = 5
& 
\begin{array}{c}
48 \times 6 \times \\
2^{e-5} {e-2 \choose 3} (e+1)^{e-6}
\end{array}
\\ \hline \hline
\raisebox{-6pt}{\includegraphics[scale=0.12]{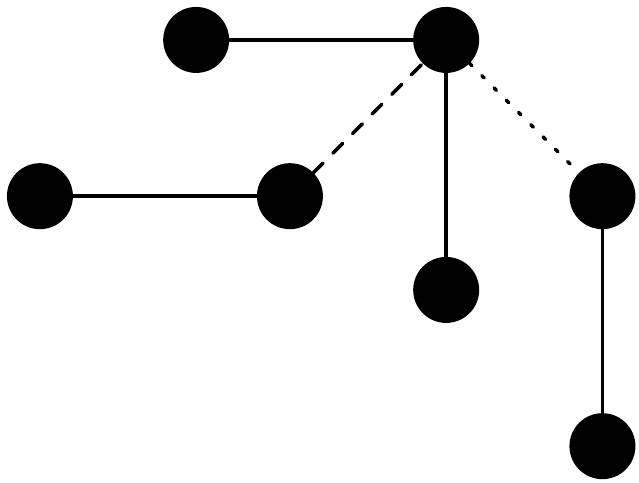}}\, \times 192 , \; 
\raisebox{-6pt}{\includegraphics[scale=0.12]{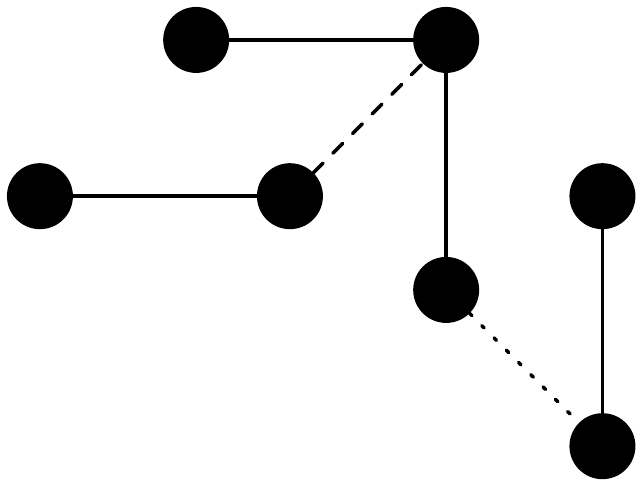}}\, \times 384 , \; 
\raisebox{-6pt}{\includegraphics[scale=0.12]{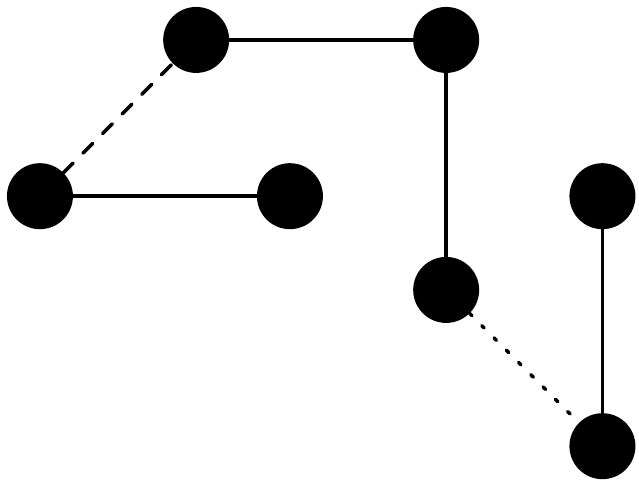}}\, \times 192 
& 
m=1, k_1 = 7, e_1 = 6 
& 
\begin{array}{c}
768 \times 7 \times \\ 
2^{e-6} {e-2 \choose 4} (e+1)^{e-7} 
\end{array}
\\ \hline
\raisebox{-4pt}{\includegraphics[scale=0.12]{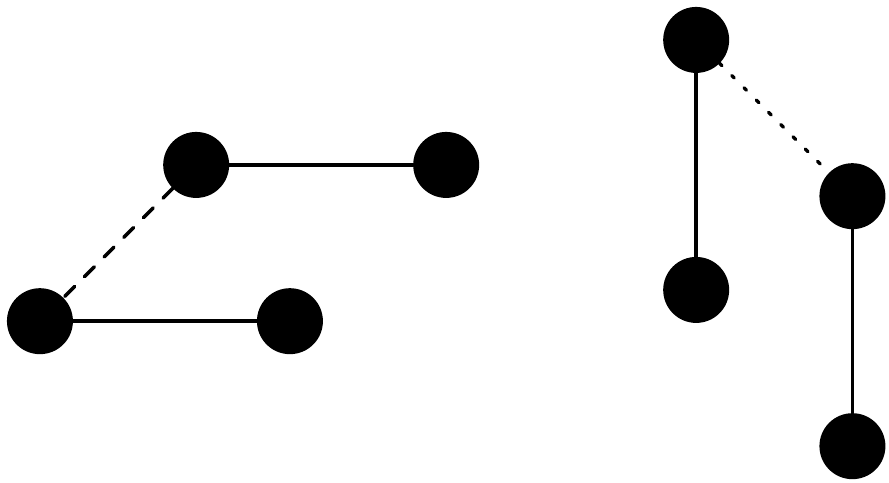}}\, \times 48
&
\begin{array}{c}
m=2 \\ k_1=k_2=4 \\ e_1=e_2=4
\end{array}
&
\begin{array}{c}
48 \times 16 \times (e-6) \times \\
2^{e-6} {e-2 \choose 4}  (e+1)^{e-7}
\end{array}
\\ \hline
\end{array}
$
\caption{Contributions to $\beta_2$. These arise from connecting subgraphs shown in Table~\ref{tab:e-2} in a way that creates two perimeter edges whose endpoints are both in the animal. Solid horizontal and vertical edges are those in subgraphs with repeated labels from Table~\ref{tab:e-2}. Dotted and dashed diagonal lines represent distinct edge labels pointing in a third and fourth dimension. The first two rows involve three of the $e-2$ distinct edge labels, and the last two involve four of them.}
\label{tab:beta-2}
\end{table}

\begin{table}
\arraycolsep=2pt
\def\arraystretch{1.4}
$
\begin{array}{|c|c|c|} \hline
\calH & \text{Lemma~\ref{lem:animals}} & \text{contribution to $\beta_1$ for $e \ge 2$} \\ \hline 
\raisebox{-7pt}{\includegraphics[scale=0.12]{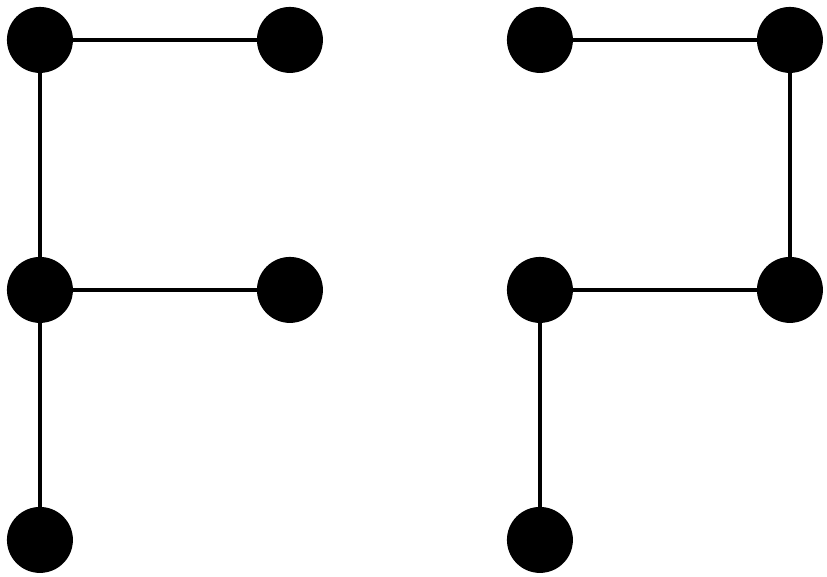}}\, \times 8
& 
m=1, k_1 = 5, e_1 = 2
& 
16 \times 5 \times 
2^{e-4} {e-2 \choose 2} (e+1)^{e-5}
\\ \hline 
\begin{array}{c}
\raisebox{-10pt}{\includegraphics[scale=0.12]{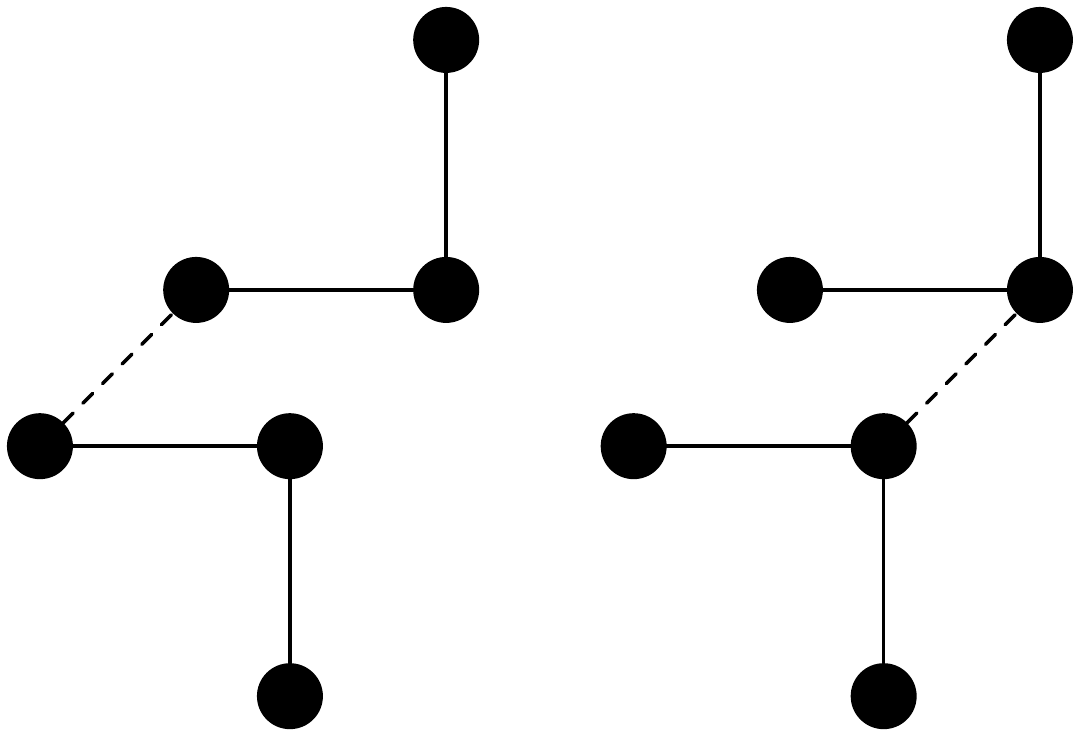}}\, \times 24, \; 
\raisebox{-6pt}{\includegraphics[scale=0.12]{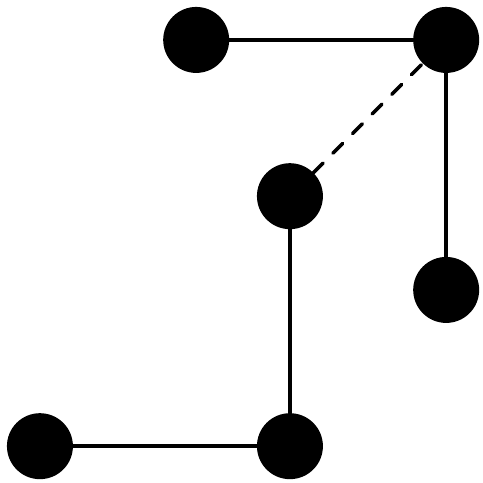}}\, \times 48 
\\
\raisebox{-6pt}{\includegraphics[scale=0.12]{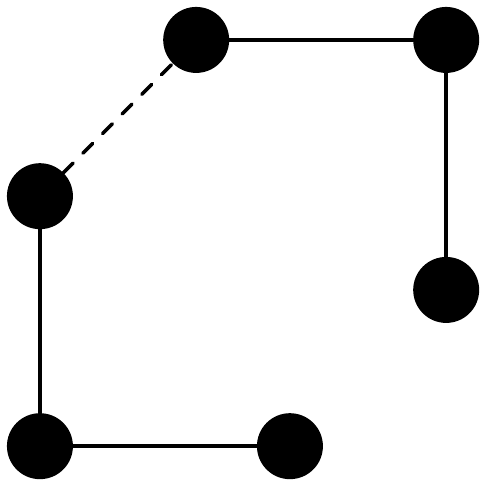}}\, \times 24, \; 
\raisebox{-10pt}{\includegraphics[scale=0.12]{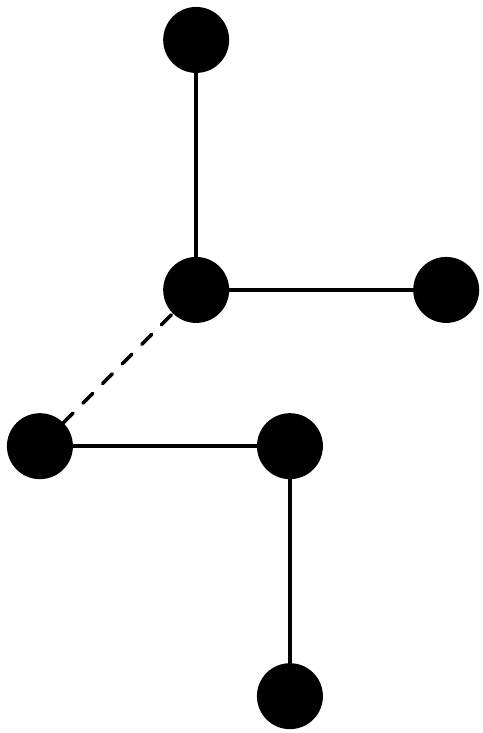}}\, \times 48 
\\
\raisebox{-7pt}{\includegraphics[scale=0.12]{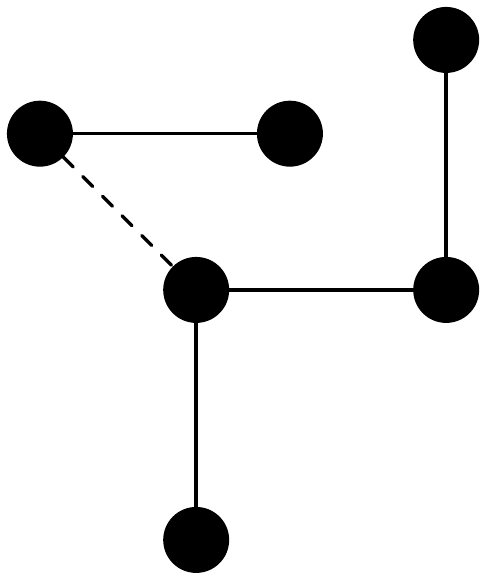}}\, \times 48 , \; 
\raisebox{-7pt}{\includegraphics[scale=0.12]{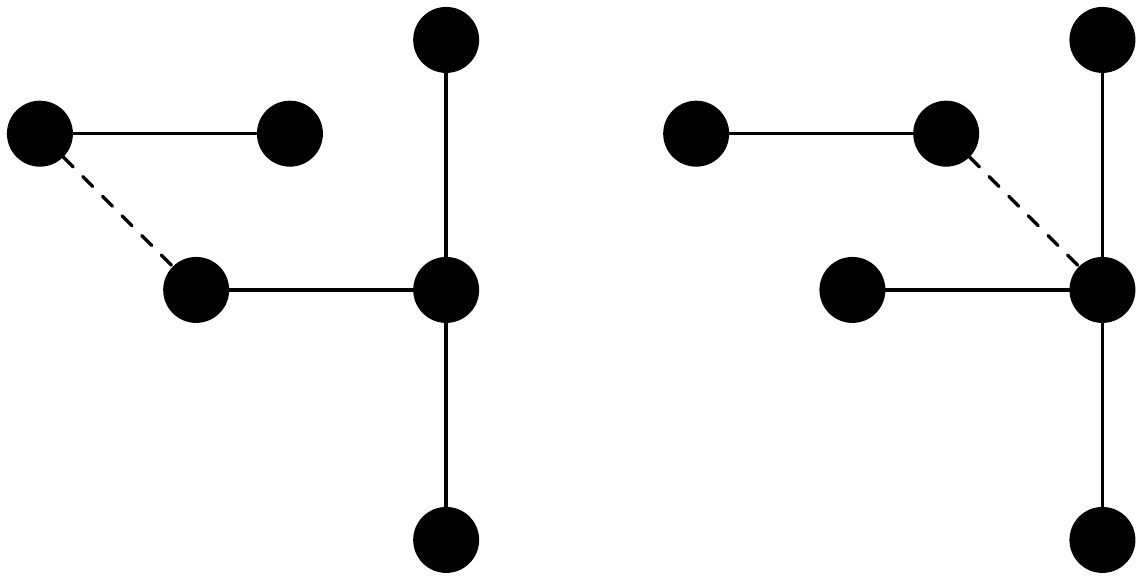}}\, \times 24 , \; 
\raisebox{-7pt}{\includegraphics[scale=0.12]{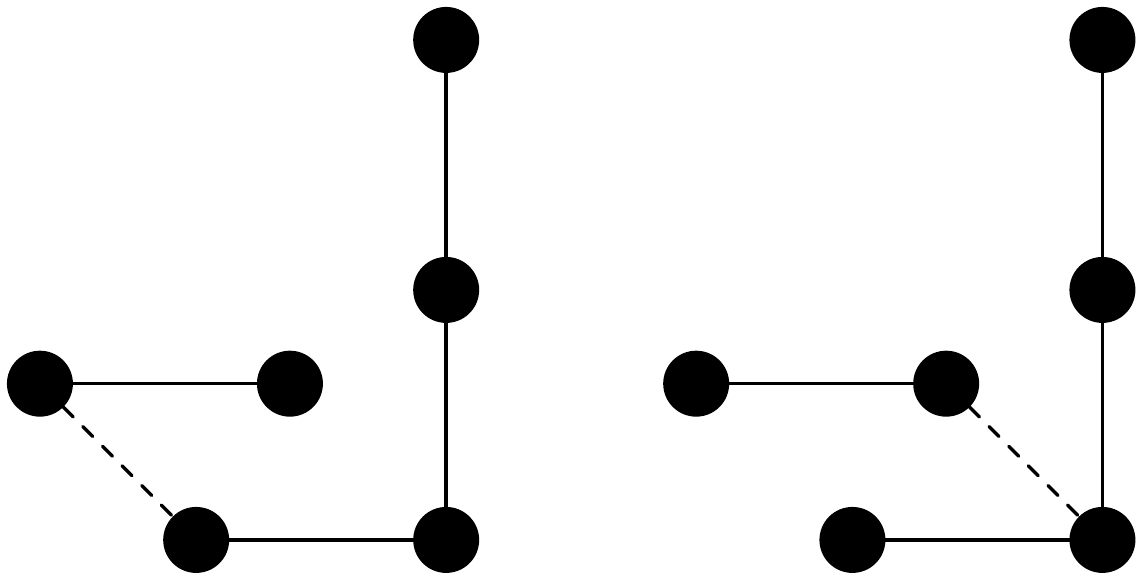}}\, \times 48  
\end{array}
& 
m=1, k_1 = 6, e_1 = 5
& 
360 \times 6 \times 
2^{e-5} {e-2 \choose 3} (e+1)^{e-6}
\\ \hline
4\; \raisebox{-7pt}{\includegraphics[scale=0.12]{1-3-pair1}} \,
- 48 \left( 
\, \raisebox{-5pt}{\includegraphics[scale=0.12]{c-and-domino-2}} 
\,+\,
\raisebox{-5pt}{\includegraphics[scale=0.12]{c-and-domino-1}} \,
\right)
&
\begin{array}{c}
m=2 \\ k_1=2, k_2=4 \\ e_1=1, e_2=3
\end{array}
&
\begin{array}{c}
32 (e-4) 2^{e-4} {e-2 \choose 2} (e+1)^{e-5} \\
-\, 576 \times 2^{e-5} {e-2 \choose 3} (e+1)^{e-6} \\
= 6 \times 2^e (e-2) {e-2 \choose 3} (e+1)^{e-6}
\end{array}
\\ \hline
\raisebox{-7pt}{\includegraphics[scale=0.12]{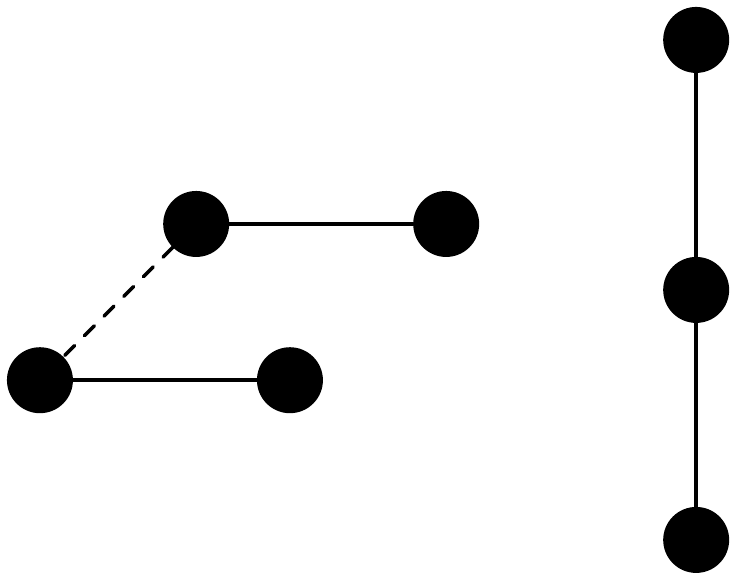}}\, \times 12 
& 
\begin{array}{c}
m=2 \\ k_1 = 4, k_2 = 3 \\ e_1 = 3, e_2 = 2
\end{array}
& 
12 \times 12 (e-5) 
2^{e-5} {e-2 \choose 3} (e+1)^{e-6}
\\ \hline
\begin{array}{c}
48 \left(
\;\raisebox{-9pt}{\includegraphics[scale=0.12]{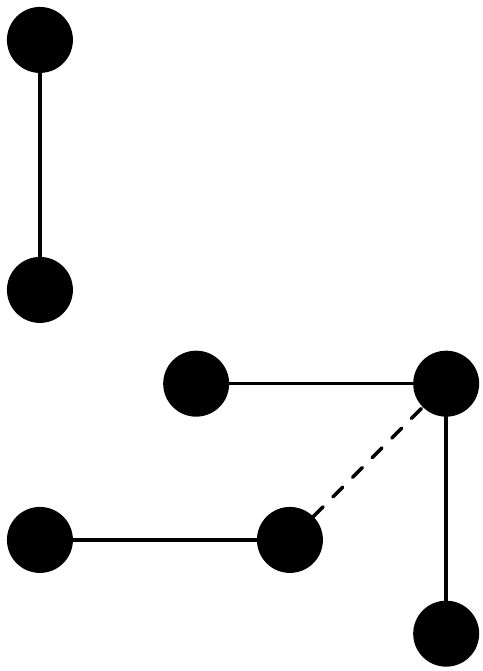}} 
\,+\,
\raisebox{-9pt}{\includegraphics[scale=0.12]{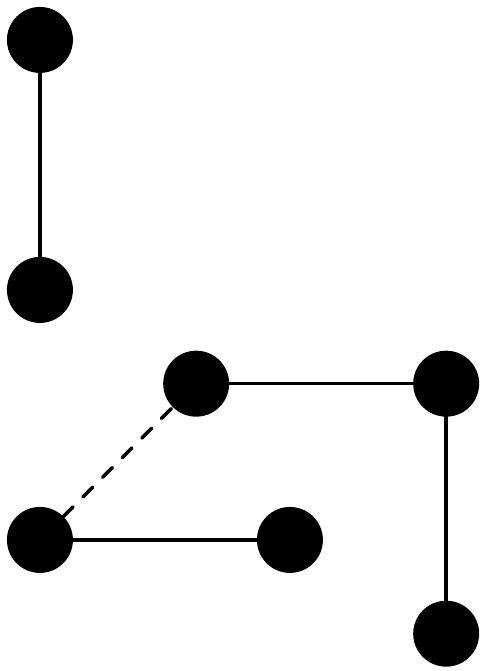}} \;
\right)
\\
-\, 2 \left(
192\; \raisebox{-6pt}{\includegraphics[scale=0.12]{elbow-dominos-1}}
\,+\,
384\; \raisebox{-6pt}{\includegraphics[scale=0.12]{elbow-dominos-2}}
\,+\,
192\; \raisebox{-6pt}{\includegraphics[scale=0.12]{elbow-dominos-3}} \;
\right)
\end{array}
&
\begin{array}{c}
m=2 \\ k_1=2, k_2=5 \\ e_1=1, e_2=4
\end{array}
&
\begin{array}{c}
96 \times 10 (e-5) 
2^{e-5} {e-2 \choose 3} (e+1)^{e-6} \\ 
-\, 2 \times 768 \times 7 \times 
2^{e-6} {e-2 \choose 4} (e+1)^{e-7} \\
= 24 \times 2^e (5e-2) {e-2 \choose 4} (e+1)^{e-7}
\end{array}
\\ \hline
12 \times \frac{1}{2} \times\, \raisebox{-3pt}{\includegraphics[scale=0.12]{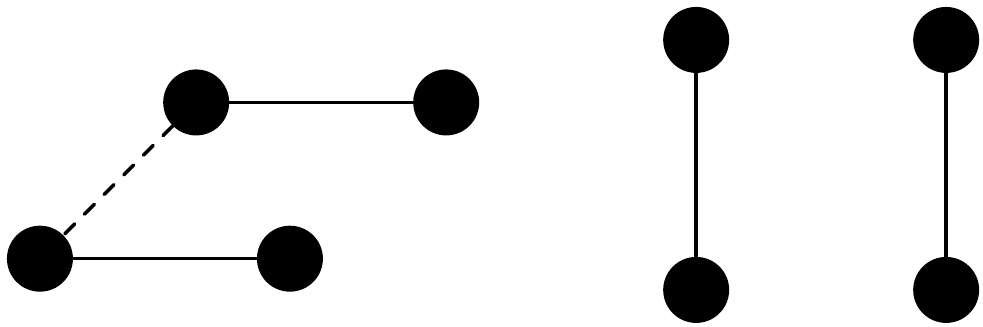}} 
\;-\,
2 \left( 48\; \raisebox{-6pt}{\includegraphics[scale=0.12]{two-cs}}\; \right)
&
\begin{array}{c}
m=3 \\ k_1=4, k_2=k_3=2 \\ e_1=3, e_2=e_3=1
\end{array}
&
\begin{array}{c}
6 \times 16 (e-5)(e-6) 2^{e-5} {e-2 \choose 3} (e+1)^{e-6} \\
-\, 2 \times 48 \times 16 (e-6) 
2^{e-6} {e-2 \choose 4}  (e+1)^{e-7} \\ 
= 60 \times 2^e (e-1) {e-2 \choose 5} (e+1)^{e-7}
\end{array}
\\ \hline
\raisebox{-1pt}{\includegraphics[scale=0.12]{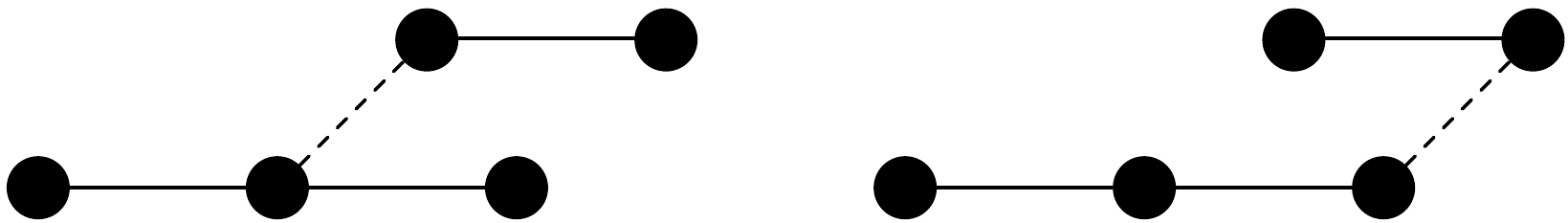}}\, \times 8
&
m=1, k_1=5, e_1=4
&
16 \times 5 \times 2^{e-4} {e-2 \choose 2} (e+1)^{e-5} 
\\ \hline
4 \; \raisebox{-6pt}{\includegraphics[scale=0.12]{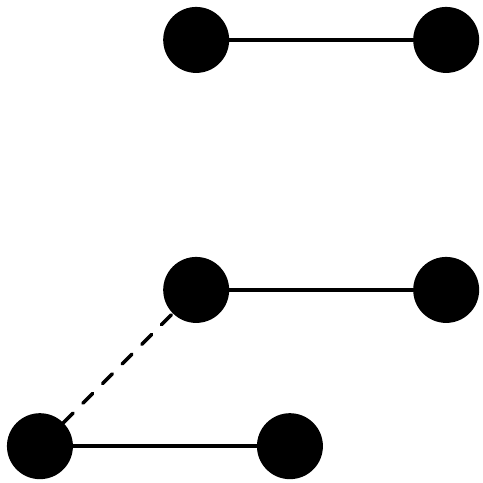}} \; 
- 2 \times 24 \left( 
\raisebox{-4pt}{\includegraphics[scale=0.12]{three-dominos-1}}
\,+\,
\raisebox{-4pt}{\includegraphics[scale=0.12]{three-dominos-2}}
\right)
& 
\begin{array}{c}
m=2 \\ k_1=2, k_2=4 \\ e_1=1, e_2=3
\end{array}
& 
\begin{array}{c}
4 \times 8 (e-4) 2^{e-4} {e-2 \choose 2} (e+1)^{e-5} \\ 
-\, 2 \times 48 \times 6 \times 
2^{e-5} {e-2 \choose 3} (e+1)^{e-6} \\
= 6 \times 2^e (e-2) {e-2 \choose 3} (e+1)^{e-6}
\end{array}
\\ \hline
\end{array}
$
\caption{Contributions to $\beta_1$. These arise from connecting subgraphs shown in Table~\ref{tab:e-2} in a way that creates exactly one perimeter edge whose endpoints are both in the animal. Solid horizontal and vertical edges are those in subgraphs with repeated labels from Table~\ref{tab:e-2}. Dotted and dashed diagonal lines represent distinct edge labels pointing in a third and fourth dimension. In several cases, we have to subtract the subgraphs counted in Table~\ref{tab:beta-2} to avoid two shared perimeter edges; then the second column gives the parameters for the leading term. In the 5th, 6th, and 8th rows, there is a factor of 2 in front of the subtracted subgraphs because there are two ways to construct each one by connecting the subgraphs in the leading term.}
\label{tab:beta-1}
\end{table}

For $\beta_1$, the number of cases is larger. Several of the subgraph collections $\calH$ in Table~\ref{tab:e-2} already have a shared perimeter edge, and others can be connected in ways that produce one. Some of the resulting subgraph collections can be further connected to produce two shared edges, so we have to subtract some of the subgraphs we already counted in Table~\ref{tab:beta-2} toward $\beta_2$. In some cases we have to subtract these subgraphs twice, since there are two distinct ways to construct them. Adding the contributions from Table~\ref{tab:beta-1} and simplifying gives $\beta_1$ in~\eqref{eq:coefficient-e-2-a1}.

Finally, since $\beta_0 + \beta_1 + \beta_2 + \beta_{2d} = G_e^{(e-2)}$, we obtain $\beta_0$ by subtracting $\beta_1, \beta_2$, and $\beta_{2d}$ from the expression~\eqref{eq:proper-e-2}. Simplifying gives~\eqref{eq:coefficient-e-2-a0}.
\end{proof}

\section{Site Percolation}
\label{sec:site-percolation}

In this section we apply the same strategy to site percolation, following Gaunt, Sykes, and Ruskin~\cite{gaunt:sykes:ruskin:76} who derived the series~\eqref{eq:pc-site-series-old} for $\pcsite$. The situation here is somewhat simpler than for bond percolation, and we are able to compute the next term using existing enumerations~\cite{luther:mertens:11a} and recent analytic results from~\cite{luther:mertens:17}.

A site animal of size $v$ is a connected set of $v$ vertices, or if you prefer the subgraph induced by this set, i.e., these vertices and all the bonds between them. The perimeter $t$ of an animal is the number of vertices adjacent to it; so, for instance, the animal \,\raisebox{-1.5pt}{\includegraphics[scale=0.1]{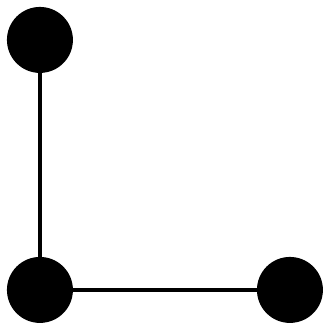}}\, has size $v=3$ and perimeter $t=7$. We again say that an animal is proper in $k$ dimensions if it spans a $k$-dimensional subspace when one of its vertices is at the origin. Note that a site animal of size $v$ is proper in at most $v-1$ dimensions.

Re-using the notation of Section~\ref{sec:lattice-animals} in this setting, let $A_d(v)$ denote the number of animals of size $v$ in $d$ dimensions, and let $G_v^{(k)}$ denote the number of animals proper in $k$ dimenions. Replacing $e$ with $v$ in~\eqref{eq:d-choose-k} gives~\cite{lunnon:75}
\begin{equation}
\label{eq:d-choose-k-site}
A_d(v) = \sum_{k=1}^{v-1} {d \choose k} G_v^{(k)} \, . 
\end{equation}
We write $G_{v,t}^{(k)}$ for the number of animals of size $v$ and perimeter $t$, counting only the adjacent vertices in their $k$-dimensional subspace. When embedded in $d$ dimensions, these have $2(d-k)v$ additional perimeter vertices ``above'' and ``below'' them, giving the perimeter polynomials
\begin{equation}
  \label{eq:def-perimeter-poly-site}
  D_v(q) = \sum_{k=1}^{v-1} \sum_{t} {d \choose k} G^{(k)}_{v,t}\, q^{t + 2(d-k)v} \, .
\end{equation}
This is simpler than the corresponding definition~\eqref{eq:def-perimeter-poly-bond} for bond animals, since we no longer need to keep track separately of the number of vertices and edges in the animal. As before we have $D_v(1) = A_d(v)$. The expected size of the cluster to which a given vertex belongs, conditioned on the event that that vertex is occupied, is 
\begin{equation}
\label{eq:expected-size-site}
S = \frac{1}{p} \sum_v v^2 p^v D_v(1-p) \, ,
\end{equation}
since an animal of size $v$ has $v$ possible translations, each of which contributes $p^v D_v(1-p)$ to the probability that a given vertex belongs to an animal of size $v$. 

Following~\cite{gaunt:sykes:ruskin:76}, we expand $S$ in powers of $p$, 
\begin{equation}
\label{eq:br-site}
S = \sum_v v^2 p^{v-1} D_v(1-p) 
\triangleq \sum_{r=0}^\infty b_r p^r \, .
\end{equation}
The coefficients $b_r$ only depend on $D_v$ for $v \le r+1$. Moreover, analogous to the discussion above for bond percolation, $b_r$ depends on $D_{r+1}$ only through $A_d(r+1) = D_{r+1}(1)$. We can derive $D_{r+1}(1)$ from $D_v(q)$ for $v \le r$ by using the fact that the total probability of belonging to any animal is $p$, 
\begin{equation}
\label{eq:luther-mertens-site}
\sum_{v=1}^\infty v p^v D_v(1-p) = p \, .
\end{equation}
This yields the identity~\cite{luther:mertens:11a}
\begin{equation}
\label{eq:identity-site}
D_{r+1}(1) = A_d(r+1) = -\frac{1}{r+1} \left[ \sum_{v=1}^{r} v p^v D_v(1-p) \right]_{r+1} \, .
\end{equation}

In~\ref{sec:site-polynomials} we give $D_v(q)$ for all $v \le 12$, using enumerations carried out in~\cite{luther:mertens:11a} and analytical results in~\cite{luther:mertens:17} analogous to Theorems~\ref{thm:coefficient-e} and~\ref{thm:coefficient-e-1}. We then use~\eqref{eq:identity-site} to derive $D_{13}(1)$. This lets us compute the coefficients $b_r$ in the power series for $S$ for all $r \le 12$, and we give these explicitly in~\ref{sec:site-br}. 

As in Section~\ref{sec:series-expansion-bond} we now ask whether the correction of $b_r$ to the mean-field behavior, which corresponds to the leading term $\sigma^r$, stabilizes as $r$ increases. We find that $b_r$ is consistent with the following, where the terms up to $r \ge 6$ appeared in~\cite{gaunt:sykes:ruskin:76}:
\begin{eqnarray}
\frac{b_r}{\sigma^r} = 1 &+ \left( 4 - \frac{3r}{2} \right) \sigma^{-1} & \quad (r \ge 2) \nonumber \\
&+ \left( \frac{37}{2} - \frac{69 r}{8} + \frac{9 r^2}{8} \right) \sigma^{-2} & \quad (r \ge 4) \nonumber \\
&+ \left( \frac{651}{4} - \frac{109 r}{2} + \frac{135 r^2}{16} - \frac{9 r^3}{16} \right) \sigma^{-3} & \quad (r \ge 6) \nonumber \\
&+ \left( 
\frac{\numprint{13375}}{8}
-\frac{\numprint{89035} r}{192}
+\frac{8241 r^2}{128}
-\frac{333 r^3}{64}
+\frac{27 r^4}{128}
\right) \sigma^{-4} & \quad (r \ge 8) \nonumber \\
&+ O(\sigma^{-5}) \, .
\label{eq:br-polys-site}
\end{eqnarray}
The coefficient of $\sigma^{-j}$ appears to be a polynomial of degree $j$;  we found the coefficient of $\sigma^{-4}$ by fitting a 4th-order polynomial to five points, the minimum necessary.  

Assuming that~\eqref{eq:br-polys-site} holds, as for bond percolation, in each coefficient of the logarithm the terms that are quadratic or higher order in $r$ cancel out, giving
\begin{align}
\ln b_r &= r \ln \sigma 
+ \left( 4 - \frac{3r}{2} \right) \sigma^{-1} 
+ \left( \frac{21}{2} - \frac{21r}{8} \right) \sigma^{-2}
+ \left( \frac{1321}{12} - \frac{65 r}{4} \right) \sigma^{-3} 
\nonumber \\
&+ \left( \frac{4327}{4} - \frac{\numprint{20359} r}{192} \right) \sigma^{-4} 
+ O(\sigma^{-5}) \, . 
\label{eq:log-br-site}
\end{align}
where the coefficient of $\sigma^{-j}$ holds for $j \ge 2r$. This again lets us take the limit $r \to \infty$ in~\eqref{eq:mu-limit}, giving
\begin{align}
\label{eq:mu-site}
\ln \mu 
&= \ln \sigma 
- \frac{3}{2} \sigma^{-1} 
- \frac{21}{8} \sigma^{-2}
- \frac{65}{4} \sigma^{-3} 
- \frac{\numprint{20359}}{192} \sigma^{-4} 
- O(\sigma^{-5}) \, .
\end{align}
Finally, setting $\pcsite = 1/\mu$ and exponentiating gives
\begin{align}
  \label{eq:pc-site-series-new}
  \pcsite(d) 
  &= \sigma^{-1} + \frac{3}{2}\sigma^{-2}
  + \frac{15}{4}\sigma^{-3} + \frac{83}{4}\sigma^{-4} 
  + \frac{6577}{48} \sigma^{-5} + O(\sigma^{-6}) \, ,
\end{align}
extending~\eqref{eq:pc-site-series-old} by one term.

\section{Series Expansion from Pad\'e Approximants}
\label{sec:pade}

Having obtained series expansions for the mean cluster size $S$, another approach to estimating the threshold is to use Pad\'e approximants: that is, to approximate $S$ as a rational function of $p$, find the smallest real root of its denominator where this approximation diverges, and expand this root in powers of $\sigma^{-1}$.  This yields tantalizing results.

For site percolation, we use the first $\ell$ terms of the series
expansion~\eqref{eq:br-site} of $S$ for $1 \le \ell \le 12$. For each
$\ell$, we compute the $(\ell-1,1)$ Pad\'e approximant of $S$, i.e.,
where the numerator is $(\ell-1)$st-order in $p$ and the denominator
is linear in $p$. This means that we compute numbers
$a_0,\ldots,a_{\ell-1}$ and $\beta$ such that
\begin{displaymath}
  \frac{a_0 + a_1p + a_2p^2+\cdots+a_{\ell-1}p^{\ell-1}}{1-\beta p} =
  \sum_{r=0}^\ell b_r p^r\,.
\end{displaymath}
Expanding $(1-\beta p)^{-1}$ using the well-known formula for the geometric series, we get $\beta b_{\ell-1} = b_\ell$. Hence the root of the denominator can be written explicitly as $b_{\ell-1} / b_\ell$, and we expand this in powers of $\sigma^{-1}$. As Table~\ref{tab:site-pade} shows, our new term in~\eqref{eq:pc-site-series-new} shows up at $\ell=9$. Moreover, at $\ell=11$ the next term beyond it appears, suggesting
\begin{align}
\pcsite &= \sigma^{-1} + \frac{3}{2}\sigma^{-2} + \frac{15}{4}\sigma^{-3} + \frac{83}{4}\sigma^{-4} 
+ \frac{6577}{48} \sigma^{-5} 
\nonumber \\ 
&+ \frac{\numprint{119077}}{96} \sigma^{-6} + O(\sigma^{-7}) \, . 
\label{eq:pc-site-series-pade}
\end{align} 
More generally, we conjecture based on Table~\ref{tab:site-pade} that the coefficient of $\sigma^{-j}$ stabilizes when $\ell = 2j-1$. If that is the case, then we could obtain the coefficient of $\sigma^{-7}$ with just one more value of $b_r$, namely $b_{13}$ for site animals.

For bond percolation, we use the first $\ell$ terms of the series expansion~\eqref{eq:br-bond} of $dpS$. As shown in Table~\ref{tab:bond-pade} the coefficient of $\sigma^{-j}$ again appears to stabilize when $\ell=2j-1$, reproducing the new term in~\eqref{eq:pc-bond-series-new} at $\ell=13$, and suggesting that we can obtain the term beyond it if we can compute $b_{13}$ for bond animals.

This stabilization is no accident. In fact, we will now show that it
is equivalent to the same type of stabilization observed
in~\eqref{eq:br-polys-bond} and~\eqref{eq:log-br-bond} for bond
percolation and in~\eqref{eq:br-polys-site} and~\eqref{eq:log-br-site}
for site percolation: namely, that the coefficient of $\sigma^{-j}$ in
$\ln b_r$ becomes linear in $r$ for $r \ge 2j$. 

\begin{theorem}
  \label{thm:pade-equiv} 
The following two conditions are equivalent:
\begin{enumerate}
\item There are coefficients $c_k, m_k$ for $k = 1, 2, 3\ldots$ such that, for all $j \ge 1$, if $r \ge 2j$ then 
\begin{equation}
\label{eq:gsr-convergence} 
\ln b_r = r \ln \sigma + \sum_{k=1}^j (c_k - m_k r) \sigma^{-k} + O(\sigma^{-(j+1)}) \, . 
\end{equation}
\item For all $j \ge 1$, the estimate of $p_c$ 
from the $\ell$th Pad\'e approximant where $\ell \ge 2j-1$ is
\begin{equation}
\frac{b_{\ell-1}}{b_\ell} 
= \frac{1}{\mu} \big( 1+O(\sigma^{-j}) \big)
= \frac{1}{\mu} + O(\sigma^{-(j+1)}) \, ,
\label{eq:pade-convergence}
\end{equation}
where 
\[
\ln \mu = \lim_{r \to \infty} \frac{1}{r} b_r = \ln \sigma - \sum_{k=1}^\infty m_k \sigma^{-k} \, . 
\]
\end{enumerate}
\end{theorem}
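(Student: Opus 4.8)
The plan is to translate both conditions into a single statement about the finite differences of the coefficients of $\ln b_r$, after which the equivalence collapses to a telescoping argument. Write $\ln b_r = r\ln\sigma + \sum_{k\ge 1} L_k(r)\,\sigma^{-k}$, so that $L_k(r)$ is the coefficient of $\sigma^{-k}$ in $\ln b_r$; since $b_r/\sigma^r = 1+O(\sigma^{-1})$ (the leading term being $\sigma^r$, as established earlier), these coefficients are well defined. In this language condition~(1) says exactly that $L_k(r) = c_k - m_k r$ is affine in $r$ once $r \ge 2k$.

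First I would rewrite the Pad\'e estimate in the same terms. Subtracting the two logarithms gives
\[
\ln\frac{b_{\ell-1}}{b_\ell} = -\ln\sigma + \sum_{k\ge 1}\bigl(L_k(\ell-1)-L_k(\ell)\bigr)\sigma^{-k},
\]
so the coefficient of $\sigma^{-k}$ in $\ln(b_{\ell-1}/b_\ell)$ is precisely the backward difference $L_k(\ell-1)-L_k(\ell)$. On the other hand, by definition $\ln(1/\mu) = -\ln\sigma + \sum_{k\ge1} m_k\sigma^{-k}$. Hence matching $\ln(b_{\ell-1}/b_\ell)$ to $\ln(1/\mu)$ up to $O(\sigma^{-j})$ is the same as requiring $L_k(\ell-1)-L_k(\ell) = m_k$ for every $k \le j-1$. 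The crucial feature is that the intercepts $c_k$ cancel in the difference, so the Pad\'e estimate sees only the slopes $m_k$, i.e.\ only $\mu$; this is the conceptual heart of the equivalence.

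With this dictionary both implications are short. For (1)$\Rightarrow$(2): if $\ell \ge 2j-1$ then for each $k \le j-1$ we have $2k+1 \le 2(j-1)+1 \le \ell$, so both $L_k(\ell-1)$ and $L_k(\ell)$ are given by the affine formula and their difference is exactly $m_k$; thus the coefficients of $\sigma^0,\ldots,\sigma^{-(j-1)}$ in $\ln(b_{\ell-1}/b_\ell)$ agree with those of $\ln(1/\mu)$, giving $\ln(b_{\ell-1}/b_\ell) = -\ln\mu + O(\sigma^{-j})$. Exponentiating and using $1/\mu = \Theta(\sigma^{-1})$ converts this to the multiplicative form $b_{\ell-1}/b_\ell = (1/\mu)(1+O(\sigma^{-j})) = 1/\mu + O(\sigma^{-(j+1)})$ of~(2). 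For (2)$\Rightarrow$(1): fix $k$ and apply~(2) with $j = k+1$; then for every $\ell \ge 2k+1$ the coefficient of $\sigma^{-k}$ matches, i.e.\ $L_k(\ell-1)-L_k(\ell) = m_k$. A constant backward difference telescopes, so for all $r \ge 2k$ we get $L_k(r) = L_k(2k) - m_k(r-2k)$, which is affine with slope $-m_k$; setting $c_k = L_k(2k)+2k\,m_k$ recovers~(1).

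The arithmetic is routine; the only real care is the off-by-one bookkeeping. I expect the main obstacle to be lining up the thresholds exactly: condition~(1) guarantees the affine form only for $r \ge 2k$, the backward difference therefore stabilizes only for $\ell \ge 2k+1$, and to reach order $O(\sigma^{-j})$ one needs this for all $k \le j-1$, i.e.\ $\ell \ge 2(j-1)+1 = 2j-1$, precisely the threshold in~(2). I would double-check that the boundary point $r = 2k$ is included when telescoping (it is, since the stabilized difference covers $\ell = 2k+1$), and that the $O(\sigma^{-(k+1)})$ tail in each expansion of $\ln b_r$ does not contaminate the coefficient of $\sigma^{-k}$, so that individual coefficients can be read off and subtracted term by term.
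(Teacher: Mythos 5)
Your proposal is correct and follows essentially the same route as the paper's proof: in the forward direction the intercepts $c_k$ cancel in $\ln b_{\ell-1}-\ln b_\ell$, leaving $-\ln\mu+O(\sigma^{-j})$, which exponentiates to the stated form since $1/\mu=\Theta(\sigma^{-1})$; in the reverse direction the matched coefficients give a constant backward difference that telescopes to an affine function of $r$. Your per-coefficient bookkeeping via $L_k(r)$ (including the thresholds $\ell\ge 2k+1$ and $r\ge 2k$) matches the paper's argument with $c_{r,k}$, and in fact gets the sign in the definition $c_k=L_k(2k)+2k\,m_k$ right, where the paper's text has a small sign slip.
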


\begin{proof}
In one direction, assume that~\eqref{eq:gsr-convergence} holds. Then if $\ell \ge 2j-1$, 
\begin{align}
\ln \frac{b_{\ell-1}}{b_\ell} 
= \ln b_{\ell-1} - \ln b_\ell
= -\ln \sigma + \sum_{k=1}^{j-1} m_k \sigma^{-k} + O(\sigma^{-j})
= -\ln \mu + O(\sigma^{-j}) \, . 
\label{eq:pade-log}
\end{align}
Recalling that $\mu = O(\sigma)$ and exponentiating gives~\eqref{eq:pade-convergence}. 

Conversely, if~\eqref{eq:pade-convergence} holds then taking the logarithm gives~\eqref{eq:pade-log}. We then have a telescoping sum
\begin{align*}
\ln b_r 
&= \ln b_0 + \sum_{\ell=1}^r ( \ln b_\ell - \ln b_{\ell-1} ) \\
&= r \ln \sigma + \ln b_0 + \sum_{k=1}^r ( \ln b_\ell - \ln b_{\ell-1} - \ln \sigma ) \, . 
\end{align*}
Now consider the coefficient of $\sigma^{-k}$ in this expression. Recall that $b_\ell$ is a polynomial in $\sigma$ with leading term $\sigma^\ell$, so we can write
\[
\ln b_\ell = \ell \ln \sigma + \sum_{k=1}^\infty c_{\ell,k} \sigma^{-k} \, .
\]
Then 
\[
\ln b_\ell - \ln b_{\ell-1} - \ln \sigma = \sum_{k=1}^\infty (c_{\ell,k} - c_{\ell-1,k}) \sigma^{-k} \, , 
\]
and~\eqref{eq:pade-log} implies that, if $\ell > 2j$, then for all $k \le j$ we have
\[
c_{\ell,k} - c_{\ell-1,k} = -m_k \, . 
\]
Therefore, if we define 
\[
c_k = c_{2j,k} - 2j m_k \, ,
\]
and $r \ge 2j$, then for all $k \le j$ we have
\[
c_{r,k} = c_{2j,k} - (r-2j) m_k = c_k - m_k r \, ,
\]
proving~\eqref{eq:gsr-convergence}.
\end{proof}


\noindent
Note that in addition to showing these two types of stabilization are equivalent, Theorem~\ref{thm:pade-equiv} also proves that the classical approach and the $(\ell-1,1)$ Pad\'e method produce the same series expansions.

For both site and bond percolation, we also computed $(\ell-2,2)$ Pad\'e approximants, i.e., where the numerator has degree $\ell-2$ and the denominator is quadratic in $p$.  In this case the estimate of the threshold is the smallest positive root of the equation 
\[
(b_{\ell-1}^2 - b_\ell b_{\ell-2}) p^2 
+ (b_\ell b_{\ell-3} - b_{\ell-1} b_{\ell-2} ) p
+ (b_{\ell-2}^2 - b_{\ell-1} b_{\ell-3})
= 0 \, . 
\]
It appears that for $j \ge 3$ the coefficient of $\sigma^{-j}$ in $p$ now stabilizes when $\ell=2j$. Thus increasing the degree of the denominator does not appear to help. We note that the $(\ell-1,1)$ Pad\'e approximants for $\ell=1, 2, 3$ are known to be upper bounds on $S$, and therefore that the root of their denominator is a lower bound on the threshold~ \cite{torquato:jiao:13a}.

In any case, these results suggest that the techniques of~\cite{gaunt:sykes:ruskin:76,gaunt:ruskin:78} are not the optimal way to extract the coefficients of the series for $\pcsite$ and $\pcbond$ from enumerations of lattice animals. To put this differently, if we assume that the coefficients of $\log b_r$ are linear in $r$, as they must be for the limit $\log \mu$ to exist, then we don't need to compute the coefficients of $b_r$ to all orders. Proving that the stabilization of Theorem~\ref{thm:pade-equiv} holds strikes us as a very interesting question.

\begin{table}
\centering
$
\arraycolsep=2pt
\def\arraystretch{1.5}
\begin{array}{cl}
\ell & \pcsite \\ \hline
1 & \sigma^{-1} + O(\sigma^{-2}) \\
3 & \sigma^{-1} + \frac{3}{2}\sigma^{-2} + O(\sigma^{-3}) \\
5 & \sigma^{-1} + \frac{3}{2}\sigma^{-2} + \frac{15}{4}\sigma^{-3} + O(\sigma^{-4}) \\
7 & \sigma^{-1} + \frac{3}{2}\sigma^{-2} + \frac{15}{4}\sigma^{-3} + \frac{83}{4}\sigma^{-4} + O(\sigma^{-5}) \\
9 & \sigma^{-1} + \frac{3}{2}\sigma^{-2} + \frac{15}{4}\sigma^{-3} + \frac{83}{4}\sigma^{-4} + \frac{6577}{48} \sigma^{-5} + O(\sigma^{-6}) \\
11 & \sigma^{-1} + \frac{3}{2}\sigma^{-2} + \frac{15}{4}\sigma^{-3} + \frac{83}{4}\sigma^{-4} + \frac{6577}{48} \sigma^{-5} 
+ \frac{\numprint{119077}}{96} \sigma^{-6} + O(\sigma^{-7})
\end{array}
$
\caption{Estimates of $\pcsite$ using Pad\'e approximants. For each $\ell$, construct the $\ell$th-order series expansion~\eqref{eq:br-site} of $S$, compute its Pad\'e approximant with a numerator which is $(\ell-1)$st-order in $p$ and a denominator which is linear in $p$, and expand the root of its denominator in powers of $\sigma^{-1}$. We conjecture that the coefficient of $\sigma^{-j}$ stabilizes when $\ell=2j-1$.}
\label{tab:site-pade}
\end{table}

\begin{table}
\centering
$
\arraycolsep=2pt
\def\arraystretch{1.5}
\begin{array}{cl}
\ell & \pcbond \\ \hline
3 & \sigma^{-1} \\
5 & \sigma^{-1} + \frac{5}{2} \sigma^{-3} + O(\sigma^{-4}) \\
7 & \sigma^{-1} + \frac{5}{2} \sigma^{-3} + \frac{15}{2} \sigma^{-4} + O(\sigma^{-5}) \\
9 & \sigma^{-1} + \frac{5}{2} \sigma^{-3} + \frac{15}{2} \sigma^{-4} + 57 \sigma^{-5} + O(\sigma^{-6}) \\
11 & \sigma^{-1} + \frac{5}{2} \sigma^{-3} + \frac{15}{2} \sigma^{-4} + 57 \sigma^{-5} + \frac{4855}{12} \sigma^{-6} + O(\sigma^{-7})
\end{array}
$
\caption{Estimates of $\pcbond$ using Pad\'e approximants, using the $\ell$th-order series expansion~\eqref{eq:br-bond} of $dpS$. As for site percolation, the coefficient of $\sigma^{-j}$ appears to stabilize when $\ell=2j+1$, reproducing the new term given in~\eqref{eq:pc-bond-series-new} but not yet providing an additional one.}
\label{tab:bond-pade}
\end{table}

\section{Conclusions and Outlook}
\label{sec:conclusion}

\begin{figure}
  \centering
  \includegraphics[width=0.8\linewidth]{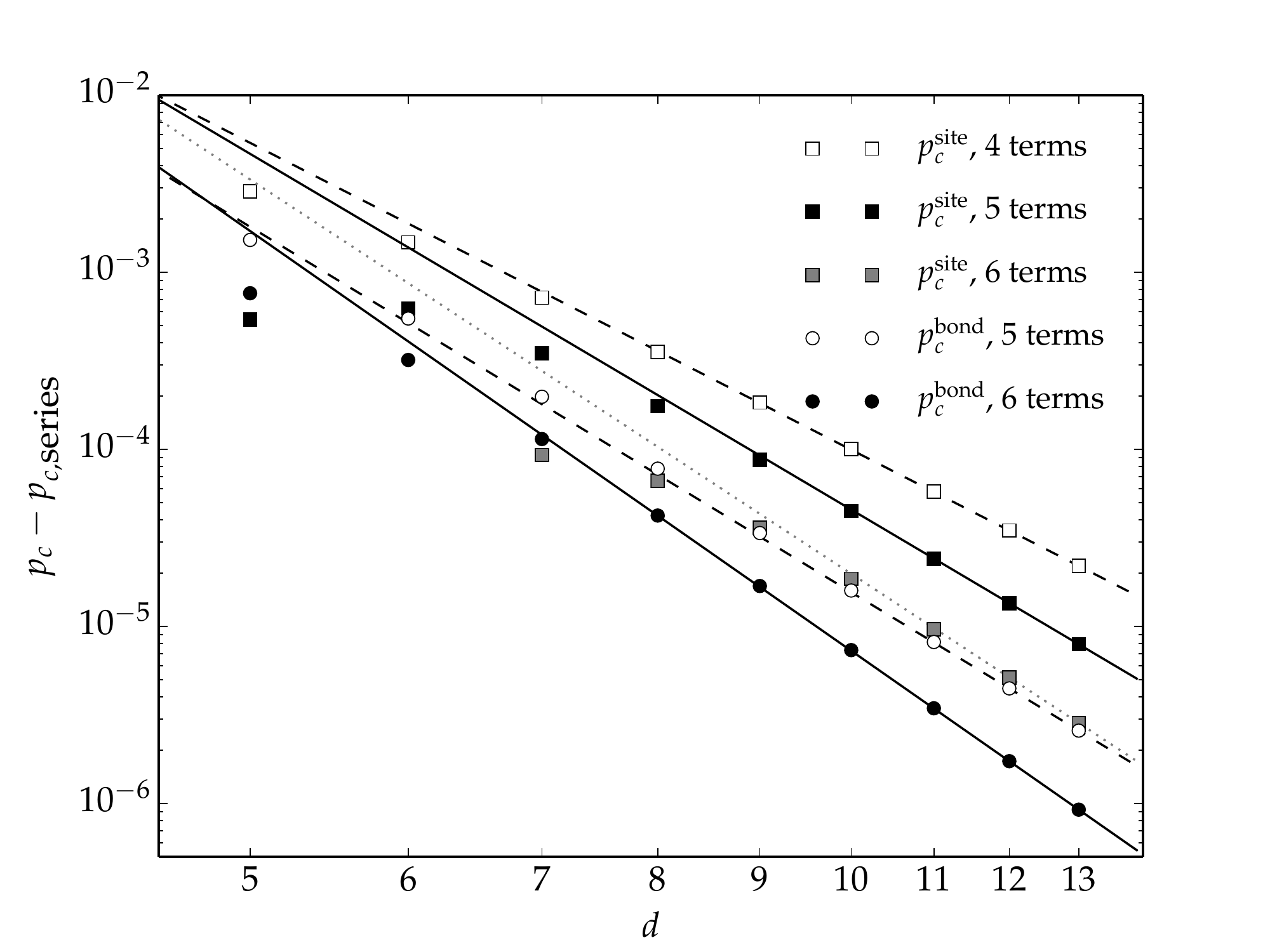}
  \caption{Difference between $p_c$ and the old and new series expansions for site
    and bond percolation.}
  \label{fig:pc-series}
\end{figure}

With a combination of brute force enumeration and new analytical results, we were able to extend the series for $\pcbond$ to the next order, and $\pcsite$ to two more orders. Let us compare the old series and the new, extended series to very precise numerical values for $p_c$ from recent simulations~\cite{mertens:moore:18b}.  

As seen from Figure~\ref{fig:pc-series}, the agreement with $p_c$ is
better for the new series, and the rate of convergence for
$d\to\infty$ increases with each new term. For bond
percolation, the error decreases as $\sigma^{-6.9}$ for the old series
and $\sigma^{-7.9}$ for the new series, and for site percolation the
corresponding rates are $\sigma^{-5.8}$ (old), $\sigma^{-6.7}$ (one
additional term) and $\sigma^{-7.4}$ (two additional terms).
These observations support the claim that these series are
asymptotic,
which is also suggested by the rapid increase in their
coefficients. 

What would it take to get the next term in the series for $\pcbond$, i.e., the coefficient of $\sigma^{-7}$?  Using the techniques of~\cite{gaunt:sykes:ruskin:76,gaunt:ruskin:78} appears to be expensive. Assuming that (continuing the pattern) the coefficient of $\sigma^{-6}$ in~\eqref{eq:br-polys-bond} is cubic in $r$, and that it only holds for $r \ge 12$, to find its coefficients we would need $b_r$ for $r=12, 13, 14, 15$, or equivalently $D_e(q)$ for $e=12, 13, 14$ and $A_d(e)=D_e(1)$ for $e=15$. 
Theorems~\ref{thm:coefficient-e}, \ref{thm:coefficient-e-1} and~\ref{thm:coefficient-e-2} save us from enumerating bond lattice animals of size $e=14$ in $d=12,13,14$, but the remaining enumeration tasks are still prohibitive, as shown in Table~\ref{tab:estimated_times}.

\begin{table}
  \centering
  \begin{tabular}{rrc@{\hspace{20mm}}rr}
    \multicolumn{1}{r}{$(d,e)$} &\multicolumn{1}{c}{wall clock time} & &
    \multicolumn{1}{c}{$(d,e)$} &\multicolumn{1}{c}{wall clock
                                  time}\\[1ex]
    $(5,14)$ & \numprint[\,days]{148} & & $(7,12)$ &
                                                   \numprint[\,days]{57} \\
    $(6,14)$ & \numprint[\,years]{9} & & $(8,12)$& \numprint[\,days]{387}\\
    $(7,14)$ & \numprint[\,years]{116} & & $(9,12)$ &
                                                    \numprint[\,years]{4,5}\\
    $(8,14)$ & \numprint[\,years]{1052} & & & \\
    $(9,14)$ & \numprint[\,years]{5183} & & & \\
    $(10,14)$ & \numprint[\,years]{28526} & & &\\
    $(11,14)$ & \numprint[\,years]{133800} & & &
  \end{tabular}
  \caption{Estimated wall clock times for the enumeration of bond animals to compute the next term of the series~\eqref{eq:pc-bond-series-new} for $\pcbond$, using the classical method of~\cite{gaunt:ruskin:78} (left) and the Pad\'e approximant approach (right). Enumeration data for $( e\leq 4,14)$ and $(e \leq 6,12)$ is already available, and data for $(e \geq 12, 14)$ and $(e \geq 10, 12)$ is provided by Theorems~\ref{thm:coefficient-e}, \ref{thm:coefficient-e-1}, and~\ref{thm:coefficient-e-2}.}
  \label{tab:estimated_times}
\end{table}

However, the situation is less daunting for the Pad\'e approach. For $\pcsite$ we were already able compute the next term from the existing data. For $\pcbond$ we just need to compute $b_{13}$, a task that is within reach of future (or even current) computer machinery (see Table~\ref{tab:estimated_times}).  The same is true for the next term of $\pcsite$: using the Pad\'e approach, we need $b_{13}$ for site animals, which is within reach if one is willing to spend months of computing time on a small cluster.

To avoid the hardest enumeration tasks, one can try to extend the
results of Section~\ref{sec:bond-percolation}, i.e., to analytically
compute the coefficients of ${d \choose e-k}$ in $D_e$ for
$k=3,4,5,\ldots$.  The number of subgraphs that one needs to identify
and analyze grows rapidly with $k$, making it hard to ensure that none
is omitted while avoiding double counting.  The actual computation is
not complicated, but tedious and error-prone.  This suggests
delegating it to a computer. 

This has in fact been done for site
animals, where the formulas for $G_v^{(v-1-k)}$ were derived for
$k \leq 5$ with the help of a computer~\cite{barequet:shalah:17}.
However, that work does not classify animals according to their
perimeter, so it unfortunately does not help us compute the perimeter
polynomials.  Formulas that include the perimeter were derived for
$k=0$ and $k=1$ in~\cite{luther:mertens:17}, but again this gets too
cumbersome to do manually for larger values of $k$. 
As far as we know, the computer-aided approach has not been used to derive formulas for the coefficients of the perimeter polynomials for site or bond percolation beyond~\cite{luther:mertens:17} and Theorems~\ref{thm:coefficient-e}, \ref{thm:coefficient-e-1}, and~\ref{thm:coefficient-e-2}.

The bottom line is that the computation of the next terms
in~\eqref{eq:pc-bond-series-new} and~\eqref{eq:pc-site-series-pade}
along the classical road is far beyond our current computational power
and analytical tools. With the Pad\'e approach, both tasks are within
reach, either by spending a considerable amount of computer time or by
extending the analytical results derived here and in
\cite{luther:mertens:17}. Going beyond that, however, requires a
significant new idea.

\appendix

\section{Perimeter Polynomials $\mathbf{D_e(q)}$ for Bond Animals}
\label{sec:bond-polynomials}

The perimeter polynomials $D_1,\ldots,D_8$ appeared in~\cite{gaunt:ruskin:78}.  Here we also give $D_9$, $D_{10}$ and $D_{11}$, as well as $D_{12}(1)$. We obtained these with our new computations, Theorems~\ref{thm:coefficient-e}, \ref{thm:coefficient-e-1}, and~\ref{thm:coefficient-e-2}, 
and the identity~\eqref{eq:identity-bond}.
{\small
\begin{align*}
  D_1(q) &= q^{4d-2} {d \choose 1} \\ 
  D_2(q) &= q^{6d-4} \left[{d \choose 1} + 4 {d\choose 2}\right]\\
  D_3(q) &= q^{8d-6} \left[{d \choose 1} + (16+4q^{-1}) {d\choose
          2}+32{d \choose 3}\right]\\
  D_4(q) &= q^{10d-8} \left[{d \choose 1} + (53+32q^{-1}+q^{-2d}) {d\choose
          2}+(324+96q^{-1}){d \choose 3}+400{d\choose 4}\right]\\
  D_5(q) &= q^{12d-10} \left[{d \choose 1} + (172+160q^{-1}+30q^{-2}+8q^{-2d}) {d\choose
          2}\right.\\
  &+\left.(\numprint{2448} + \numprint{1512} q^{-1} + 180q^{-2} + 24 q^{-2d}){d \choose
  3}+(\numprint{8064} + \numprint{2304} q^{-1}){d\choose 4}+\numprint{6912} {d\choose 5}\right]\\
  D_6(q) &= q^{14d-12} \left[{d \choose 1} + (568+672q^{-1}+332q^{-2}+40q^{-2d}+14q^{-2d-1}) {d\choose
          2}\right.\\
  &+(\numprint{17041}+\numprint{15600}q^{-1}+\numprint{4704}q^{-2}+400q^{-3}+376q^{-2d}+84q^{-2d-1}){d \choose
  3}\\
  &+(\numprint{112824} + \numprint{63744}q^{-1} + \numprint{9408}q^{-2} +
  576 q^{-2d}){d\choose 4}\\
  &+\left.(\numprint{239120} + \numprint{62720} q^{-1}){d\choose 5} +
  \numprint{153664} {d\choose 6}\right]\\
D_7(q) &= q^{16d-14}  \left[{d \choose 1} + (\numprint{1906}+\numprint{2712}q^{-1}+\numprint{2030}q^{-2}+336q^{-3}+168q^{-2d}+156q^{-2d-1}+2q^{-4d}) {d\choose
          2}\right.\\
  &+(\numprint{116004}+\numprint{137736}q^{-1}+\numprint{67812}q^{-2}+\numprint{15096}q^{-3}+384q^{-5}+\numprint{3864}q^{-2d}\\
  &+\numprint{2208}q^{-2d-1}+264q^{-2d-2}+12q^{-4d}){d \choose
  3}\\
  &+(\numprint{1382400}+\numprint{1141248}q^{-1}+\numprint{350400}q{-2} +\numprint{40256}q^{-3}
  +\numprint{15840}q^{-2d}+\numprint{4416}q^{-2d-1}){d\choose 4}\\
  &+(\numprint{5445120} + \numprint{2769920} q^{-1} + \numprint{407040} q^{-2} + \numprint{15680} q^{-2d}){d\choose 5} \\
  &+\left.(\numprint{8257536} + \numprint{1966080}
  q^{-1}){d\choose 6} + \numprint{4194304} {d\choose 7}\right] \\
D_8(q) &= q^{18d-16} \left[{d \choose 1} +
  (\numprint{6471}+\numprint{10880}q^{-1}+\numprint{9972}q^{-2}+\numprint{4064}q^{-3}+192q^{-4}+677q^{-2d}\right.\\
  &+958q^{-2d-1}+228q^{-2d-2}+22q^{-4d}) {d\choose
          2}\\
  &+(\numprint{787965}+\numprint{1140576}q^{-1}+\numprint{755532}q^{-2}+\numprint{287280}q^{-3}+\numprint{28704}q^{-4}+\numprint{9216}q^{-5}+\numprint{33996}q^{-2d}\\
  &+\numprint{31908}q^{-2d-1}+\numprint{10080}q^{-2d-2}+408q^{-2d-4}+312q^{-4d}+72q^{-4d-1}){d
    \choose 3}\\
  &+(\numprint{15998985}+\numprint{17116800}q^{-1}+\numprint{7855008}q{-2} +\numprint{1932864}q^{-3}
  +\numprint{114816}q^{-4}\\
  &+\numprint{24576}q^{-5}+\numprint{282216}q^{-2d}+\numprint{164928}q^{-2d-1}+\numprint{26880}q^{-2d-2} + 624q^{-4d}){d\choose 4}\\
  &+(\numprint{104454120}+\numprint{77177280}q^{-1}+\numprint{22232640}q^{-2}+\numprint{2609280}q^{-3}+\numprint{688640}q^{-2d}+\numprint{192000}q^{-2d-1}){d\choose 5} \\
  &+(\numprint{280717488} +\numprint{128770560}q^{-1} +\numprint{17729280}q^{-2} +\numprint{491520}q^{-2d}){d\choose 6}\\
  &+\left.(\numprint{326265408} +\numprint{70543872}q^{-1}){d\choose 7} +\numprint{136048896} {d\choose 8}\right]\\
D_9(q) &= q^{20d - 18} \left[ {d \choose 1}
+(\numprint{22200}+\numprint{43220}q^{-1}+\numprint{46004}q^{-2}+\numprint{27392}q^{-3}+\numprint{6062}q^{-4}+\numprint{2708}q^{-2d}\right.\\
&+\numprint{4724}q^{-2d-1}+\numprint{2776}q^{-2d-2}+164q^{-2d-3}+134q^{-4d}+60q^{-4d-1}) {d
  \choose 2} \\
&+(\numprint{5380600}+\numprint{9167304}q^{-1}+\numprint{7470900}q^{-2}+\numprint{3904416}q^{-3}+\numprint{952659}q^{-4}+\numprint{167760}q^{-5}+\numprint{33024}q^{-6}\\
&+\numprint{280608}q^{-2d}+\numprint{355860}q^{-2d-1}+\numprint{193440}q^{-2d-2}+\numprint{24720}q^{-2d-3}+\numprint{9792}q^{-2d-4}+\numprint{4452}q^{-4d}\\
&+\numprint{2712}q^{-4d-1}+212q^{-4d-3}+8q^{-6d}) {d \choose 3}\\
&+(\numprint{180558848}+\numprint{235351008}q^{-1}+\numprint{140954400}q^{-2}+\numprint{52264576}q^{-3}+\numprint{8518224}q^{-4}+\numprint{1256064}q^{-5}+\numprint{132096}q^{-6}\\
&+\numprint{4215664}q^{-2d}+\numprint{3700320}q^{-2d-1}+\numprint{1299648}q^{-2d-2}+\numprint{98880}q^{-2d-3}+\numprint{26112}q^{-2d-4}\\
&+\numprint{23040}q^{-4d}+\numprint{7232}q^{-4d-1}) {d \choose 4}\\
&+(\numprint{1839569920}+\numprint{1758624960}q^{-1}+\numprint{736709440}q^{-2}+\numprint{171765760}q^{-3}+\numprint{13580560}q^{-4}+\numprint{1228800}q^{-5}\\
&+\numprint{19098480}q^{-2d}+\numprint{10501440}q^{-2d-1}+\numprint{1756800}q^{-2d-2}+\numprint{26880}q^{-4d}) {d \choose 5}\\
&+(\numprint{7801139200}+\numprint{5187225600}q^{-1}+\numprint{1371264000}q^{-2}+\numprint{151859200}q^{-3}+\numprint{32037120}q^{-2d}+\numprint{8398080}q^{-2d-1}) {d \choose 6}\\
&+(\numprint{15572480000} +\numprint{6478080000}q^{-1} +\numprint{819840000}q^{-2} +\numprint{17635968}q^{-2d}) {d \choose 7}\\
&+\left.(\numprint{14540800000} +\numprint{2867200000}q^{-1}) {d \choose 8}
+\numprint{5120000000} {d \choose 9}
\right]\\
D_{10}(q) &= q^{22d - 20} \left[ {d \choose 1}
+(\numprint{76884}+\numprint{169784}q^{-1}+\numprint{207444}q^{-2}+\numprint{148728}q^{-3}+\numprint{63852}q^{-4}+\numprint{5696}q^{-5}+\numprint{10724}q^{-2d}\right.\\
  &+\numprint{21844}q^{-2d-1}+\numprint{18816}q^{-2d-2}+\numprint{5308}q^{-2d-3}+656q^{-4d}+728q^{-4d-1}+62q^{-4d-2}+6q^{-6d})
  {d \choose 2}\\
 &+(\numprint{37034319}+\numprint{72525600}q^{-1}+\numprint{69548916}q^{-2}+\numprint{44680224}q^{-3}+\numprint{17341872}q^{-4}+\numprint{3987504}q^{-5}+\numprint{919680}q^{-6}\\
 &+\numprint{104880}q^{-7}+\numprint{2248620}q^{-2d}+\numprint{3520212}q^{-2d-1}+\numprint{2643504}q^{-2d-2}+\numprint{827214}q^{-2d-3}+\numprint{177060}q^{-2d-4}+\numprint{41280}q^{-2d-5}\\
 &+\numprint{49062}q^{-4d}+\numprint{51552}q^{-4d-1}+\numprint{9480}q^{-4d-2}+\numprint{5088}q^{-4d-3}+288q^{-6d}+66q^{-6d-2}) {d \choose 3}\\
 &+(\numprint{2017563224}+\numprint{3087285504}q^{-1}+\numprint{2252598408}q^{-2}+\numprint{1084658496}q^{-3}+\numprint{299992512}q^{-4}+\numprint{54960192}q^{-5}\\
 &+\numprint{9445632}q^{-6}+\numprint{839040}q^{-7}+\numprint{57763584}q^{-2d}+\numprint{66404448}q^{-2d-1}+\numprint{35316960}q^{-2d-2}+\numprint{7383456}q^{-2d-3}\\
 &+\numprint{1325088}q^{-2d-4}+\numprint{165120}q^{-2d-5}+\numprint{510768}q^{-4d}+\numprint{346816}q^{-4d-1}+\numprint{37920}q^{-4d-2}+\numprint{13568}q^{-4d-3}+768q^{-6d}) {d \choose 4}\\
 &+(\numprint{30937530481}+\numprint{35996081120}q^{-1}+\numprint{19450068320}q^{-2}+\numprint{6580542880}q^{-3}+\numprint{1145908480}q^{-4}+\numprint{134990208}q^{-5}\\
 &+\numprint{13701120}q^{-6}
  +\numprint{433509696}q^{-2d}+\numprint{348072000}q^{-2d-1}+\numprint{116214720}q^{-2d-2}+\numprint{11787040}q^{-2d-3}+\numprint{1305600}q^{-2d-4}\\
 &+\numprint{1454160}q^{-4d}+\numprint{469120}q^{-4d-1}) {d \choose 5}\\
 & +(\numprint{194498568156}+\numprint{167245290240}q^{-1}+\numprint{63682584960}q^{-2}+\numprint{13587383040}q^{-3}+\numprint{1122984960}q^{-4}\\
  &+\numprint{59473920}q^{-5}+\numprint{1285228800}q^{-2d}+\numprint{650035200}q^{-2d-1}+\numprint{102950400}q^{-2d-2}+\numprint{1166400}q^{-4d}) {d \choose 6}\\
 &+(\numprint{593322510704}+\numprint{357772800000}q^{-1}+\numprint{86356596480}q^{-2}+\numprint{8787116800}q^{-3}+\numprint{1612800000}q^{-2d}+\numprint{389760000}q^{-2d-1}) {d \choose 7}\\
 &+(\numprint{930918351616} +\numprint{353460445184}q^{-1} +\numprint{40929208320}q^{-2} +\numprint{716800000}q^{-2d}) {d \choose 8}\\
 & +\left.(\numprint{722456748288} +\numprint{130613649408}q^{-1}) {d \choose 9}
 +\numprint{219503494144} {d \choose 10}
 \right]\\
D_{11}(q) &= q^{24d - 22} \left[ {d \choose 1}
+(\numprint{268350}+\numprint{662424}q^{-1}+\numprint{912378}q^{-2}+\numprint{755936}q^{-3}+\numprint{435330}q^{-4}+\numprint{111112}q^{-5}+\numprint{4830}q^{-6}\right.\\
&+\numprint{42012}q^{-2d}+\numprint{98596}q^{-2d-1}+\numprint{102660}q^{-2d-2}+\numprint{56496}q^{-2d-3}+\numprint{6032}q^{-2d-4}\\
&+\numprint{3008}q^{-4d}+\numprint{4920}q^{-4d-1}+\numprint{2000}q^{-4d-2}+72q^{-6d}+12q^{-6d-1})
  {d \choose 2} \\
&+(\numprint{257091447}+\numprint{568629480}q^{-1}+\numprint{624866154}q^{-2}+\numprint{467409000}q^{-3}+\numprint{238715907}q^{-4}+\numprint{80060424}q^{-5}\\
&+\numprint{19650432}q^{-6}+\numprint{4958880}q^{-7}+\numprint{94500}q^{-9}+\numprint{17740860}q^{-2d}+\numprint{32773380}q^{-2d-1}+\numprint{30367248}q^{-2d-2}\\
&+\numprint{15154164}q^{-2d-3}+\numprint{4208376}q^{-2d-4}+\numprint{1151064}q^{-2d-5}+\numprint{148008}q^{-2d-6}+\numprint{479952}q^{-4d}+\numprint{698808}q^{-4d-1}\\
&+\numprint{315738}q^{-4d-2}+\numprint{90744}q^{-4d-3}+\numprint{25608}q^{-4d-4}+\numprint{5328}q^{-6d}+\numprint{1872}q^{-6d-1}+\numprint{1584}q^{-6d-2}+12q^{-8d-1})
  {d \choose 3}\\
&+(\numprint{22494953744}+\numprint{39420410688}q^{-1}+\numprint{33669058848}q^{-2}+\numprint{19535663616}q^{-3}+\numprint{7457884848}q^{-4}+\numprint{1932787968}q^{-5}\\
&+\numprint{394406080}q^{-6}+\numprint{71993664}q^{-7}+\numprint{3443136}q^{-8}+\numprint{756000}q^{-9}
  +\numprint{755437872}q^{-2d}+\numprint{1060971144}q^{-2d-1}\\
&+\numprint{735484416}q^{-2d-2}+\numprint{261507552}q^{-2d-3}+\numprint{57870624}q^{-2d-4}+\numprint{11783232}q^{-2d-5}+\numprint{1184064}q^{-2d-6}\\
&+\numprint{9062784}q^{-4d}+\numprint{9351648}q^{-4d-1}+\numprint{2819616}q^{-4d-2}+\numprint{678464}q^{-4d-3}+\numprint{102432}q^{-4d-4}\\
&+\numprint{35968}q^{-6d}+\numprint{7488}q^{-6d-1}+\numprint{4224}q^{-6d-2}) {d \choose 4}\\
&+(\numprint{507201540240}+\numprint{691805061120}q^{-1}+\numprint{452798848800}q^{-2}+\numprint{195805808000}q^{-3}+\numprint{52407897360}q^{-4}\\
&+\numprint{9325311360}q^{-5}+\numprint{1352772160}q^{-6}+\numprint{135333120}q^{-7}+\numprint{8844012400}q^{-2d}+\numprint{9187994080}q^{-2d-1}\\
&+\numprint{4468344240}q^{-2d-2}+\numprint{999695680}q^{-2d-3}+\numprint{142520640}q^{-2d-4}+\numprint{17157120}q^{-2d-5}+\numprint{47677760}q^{-4d}\\
&+\numprint{30820800}q^{-4d-1}+\numprint{4505760}q^{-4d-2}+\numprint{678400}q^{-4d-3}+\numprint{48640}q^{-6d})
  {d \choose 5} \\
&+(\numprint{4548861718272}+\numprint{4758841658880}q^{-1}+\numprint{2326927299840}q^{-2}+\numprint{710854571520}q^{-3}+\numprint{119621445120}q^{-4}\\
&+\numprint{12373857792}q^{-5}+\numprint{1021870080}q^{-6}+\numprint{41293532640}q^{-2d}+\numprint{30190824960}q^{-2d-1}\\
&+\numprint{9246453120}q^{-2d-2}+\numprint{981461760}q^{-2d-3}+\numprint{63191040}q^{-2d-4}+\numprint{89376000}q^{-4d}+\numprint{27340800}q^{-4d-1}) {d \choose 6} \\
&+(\numprint{19903875199488}+\numprint{15525985886208}q^{-1}+\numprint{5383330219008}q^{-2}+\numprint{1042574561280}q^{-3}+\numprint{83366115840}q^{-4}\\
&+\numprint{2972712960}q^{-5}+\numprint{88752861120}q^{-2d}+\numprint{41074268928}q^{-2d-1}+\numprint{5992694400}q^{-2d-2}+\numprint{53760000}q^{-4d}) {d \choose 7} \\
&+(\numprint{46672464052224}+\numprint{25712480747520}q^{-1}+\numprint{5672976777216}q^{-2}+\numprint{526170193920}q^{-3}+\numprint{88050271232}q^{-2d}\\
&+\numprint{19520083968}q^{-2d-1})
  {d \choose 8} \\
&+(\numprint{59894730326016}+\numprint{20886790864896}q^{-1}+\numprint{2215265697792}q^{-2} +\numprint{32653412352}q^{-2d} ) {d \choose 9}\\
&+\left.(\numprint{39627113103360} +\numprint{6604518850560}q^{-1}) {d \choose 10} +\numprint{10567230160896} {d \choose 11}\right]\\
D_{12}(1) &= {d \choose 1}
+\numprint{16576872} {d \choose 2} 
+\numprint{22086892828} {d \choose 3} 
+\numprint{1825033692350} {d \choose 4}
+\numprint{38707129124945} {d \choose 5} \\
&+\numprint{341855549212957} {d \choose 6} 
+\numprint{1554749521671500} {d \choose 7} 
+\numprint{4030548636699744} {d \choose 8} 
+\numprint{6199599459637248} {d \choose 9} \\
&+\numprint{5601509502521600} {d \choose 10} 
+\numprint{2747328861561856} {d \choose 11}
+\numprint{564668382613504} {d \choose 12}
\end{align*}
}

\section{The Polynomials $b_r$ for Bond Percolation}
\label{sec:br}

Here we give $b_r$, the coefficients of $p^r$ in the series expansion for $dpS$ defined in~\eqref{eq:br-bond}. The leading term $\sigma^r$ corresponds to the mean-field behavior.
{\small
\begin{align*}
b_1 &= \frac{\sigma}{2}+\frac{1}{2} \\
b_2 &= \sigma^2+\sigma \\
b_3 &= \sigma^3+\sigma^2 \\
b_4 &= \sigma^4+\sigma^3-\frac{3 \,\sigma^2}{2}+\frac{3}{2} \\
b_5 &= \sigma^5+\sigma^4-4 \,\sigma^3+3 \,\sigma^2+4 \,\sigma-3 \\
b_6 &= \sigma^6+\sigma^5-\frac{13 \,\sigma^4}{2}-3 \,\sigma^3+\frac{55 \,\sigma^2}{2}+3 \,\sigma-21 \\
b_7 &= \sigma^7+\sigma^6-9 \,\sigma^5-13 \,\sigma^4+98 \,\sigma^3-68 \,\sigma^2-89 \,\sigma +81 \\
b_8 &= \sigma^8+\sigma^7-\frac{23 \,\sigma^6}{2}-23 \,\sigma^5+89 \,\sigma^4+\frac{733 \,\sigma^3}{2}-\frac{1859 \,\sigma^2}{2}-\frac{687 \,\sigma}{2}+852 \\
b_9 &= \sigma^9+\sigma^8-14 \,\sigma^7-33 \,\sigma^6+\frac{119 \,\sigma^5}{2}+\frac{2747 \,\sigma^4}{2}-4383 \,\sigma^3+\frac{4455 \,\sigma^2}{2}+\frac{8675 \,\sigma}{2}-3568 \\
b_{10} &= \sigma^{10}+\sigma^9-\frac{33 \,\sigma^8}{2}-43 \,\sigma^7+\frac{145 \,\sigma^6}{4}+1389 \,\sigma^5+\frac{\numprint{12097} \,\sigma^4}{4}-\frac{\numprint{146053} \,\sigma^3}{4}+\numprint{51923} \,\sigma^2 \\
&+\frac{\numprint{140669} \,\sigma}{4}-\numprint{54967} \\
b_{11} &= \sigma^{11}+\sigma^{10}-19 \,\sigma^9-53 \,\sigma^8+\frac{77 \,\sigma^7}{4}+\frac{14429 \,\sigma^6}{12}+\frac{72193 \,\sigma^5}{4}-\frac{\numprint{543935} \,\sigma^4}{4}+\frac{\numprint{3342607} \,\sigma^3}{12} \\&-\frac{789173 \,\sigma^2}{12}-\frac{\numprint{3559189} \,\sigma}{12}+\frac{\numprint{802395}}{4} \\
b_{12} &= \sigma^{12}+\sigma^{11}-\frac{43 \,\sigma^{10}}{2}-63 \,\sigma^9+\frac{17 \,\sigma^8}{2}+\frac{12715 \,\sigma^7}{12}+\frac{\numprint{111581} \,\sigma^6}{6}+\frac{7897 \,\sigma^5}{12}-\frac{\numprint{6133171} \,\sigma^4}{6} \\
&+\frac{\numprint{49710709} \,\sigma^3}{12}-\frac{\numprint{11934461} \,\sigma^2}{3}-\frac{\numprint{16576855} \,\sigma}{4}+\numprint{4981765}
\end{align*}
}

\section{Perimeter Polynomials $D_v(q)$ for Site Percolation}
\label{sec:site-polynomials}

The perimeter polynomials $D_2, \ldots, D_7$ for site percolation appeared in~\cite{gaunt:sykes:ruskin:76}. We computed $D_8, \ldots, D_{12}$ and $D_{13}(1)$ from the enumerations in~\cite{luther:mertens:11a} and
the analytical results in~\cite{luther:mertens:17}, including the identity~\eqref{eq:identity-site}.
 
{\small
\begin{align*}
  D_2(q) &= q^{4d-2} {d \choose 1} \\
  D_{3}(q) &= q^{6d-4}\left[{d\choose 1}+4q^{-1} {d \choose 2}
              \right]\\
  D_{4}(q) &= q^{8d-6}\left[{d\choose 1}+(9q^{-2}+8q^{-1}) {d \choose
              2}+ (8q^{-3}+24q^{-2}) {d \choose 3} \right]\\
D_{5}(q) &= q^{10d-8}\left[{d\choose
            1}+(q^{-4}+20q^{-3}+28q^{-2}+12q^{-1}) {d \choose 2}+
            (12q^{-5}+96q^{-4}+168q^{-3}+72q^{-2}) {d \choose 3}\right.\\
          &\left.+ (16q^{-6}+192q^{-4}+192q^{-3}) {d \choose 4} \right]\\
D_{6}(q) &= q^{12d-10}\left[{d\choose
            1}+(4q^{-5}+54q^{-4}+80q^{-3}+60q^{-2}+16q^{-1}) {d
            \choose 2}+
            (6q^{-8}+280q^{-6}+720q^{-5}\right.\\
            &+966q^{-4}+720q^{-3}+144q^{-2})
            {d \choose 3}+
            (32q^{-9}+288q^{-7}+\numprint{1504}q^{-6}+\numprint{2784}q^{-5}+\numprint{3264}q^{-4}+768q^{-3})
            {d \choose 4}\\
           &+
            (32q^{-10}+640q^{-7}+480q^{-6}+\numprint{3840}q^{-5}+\left.\numprint{1920}q^{-4})
            {d \choose 5} \right]
\end{align*}
\begin{align*}
D_{7}(q) &= q^{14d-12}\left[{d\choose
            1}+(22q^{-6}+136q^{-5}+252q^{-4}+228q^{-3}+100q^{-2}+20q^{-1})
            {d \choose 2}\right.\\
           &+
             (q^{-12}+72q^{-9}+662q^{-8}+\numprint{2496}q^{-7}+\numprint{4924}q^{-6}+\numprint{6024}q^{-5}+\numprint{4926}q^{-4}+\numprint{1880}q^{-3}+240q^{-2})
             {d \choose 3}\\
           &+
             (24q^{-13}+672q^{-10}+\numprint{2288}q^{-9}+\numprint{10320}q^{-8}+\numprint{25440}q^{-7}+\numprint{36256}q^{-6}+\numprint{36624}q^{-5}+\numprint{15744}q^{-4}+\numprint{1920}q^{-3})
             {d \choose 4}\\
           &+
             (80q^{-14}+960q^{-11}+\numprint{1760}q^{-10}+\numprint{7680}q^{-9}+\numprint{23680}q^{-8}+\numprint{57920}q^{-7}+\numprint{89760}q^{-6}+\numprint{63360}q^{-5}+\numprint{9600}q^{-4})
             {d \choose 5}\\
&+
  (64q^{-15}+\numprint{1920}q^{-11}+\numprint{3840}q^{-9}+\numprint{19200}q^{-8}+\numprint{28800}q^{-7}+\left.\numprint{76800}q^{-6}+\numprint{23040}q^{-5})
  {d \choose 6} \right]
\end{align*}
\begin{align*}
D_{8}(q) &= q^{16d-14}\left[{d\choose
            1}+(4q^{-8}+80q^{-7}+388q^{-6}+777q^{-5}+818q^{-4}+480q^{-3}+152q^{-2}+24q^{-1})
            {d \choose 2}\right.\\
&+
            (12q^{-13}+6q^{-12}+288q^{-11}+\numprint{2089}q^{-10}+\numprint{8340}q^{-9}+\numprint{20304}q^{-8}+\numprint{33072}q^{-7}+\numprint{38148}q^{-6}+\numprint{32304}q^{-5}\\
&+\numprint{16002}q^{-4}+\numprint{3816}q^{-3}+360q^{-2})
            {d \choose 3}\\
&+
            (8q^{-18}+552q^{-14}+192q^{-13}+\numprint{5216}q^{-12}+\numprint{22096}q^{-11}+\numprint{78024}q^{-10}+\numprint{190016}q^{-9}+\numprint{331584}q^{-8}\\
&+\numprint{420000}q^{-7}+\numprint{385568}q^{-6}+\numprint{203712}q^{-5}+\numprint{47616}q^{-4}+\numprint{3840}q^{-3})
            {d \choose 4}\\
&+
            (80q^{-19}+\numprint{2720}q^{-15}+\numprint{1920}q^{-14}+\numprint{8640}q^{-13}+\numprint{43040}q^{-12}+\numprint{128880}q^{-11}+\numprint{372800}q^{-10}+\numprint{854720}q^{-9}\\
&+\numprint{1444960}q^{-8}+\numprint{1712160}q^{-7}+\numprint{1220160}q^{-6}+\numprint{341760}q^{-5}+\numprint{28800}q^{-4})
            {d \choose 5}\\
&+
            (192q^{-20}+\numprint{2880}q^{-16}+\numprint{6144}q^{-15}+\numprint{30720}q^{-13}+\numprint{90240}q^{-12}+\numprint{180480}q^{-11}+\numprint{547200}q^{-10}\\
&+\numprint{1324800}q^{-9}+\numprint{2304000}q^{-8}+\numprint{2833920}q^{-7}+\numprint{1290240}q^{-6}+\numprint{138240}q^{-5})
            {d \choose 6}
+ (128q^{-21}+\numprint{5376}q^{-16}\\
&+\numprint{13440}q^{-13}+\numprint{89600}q^{-12}+\numprint{322560}q^{-10}+\numprint{618240}q^{-9}+\left.\numprint{1209600}q^{-8}+\numprint{1612800}q^{-7}+\numprint{322560}q^{-6})
            {d \choose 7} \right]
\end{align*}
\begin{align*}
D_{9}(q) &= q^{18d-16}\left[{d\choose
            1}+(28q^{-9}+291q^{-8}+\numprint{1152}q^{-7}+\numprint{2444}q^{-6}+\numprint{2804}q^{-5}+\numprint{2089}q^{-4}+856q^{-3}+216q^{-2}+28q^{-1})
            {d \choose 2}\right.\\
&+
            (48q^{-15}+90q^{-14}+\numprint{1284}q^{-13}+\numprint{7415}q^{-12}+\numprint{30600}q^{-11}+\numprint{79512}q^{-10}+\numprint{154852}q^{-9}+\numprint{225636}q^{-8}+\numprint{247020}q^{-7}\\
&+\numprint{210372}q^{-6}+\numprint{120048}q^{-5}+\numprint{39018}q^{-4}+\numprint{6744}q^{-3}+504q^{-2})
            {d \choose 3}
            +(q^{-24}+192q^{-19}+56q^{-18}\\
&+\numprint{4128}q^{-16}+\numprint{9752}q^{-15}+\numprint{45720}q^{-14}+\numprint{205960}q^{-13}+\numprint{622008}q^{-12}+\numprint{1461712}q^{-11}+\numprint{2701512}q^{-10}\\
&+\numprint{4016208}q^{-9}+\numprint{4580136}q^{-8}+\numprint{3980496}q^{-7}+\numprint{2291360}q^{-6}+\numprint{724368}q^{-5}+\numprint{111744}q^{-4}+\numprint{6720}q^{-3})
            {d \choose 4}\\
&+
            (40q^{-25}+\numprint{2960}q^{-20}+\numprint{1120}q^{-19}+\numprint{26880}q^{-17}+\numprint{75120}q^{-16}+\numprint{172880}q^{-15}+\numprint{640800}q^{-14}+\numprint{2095840}q^{-13}\\
&+\numprint{5320080}q^{-12}+\numprint{11453440}q^{-11}+\numprint{20104160}q^{-10}+\numprint{27853120}q^{-9}+\numprint{28840320}q^{-8}+\numprint{19990880}q^{-7}+\numprint{7239360}q^{-6}\\
&+\numprint{1171200}q^{-5}+\numprint{67200}q^{-4})
            {d \choose 5}+
            (240q^{-26}+\numprint{9600}q^{-21}+\numprint{8064}q^{-20}+\numprint{34560}q^{-18}+\numprint{159360}q^{-17}\\
&+\numprint{330240}q^{-16}+\numprint{624384}q^{-15}+\numprint{2376960}q^{-14}+\numprint{5865600}q^{-13}+\numprint{14515200}q^{-12}+\numprint{31011840}q^{-11}+\numprint{56841600}q^{-10}\\
&+\numprint{78712320}q^{-9}+\numprint{76510080}q^{-8}+\numprint{38718720}q^{-7}+\numprint{7718400}q^{-6}+\numprint{483840}q^{-5})
            {d \choose 6}\\
&+
            (448q^{-27}+\numprint{8064}q^{-22}+\numprint{19712}q^{-21}+\numprint{107520}q^{-18}+\numprint{309120}q^{-17}+\numprint{454272}q^{-16}+\numprint{555520}q^{-15}+\numprint{3373440}q^{-14}\\
&+\numprint{6800640}q^{-13}+\numprint{13879040}q^{-12}+\numprint{35750400}q^{-11}+\numprint{66312960}q^{-10}+\numprint{97386240}q^{-9}+\numprint{88623360}q^{-8}\\
&+\numprint{28062720}q^{-7}+\numprint{2257920}q^{-6})
            {d \choose 7}\\
&+
            (256q^{-28}+\numprint{14336}q^{-22}+\numprint{43008}q^{-18}+\numprint{301056}q^{-17}+\numprint{71680}q^{-16}
+\numprint{1505280}q^{-14}+\numprint{4014080}q^{-13}\\
&+\numprint{1505280}q^{-12}+\numprint{18063360}q^{-11}+\numprint{24084480}q^{-10}+\numprint{45158400}q^{-9}+\numprint{36126720}q^{-8}+\left.\numprint{5160960}q^{-7})
            {d \choose 8} \right]
\end{align*}
\begin{align*}
D_{10}(q) &= q^{20d-18}\left[{d\choose
             1}+(4q^{-11}+154q^{-10}+986q^{-9}+\numprint{3676}q^{-8}+\numprint{7612}q^{-7}+\numprint{9750}q^{-6}+\numprint{8192}q^{-5}+\numprint{4330}q^{-4}+\numprint{1416}q^{-3}\right.\\
&+292q^{-2}+32q^{-1})
             {d \choose 2} +
             (212q^{-17}+753q^{-16}+\numprint{5224}q^{-15}+\numprint{32084}q^{-14}+\numprint{115836}q^{-13}+\numprint{323100}q^{-12}\\
&+\numprint{690016}q^{-11}
+\numprint{1163448}q^{-10}+\numprint{1547364}q^{-9}+\numprint{1638078}q^{-8}+\numprint{1383084}q^{-7}+\numprint{854808}q^{-6}+\numprint{339288}q^{-5}\\
&+\numprint{80556}q^{-4}+\numprint{10880}q^{-3}+672q^{-2})
             {d \choose 3}+
             (24q^{-25}+8q^{-24}+\numprint{1536}q^{-21}+\numprint{2112}q^{-20}+\numprint{4560}q^{-19}\\
&+\numprint{33616}q^{-18}+\numprint{134312}q^{-17}+\numprint{515824}q^{-16}+\numprint{1865120}q^{-15}+\numprint{5178896}q^{-14}+\numprint{11710048}q^{-13}+\numprint{22170392}q^{-12}\\
&+\numprint{35115760}q^{-11}+\numprint{46183200}q^{-10}+\numprint{48898512}q^{-9}+\numprint{40778400}q^{-8}+\numprint{24415968}q^{-7}+\numprint{9171024}q^{-6}\\
&+\numprint{1970304}q^{-5}+\numprint{224448}q^{-4}+\numprint{10752}q^{-3})
             {d \choose 4}\\
&+
             (10q^{-32}+\numprint{1560}q^{-26}+400q^{-25}+\numprint{31920}q^{-22}+\numprint{57200}q^{-21}+\numprint{77640}q^{-20}+\numprint{326880}q^{-19}+\numprint{1278600}q^{-18}\\
&+\numprint{3586000}q^{-17}+\numprint{10522734}q^{-16}+\numprint{30712960}q^{-15}+\numprint{73534720}q^{-14}+\numprint{150801360}q^{-13}+\numprint{266010272}q^{-12}\\
&+\numprint{393996960}q^{-11}
+\numprint{478780720}q^{-10}+\numprint{451891520}q^{-9}+\numprint{305368240}q^{-8}\\
&+\numprint{125711680}q^{-7}+\numprint{28044480}q^{-6}+\numprint{3102720}q^{-5}+\numprint{134400}q^{-4})
             {d \choose 5}\\
&+
             (160q^{-33}+\numprint{12960}q^{-27}+\numprint{5760}q^{-26}+\numprint{124800}q^{-23}+\numprint{340800}q^{-22}+\numprint{413760}q^{-21}\\
&+\numprint{822656}q^{-20}+\numprint{3081120}q^{-19}+\numprint{9619840}q^{-18}+\numprint{21003360}q^{-17}+\numprint{54009600}q^{-16}+\numprint{144823424}q^{-15}\\
&+\numprint{324407040}q^{-14}+\numprint{661888320}q^{-13}+\numprint{1171213440}q^{-12}+\numprint{1748548800}q^{-11}+\numprint{2065014720}q^{-10}\\
&+\numprint{1772025600}q^{-9}+\numprint{925670400}q^{-8}+\numprint{244684800}q^{-7}+\numprint{29529600}q^{-6}+\numprint{1290240}q^{-5})
             {d \choose 6}\\
&+
             (672q^{-34}+\numprint{30912}q^{-28}+\numprint{30464}q^{-27}+\numprint{120960}q^{-24}+\numprint{634368}q^{-23}+\numprint{1128960}q^{-22}\\
&+\numprint{924672}q^{-21}+\numprint{2587200}q^{-20}+\numprint{10257408}q^{-19}+\numprint{24990560}q^{-18}+\numprint{44491776}q^{-17}+\numprint{108450048}q^{-16}\\
&+\numprint{291737600}q^{-15}+\numprint{602112000}q^{-14}+\numprint{1250027520}q^{-13}+\numprint{2340732800}q^{-12}+\numprint{3597807360}q^{-11}\\
&+\numprint{4256985600}q^{-10}+\numprint{3287208960}q^{-9}+\numprint{1225244160}q^{-8}+\numprint{184504320}q^{-7}+\numprint{9031680}q^{-6})
             {d \choose 7}\\
&+
             (\numprint{1024}q^{-35}+\numprint{21504}q^{-29}+\numprint{59392}q^{-28}+\numprint{344064}q^{-24}+\numprint{1132544}q^{-23}+\numprint{1318912}q^{-22}+\numprint{537600}q^{-21}\\
&+\numprint{2594816}q^{-20}+\numprint{13683712}q^{-19}+\numprint{29159424}q^{-18}+\numprint{36528128}q^{-17}\\
&+\numprint{82216960}q^{-16}+\numprint{285788160}q^{-15}+\numprint{508067840}q^{-14}+\numprint{994416640}q^{-13}+\numprint{2243082240}q^{-12}\\
&+\numprint{3350323200}q^{-11}+\numprint{4204032000}q^{-10}+\numprint{2807562240}q^{-9}+\numprint{655441920}q^{-8}+\numprint{41287680}q^{-7})
             {d \choose 8}\\
&+
             (512q^{-36}+\numprint{36864}q^{-29}+\numprint{129024}q^{-24}+\numprint{1032192}q^{-23}+\numprint{258048}q^{-21}+\numprint{161280}q^{-20}\\
&+\numprint{6193152}q^{-19}+\numprint{14450688}q^{-18}+\numprint{10321920}q^{-17}+\numprint{7741440}q^{-16}+\numprint{118702080}q^{-15}+\numprint{180633600}q^{-14}\\
&+\numprint{216760320}q^{-13}+\numprint{894136320}q^{-12}+\numprint{1083801600}q^{-11}+\numprint{1625702400}q^{-10}+\left.\numprint{867041280}q^{-9}+\numprint{92897280}q^{-8})
             {d \choose 9} \right]
\end{align*}
\begin{align*}
D_{11}(q) &= q^{22d-20}\left[{d\choose
             1}+52q^{-12}+644q^{-11}+\numprint{3530}q^{-10}+\numprint{11772}q^{-9}+\numprint{24472}q^{-8}+\numprint{33336}q^{-7}+\numprint{31202}q^{-6}+\numprint{19532}q^{-5}\right.\\
&+\numprint{8130}q^{-4}+\numprint{2180}q^{-3}+380q^{-2}+36q^{-1})
             {d \choose 2}+
             (78q^{-20}+788q^{-19}+\numprint{4476}q^{-18}+\numprint{27564}q^{-17}+\numprint{134622}q^{-16}\\
&+\numprint{485724}q^{-15}+\numprint{1347336}q^{-14}+\numprint{3077772}q^{-13}
+\numprint{5692422}q^{-12}+\numprint{8616240}q^{-11}+\numprint{10764504}q^{-10}+\numprint{11033940}q^{-9}\\
&+\numprint{9242862}q^{-8}+\numprint{5983548}q^{-7}+\numprint{2711448}q^{-6}+\numprint{799716}q^{-5}+\numprint{148404}q^{-4}+\numprint{16440}q^{-3}+864q^{-2})
             {d \choose 3}\\
&+
             (192q^{-27}+276q^{-26}+192q^{-25}+\numprint{1316}q^{-24}+\numprint{11104}q^{-23}+\numprint{36792}q^{-22}+\numprint{77800}q^{-21}+\numprint{397656}q^{-20}\\
&+\numprint{1661216}q^{-19}+\numprint{5628072}q^{-18}+\numprint{17374664}q^{-17}+\numprint{44682900}q^{-16}+\numprint{98212824}q^{-15}+\numprint{185555192}q^{-14}\\
&+\numprint{304966528}q^{-13}+\numprint{430769888}q^{-12}+\numprint{518306896}q^{-11}+\numprint{517753104}q^{-10}+\numprint{417546016}q^{-9}+\numprint{254764512}q^{-8}\\
&+\numprint{106452816}q^{-7}+\numprint{28179984}q^{-6}+\numprint{4506144}q^{-5}+\numprint{405504}q^{-4}+\numprint{16128}q^{-3})
             {d \choose 4}\\
&+
             (q^{-40}+400q^{-33}+90q^{-32}+\numprint{18000}q^{-28}+\numprint{27560}q^{-27}+\numprint{15000}q^{-26}+\numprint{49840}q^{-25}+\numprint{347280}q^{-24}+\numprint{1166080}q^{-23}\\
&+\numprint{2076920}q^{-22}+\numprint{6344960}q^{-21}+\numprint{21645284}q^{-20}+\numprint{64089560}q^{-19}+\numprint{173957760}q^{-18}+\numprint{442604160}q^{-17}\\
&+\numprint{1003577610}q^{-16}+\numprint{1992112688}q^{-15}+\numprint{3481282200}q^{-14}+\numprint{5323227680}q^{-13}+\numprint{7006036400}q^{-12}\\
&+\numprint{7720824320}q^{-11}+\numprint{6805302112}q^{-10}+\numprint{4493044960}q^{-9}+\numprint{1986467760}q^{-8}+\numprint{536602080}q^{-7}\\
&+\numprint{83668800}q^{-6}+\numprint{6963840}q^{-5}+\numprint{241920}q^{-4})
             {d \choose 5}\\
&+
             (60q^{-41}+\numprint{9120}q^{-34}+\numprint{2520}q^{-33}+\numprint{181440}q^{-29}+\numprint{382800}q^{-28}+\numprint{267840}q^{-27}+\numprint{308640}q^{-26}+\numprint{1925184}q^{-25}\\
&+\numprint{7626720}q^{-24}+\numprint{14772960}q^{-23}+\numprint{27560480}q^{-22}+\numprint{76929360}q^{-21}+\numprint{222560928}q^{-20}+\numprint{553623840}q^{-19}\\
&+\numprint{1314060960}q^{-18}+\numprint{3091960920}q^{-17}+\numprint{6778878720}q^{-16}+\numprint{13156725984}q^{-15}+\numprint{22862414400}q^{-14}\\
&+\numprint{34825843200}q^{-13}+\numprint{45193896960}q^{-12}+\numprint{47500560000}q^{-11}+\numprint{37532963520}q^{-10}+\numprint{19772224320}q^{-9}\\
&+\numprint{6102172800}q^{-8}+\numprint{1026650880}q^{-7}+\numprint{86976000}q^{-6}+\numprint{2903040}q^{-5})
             {d \choose 6}\\
&+
             (560q^{-42}+\numprint{49728}q^{-35}+\numprint{26208}q^{-34}+\numprint{497280}q^{-30}+\numprint{1567104}q^{-29}+\numprint{1831872}q^{-28}+\numprint{822528}q^{-27}\\
&+\numprint{3203200}q^{-26}+\numprint{15201984}q^{-25}+\numprint{43525888}q^{-24}+\numprint{65583168}q^{-23}+\numprint{119482944}q^{-22}+\numprint{324410688}q^{-21}\\
&+\numprint{849534336}q^{-20}+\numprint{1896722688}q^{-19}+\numprint{4014401440}q^{-18}+\numprint{9200648128}q^{-17}+\numprint{19989185664}q^{-16}\\
&+\numprint{38556748160}q^{-15}+\numprint{68671733760}q^{-14}+\numprint{107864158080}q^{-13}+\numprint{141236659200}q^{-12}+\numprint{144177747840}q^{-11}\\
&+\numprint{101179249920}q^{-10}+\numprint{40593853440}q^{-9}+\numprint{8235763200}q^{-8}+\numprint{778982400}q^{-7}+\numprint{27095040}q^{-6})
             {d \choose 7}\\
&+
             (\numprint{1792}q^{-43}+\numprint{93184}q^{-36}+\numprint{105984}q^{-35}+\numprint{387072}q^{-31}+\numprint{2283008}q^{-30}+\numprint{4623360}q^{-29}+\numprint{2460672}q^{-28}\\
&+\numprint{1576960}q^{-27}+\numprint{9397248}q^{-26}+\numprint{46448640}q^{-25}+\numprint{98180096}q^{-24}+\numprint{129253376}q^{-23}+\numprint{194460672}q^{-22}\\
&+\numprint{608183296}q^{-21}+\numprint{1509408768}q^{-20}+\numprint{3134777856}q^{-19}+\numprint{5605476352}q^{-18}+\numprint{13360951296}q^{-17}\\
&+\numprint{29633372160}q^{-16}+\numprint{55336386560}q^{-15}+\numprint{102507991040}q^{-14}+\numprint{171497840640}q^{-13}+\numprint{225982740480}q^{-12}\\
&+\numprint{228707512320}q^{-11}+\numprint{138796277760}q^{-10}+\numprint{39592304640}q^{-9}+\numprint{4696473600}q^{-8}+\numprint{185794560}q^{-7})
             {d \choose 8}\\
&+
             (\numprint{2304}q^{-44}+\numprint{55296}q^{-37}+\numprint{170496}q^{-36}+\numprint{1032192}q^{-31}+\numprint{3833856}q^{-30}+\numprint{4976640}q^{-29}+\numprint{1935360}q^{-27}\\
&+\numprint{8128512}q^{-26}+\numprint{58963968}q^{-25}+\numprint{110380032}q^{-24}+\numprint{97542144}q^{-23}+\numprint{122572800}q^{-22}+\numprint{469905408}q^{-21}\\
&+\numprint{1359880704}q^{-20}+\numprint{2540353536}q^{-19}+\numprint{3491776512}q^{-18}+\numprint{8805888000}q^{-17}+\numprint{23029493760}q^{-16}\\
&+\numprint{37619527680}q^{-15}+\numprint{73086935040}q^{-14}+\numprint{136826081280}q^{-13}+\numprint{175935836160}q^{-12}+\numprint{182798622720}q^{-11}\\
&+\numprint{91457372160}q^{-10}+\numprint{16442818560}q^{-9}+\numprint{836075520}q^{-8})
             {d \choose 9}\\
&+
             (\numprint{1024}q^{-45}+\numprint{92160}q^{-37}+\numprint{368640}q^{-31}+\numprint{3317760}q^{-30}+\numprint{860160}q^{-27}+\numprint{24514560}q^{-25}+\numprint{61931520}q^{-24}\\
&+\numprint{46448640}q^{-22}+\numprint{63866880}q^{-21}+\numprint{557383680}q^{-20}+\numprint{766402560}q^{-19}+\numprint{954777600}q^{-18}+\numprint{1393459200}q^{-17}\\
&+\numprint{8360755200}q^{-16}+\numprint{9165864960}q^{-15}+\numprint{19508428800}q^{-14}+\numprint{43893964800}q^{-13}+\numprint{52022476800}q^{-12}\\
&+\numprint{58525286400}q^{-11}+\left.\numprint{22295347200}q^{-10}+\numprint{1857945600}q^{-9})
             {d \choose 10} \right]
\end{align*}
\begin{align*}
D_{12}(q) &= q^{24d-22}\left[{d\choose
            1}+(9q^{-14}+325q^{-13}+\numprint{2644}q^{-12}\right.+\numprint{12502}q^{-11}+\numprint{38694}q^{-10}+\numprint{79730}q^{-9}+\numprint{114342}q^{-8}+\numprint{115502}q^{-7}\\
&+\numprint{83183}q^{-6}+\numprint{41136}q^{-5}+\numprint{14064}q^{-4}+\numprint{3208}q^{-3}+480q^{-2}+40q^{-1})
            {d \choose 2}
+
  (9q^{-24}+432q^{-22}+\numprint{4668}q^{-21}+\numprint{25440}q^{-20}\\
&+\numprint{138904}q^{-19}+\numprint{620231}q^{-18}+\numprint{2097936}q^{-17}+\numprint{5926745}q^{-16}+\numprint{13865948}q^{-15}+\numprint{27402318}q^{-14}+\numprint{45459498}q^{-13}\\
&+\numprint{63704774}q^{-12}+\numprint{75606576}q^{-11}+\numprint{75472992}q^{-10}+\numprint{62723968}q^{-9}+\numprint{41798098}q^{-8}+\numprint{20741808}q^{-7}+\numprint{7158960}q^{-6}\\
&+\numprint{1661832}q^{-5}+\numprint{252468}q^{-4}+\numprint{23640}q^{-3}+\numprint{1080}q^{-2})
  {d \choose 3}+
  (160q^{-30}+\numprint{1424}q^{-29}+\numprint{3696}q^{-28}+\numprint{7056}q^{-27}+\numprint{25980}q^{-26}\\
&+\numprint{117640}q^{-25}+\numprint{429416}q^{-24}+\numprint{1373536}q^{-23}+\numprint{5256924}q^{-22}+\numprint{19307064}q^{-21}+\numprint{60129640}q^{-20}+\numprint{165679408}q^{-19}\\
&+\numprint{401149300}q^{-18}+\numprint{851344600}q^{-17}+\numprint{1602760536}q^{-16}+\numprint{2672842064}q^{-15}+\numprint{3949417120}q^{-14}+\numprint{5116018552}q^{-13}\\
&+\numprint{5745649000}q^{-12}+\numprint{5469111376}q^{-11}+\numprint{4290529800}q^{-10}+\numprint{2640845536}q^{-9}+\numprint{1183404144}q^{-8}+\numprint{361370880}q^{-7}\\
&+\numprint{72248256}q^{-6}+\numprint{9134016}q^{-5}+\numprint{678144}q^{-4}+\numprint{23040}q^{-3})
  {d \choose 4}
  (40q^{-41}+10q^{-40}+\numprint{4800}q^{-35}+\numprint{6840}q^{-34}+\numprint{3680}q^{-33}\\
&+450q^{-32}+\numprint{28640}q^{-31}+\numprint{183400}q^{-30}+\numprint{587760}q^{-29}+\numprint{651000}q^{-28}+\numprint{1454640}q^{-27}+\numprint{5843560}q^{-26}+\numprint{19181720}q^{-25}\\
&+\numprint{47530750}q^{-24}+\numprint{127896240}q^{-23}+\numprint{388723120}q^{-22}+\numprint{1084985280}q^{-21}+\numprint{2757696960}q^{-20}+\numprint{6446061120}q^{-19}\\
&+\numprint{13767375360}q^{-18}+\numprint{26497791360}q^{-17}+\numprint{45857136550}q^{-16}+\numprint{71071270720}q^{-15}+\numprint{97786149160}q^{-14}+\numprint{117432352720}q^{-13}\\
&+\numprint{119939898280}q^{-12}+\numprint{100154952320}q^{-11}+\numprint{64752425920}q^{-10}+\numprint{29873551040}q^{-9}+\numprint{9140193360}q^{-8}+\numprint{1772429760}q^{-7}\\
&+\numprint{209425920}q^{-6}+\numprint{13916160}q^{-5}+\numprint{403200}q^{-4}) {d \choose 5}+(12q^{-50}+\numprint{3540}q^{-42}+720q^{-41}+\numprint{135840}q^{-36}+\numprint{245280}q^{-35}\\
&+\numprint{129900}q^{-34}+\numprint{18000}q^{-33}+\numprint{399360}q^{-32}+\numprint{2483520}q^{-31}+\numprint{9367248}q^{-30}+\numprint{12518880}q^{-29}+\numprint{16035360}q^{-28}+\numprint{47697600}q^{-27}\\
&+\numprint{159410700}q^{-26}+\numprint{393989328}q^{-25}+\numprint{827877600}q^{-24}+\numprint{2007451200}q^{-23}+\numprint{5289092160}q^{-22}+\numprint{12968879520}q^{-21}\\
&+\numprint{29810834496}q^{-20}+\numprint{65139117600}q^{-19}+\numprint{134353485960}q^{-18}+\numprint{253569485520}q^{-17}+\numprint{432870902880}q^{-16}\\
&+\numprint{664628651328}q^{-15}+\numprint{901917951120}q^{-14}+\numprint{1052925072960}q^{-13}+\numprint{1011974394240}q^{-12}+\numprint{752224121280}q^{-11}\\
&+\numprint{395862472320}q^{-10}+\numprint{134197906560}q^{-9}+\numprint{27675947520}q^{-8}+\numprint{3329579520}q^{-7}+\numprint{215424000}q^{-6}+\numprint{5806080}q^{-5})
  {d \choose 6}\\
&+
  (280q^{-51}+\numprint{43680}q^{-43}+\numprint{14000}q^{-42}+\numprint{853440}q^{-37}+\numprint{2101344}q^{-36}+\numprint{1674792}q^{-35}+\numprint{307440}q^{-34}+\numprint{1379840}q^{-33}\\
&+\numprint{9625280}q^{-32}+\numprint{40944960}q^{-31}+\numprint{83523776}q^{-30}+\numprint{88159680}q^{-29}+\numprint{138491584}q^{-28}+\numprint{435093680}q^{-27}+\numprint{1277565520}q^{-26}\\
&+\numprint{2659260352}q^{-25}+\numprint{4927060096}q^{-24}+\numprint{11261474784}q^{-23}+\numprint{27295013760}q^{-22}+\numprint{62106803584}q^{-21}+\numprint{133111109600}q^{-20}\\
&+\numprint{278163144112}q^{-19}+\numprint{570015161744}q^{-18}+\numprint{1080480157120}q^{-17}+\numprint{1863030926400}q^{-16}+\numprint{2912637247520}q^{-15}\\
&+\numprint{4005783848640}q^{-14}+\numprint{4636680236160}q^{-13}+\numprint{4246497279360}q^{-12}+\numprint{2790938075520}q^{-11}+\numprint{1162070716800}q^{-10}\\
&+\numprint{280070945280}q^{-9}+\numprint{37219553280}q^{-8}+\numprint{2522419200}q^{-7}+\numprint{67737600}q^{-6})
  {d \choose 7}+
  (\numprint{1792}q^{-52}+\numprint{173824}q^{-44}+\numprint{107520}q^{-43}\\
&+\numprint{1784832}q^{-38}+\numprint{6408192}q^{-37}+\numprint{8567552}q^{-36}+\numprint{2626560}q^{-35}+\numprint{1290240}q^{-34}+\numprint{13275136}q^{-33}+\numprint{63114240}q^{-32}\\
&+\numprint{206603264}q^{-31}+\numprint{288296960}q^{-30}+\numprint{272971776}q^{-29}+\numprint{464708864}q^{-28}+\numprint{1649027072}q^{-27}+\numprint{4405051392}q^{-26}\\
&+\numprint{7979697152}q^{-25}+\numprint{13162963968}q^{-24}+\numprint{28913826816}q^{-23}+\numprint{68086100992}q^{-22}+\numprint{148218871808}q^{-21}+\numprint{298732675584}q^{-20}\\
&+\numprint{594246754304}q^{-19}+\numprint{1242503485440}q^{-18}+\numprint{2377444643840}q^{-17}+\numprint{4150496215040}q^{-16}+\numprint{6719884738560}q^{-15}\\
&+\numprint{9523134351360}q^{-14}+\numprint{11000978257920}q^{-13}+\numprint{9680191180800}q^{-12}+\numprint{5522523955200}q^{-11}+\numprint{1746670141440}q^{-10}\\
&+\numprint{282402570240}q^{-9}+\numprint{21676032000}q^{-8}+\numprint{619315200}q^{-7})
  {d \choose 8}\\
&+
  (\numprint{4608}q^{-53}+\numprint{267264}q^{-45}+\numprint{345600}q^{-44}+\numprint{1161216}q^{-39}+\numprint{7621632}q^{-38}+\numprint{17339904}q^{-37}+\numprint{10391040}q^{-36}\\
&+\numprint{5677056}q^{-34}+\numprint{34191360}q^{-33}+\numprint{176394240}q^{-32}+\numprint{431253504}q^{-31}+\numprint{500023296}q^{-30}+\numprint{320606208}q^{-29}\\
&+\numprint{687278592}q^{-28}+\numprint{2871687168}q^{-27}+\numprint{7684411392}q^{-26}+\numprint{12036003840}q^{-25}+\numprint{17087035392}q^{-24}+\numprint{35552047104}q^{-23}\\
&+\numprint{89409696768}q^{-22}+\numprint{190096155648}q^{-21}+\numprint{359983378944}q^{-20}+\numprint{656605587456}q^{-19}+\numprint{1461124546560}q^{-18}\\
&+\numprint{2856624906240}q^{-17}+\numprint{4913585510400}q^{-16}+\numprint{8413271838720}q^{-15}+\numprint{12412241694720}q^{-14}+\numprint{14290489221120}q^{-13}\\
&+\numprint{12206985154560}q^{-12}+\numprint{5862793789440}q^{-11}+\numprint{1322578575360}q^{-10}+\numprint{127455068160}q^{-9}+\numprint{4180377600}q^{-8})
  {d \choose 9}\\
&+
  (\numprint{5120}q^{-54}+\numprint{138240}q^{-46}+\numprint{471040}q^{-45}+\numprint{2949120}q^{-39}+\numprint{12211200}q^{-38}+\numprint{17510400}q^{-37}+\numprint{6451200}q^{-34}\\
&+\numprint{27402240}q^{-33}+\numprint{209756160}q^{-32}+\numprint{460800000}q^{-31}+\numprint{357565440}q^{-30}+\numprint{96768000}q^{-29}+\numprint{472227840}q^{-28}\\
&+\numprint{1977507840}q^{-27}+\numprint{7057612800}q^{-26}+\numprint{9483264000}q^{-25}+\numprint{9323274240}q^{-24}+\numprint{20261928960}q^{-23}+\numprint{57411164160}q^{-22}\\
&+\numprint{130223493120}q^{-21}+\numprint{223962439680}q^{-20}+\numprint{354261196800}q^{-19}+\numprint{854235648000}q^{-18}+\numprint{1861919539200}q^{-17}\\
&+\numprint{2852694835200}q^{-16}+\numprint{5438206771200}q^{-15}+\numprint{8347594752000}q^{-14}+\numprint{9538537881600}q^{-13}+\numprint{8018893209600}q^{-12}\\
&+\numprint{3089763532800}q^{-11}+\numprint{442191052800}q^{-10}+\numprint{18579456000}q^{-9})
  {d \choose 10}\\
&+
  (\numprint{2048}q^{-55}+\numprint{225280}q^{-46}+\numprint{1013760}q^{-39}+\numprint{10137600}q^{-38}+\numprint{2703360}q^{-34}+\numprint{81100800}q^{-32}+\numprint{248033280}q^{-31}\\
&+\numprint{2838528}q^{-30}+\numprint{189235200}q^{-28}+\numprint{141926400}q^{-27}+\numprint{2838528000}q^{-26}+\numprint{3406233600}q^{-25}+\numprint{567705600}q^{-24}\\
&+\numprint{5464166400}q^{-23}+\numprint{11330457600}q^{-22}+\numprint{40874803200}q^{-21}+\numprint{54159114240}q^{-20}+\numprint{80188416000}q^{-19}\\
&+\numprint{166053888000}q^{-18}+\numprint{562028544000}q^{-17}+\numprint{585872179200}q^{-16}+\numprint{1445946163200}q^{-15}+\numprint{2253223526400}q^{-14}\\
&+\numprint{2554675200000}q^{-13}+\numprint{2145927168000}q^{-12}+\numprint{613122048000}q^{-11}+\left.\numprint{40874803200}q^{-10}) {d \choose 11} \right]
\end{align*}
\begin{align*}
D_{13}(1) 
&= {d \choose 1} 
  + \numprint{1903888} {d \choose 2}
  + \numprint{3317057654} {d \choose 3}
  + \numprint{408413968572} {d \choose 4}
  + \numprint{12036577000605} {d \choose 5}
  + \numprint{138284024035957} {d \choose 6} \\
  &+ \numprint{779444454950084} {d \choose 7}
  + \numprint{2417940914461280} {d \choose 8}
  + \numprint{4336107249936384} {d \choose 9}
  + \numprint{4477975127425280} {d \choose 10} \\
  &+ \numprint{2471677136321536} {d \choose 11}
  + \numprint{564668382613504} {d \choose 12}
\end{align*} 
}

\section{The Polynomials $b_r$ for Site Percolation}
\label{sec:site-br}

Here we give $b_r$, the coefficients of $p^r$ in the series expansion for the expected cluster size $S$ as defined in~\eqref{eq:br-site}. The leading term $\sigma^r$ corresponds to the mean-field behavior.
{\small
\begin{align*}
b_0 &= 1 \\
b_1 &= \sigma+1 \\
b_2 &= \sigma^2 + \sigma \\
b_3 &= \sigma^3 - \frac{\sigma^2}{2} + \frac{3}{2} \\
b_4 &= \sigma^4 - 2 \sigma^3 + 2 \sigma^2 + 3 \sigma - 2 \\
b_5 &= \sigma^5 - \frac{7 \sigma^4}{2} + \frac{7 \sigma^3}{2} + \frac{15 \sigma^2}{2} - \frac{7 \sigma}{2} - 3 \\
b_6 &= \sigma^6 - 5 \sigma^5 + \frac{29 \,\sigma^4}{4} + 18 \,\sigma^3 - 40 \,\sigma^2 - 12 \,\sigma + \frac{131}{4} \\
b_7 &= \sigma^7-\frac{13 \,\sigma^6}{2}+\frac{53 \,\sigma^5}{4}+\frac{7 \,\sigma^4}{4}+\frac{104 \,\sigma^3}{3}-\frac{299 \,\sigma^2}{2}-\frac{575 \,\sigma}{12}+\frac{621}{4} \\
b_8 &= \sigma^8-8 \,\sigma^7+\frac{43 \,\sigma^6}{2}-\frac{85 \,\sigma^5}{4}+\frac{3391 \,\sigma^4}{12}-\frac{4715 \,\sigma^3}{3}+\frac{\numprint{12235} \,\sigma^2}{6}+\frac{\numprint{19223} \,\sigma}{12}-\frac{9373}{4} \\
b_9 &= \sigma^9-\frac{19 \,\sigma^8}{2}+32 \,\sigma^7-\frac{435 \,\sigma^6}{8}+\frac{1217 \,\sigma^5}{4}-\frac{\numprint{21257} \,\sigma^4}{24}-\frac{\numprint{22645} \,\sigma^3}{12}+\frac{\numprint{170963} \,\sigma^2}{24}+\frac{9305 \,\sigma}{6}-\frac{\numprint{49383}}{8} \\
b_{10} &= \sigma^{10}-11 \,\sigma^9+\frac{179 \,\sigma^8}{4}-101 \,\sigma^7+\frac{2275 \,\sigma^6}{6}+\frac{\numprint{24401} \,\sigma^5}{12}-\frac{\numprint{109642} \,\sigma^4}{3}+\frac{\numprint{849277} \,\sigma^3}{6}-\frac{\numprint{431164} \,\sigma^2}{3} \\
& -\frac{\numprint{1721599} \,\sigma}{12}+\frac{\numprint{719379}}{4} \\
b_{11} &= \sigma^{11}-\frac{25 \,\sigma^{10}}{2}+\frac{239 \,\sigma^9}{4}-\frac{329 \,\sigma^8}{2}+\frac{\numprint{12581} \,\sigma^7}{24}+\frac{26317 \,\sigma^6}{24}-\frac{\numprint{407843} \,\sigma^5}{24}-\frac{\numprint{407117} \,\sigma^4}{12}+\frac{\numprint{11823737} \,\sigma^3}{24} \\
& -\frac{\numprint{6505119} \,\sigma^2}{8}-\frac{\numprint{11429909} \,\sigma}{24}+\frac{\numprint{3384591}}{4} \\
b_{12} &= \sigma^{12}-14 \,\sigma^{11}+77 \,\sigma^{10}-\frac{993 \,\sigma^9}{4}+\frac{12181 \,\sigma^8}{16}-\frac{103 \,\sigma^7}{3}+\frac{\numprint{1015339} \,\sigma^6}{24}-\frac{\numprint{11977951} \,\sigma^5}{12}+\frac{\numprint{20826482} \,\sigma^4}{3} \\ 
&-\frac{\numprint{117862115} \,\sigma^3}{6}+\frac{\numprint{366648613} \,\sigma^2}{24}+\numprint{20642146} \,\sigma -\frac{\numprint{356197285}}{16}
\end{align*}
}

\ack{%
  We are indebted to Sebastian Luther for modifying the enumeration
  code to include the vertex count in bond animals, to Remco van der
  Hofstad for helpful pointers to the mathematical literature, and to
  Bob Ziff for stimulating discussions. SM thanks the Santa Fe
  Institute for their hospitality and financial support.  }

\section*{References}

\bibliographystyle{unsrt} 
\bibliography{animals,math,percolation,mertens}

\end{document}